\DeclareMathAlphabet{\mathpzc}{OT1}{pzc}{m}{it}
\definecolor{dgreyblue}{rgb}{0.26,0.3,0.46}             
\newcommand{\bee}{{\mathbf{e}}}
\newcommand{\cA}{\mathcal{A}}
\newcommand{\cB}{\mathcal{B}}
\newcommand{\cC}{\mathcal{C}}
\newcommand{\cE}{\mathcal{E}}
\newcommand{\cH}{\mathcal{H}}
\newcommand{\cT}{\mathcal{T}}
\newcommand{\cTa}{\cT\!\mathrm{ail}}
\newcommand{\cHe}{\cH\mathrm{ead}}
\newcommand{\R}{{\mathbb R}}  
\newcommand{\PP}{{\mathbb P}}  
\newcommand{\PPL}{{\mathbb P}_{\mathrm{left}}}  
\newcommand{\PPR}{{\mathbb P}_{\mathrm{right}}}  
\renewcommand{\text}[1]{\hbox{\rm \ #1\ \/}}
\newcommand{\be}[1]{\begin{equation}\label{#1}}
\newcommand{\ee}{\end{equation}}
\newcommand{\beqn}{\begin{eqnarray*}}
\newcommand{\eeqn}{\end{eqnarray*}}
\newcommand{\beq}{\begin{eqnarray}}
\newcommand{\eeq}{\end{eqnarray}}
\newcommand{\ben}{\begin{enumerate}}
\newcommand{\een}{\end{enumerate}}
\newcommand{\bi}{\begin{itemize}}
\newcommand{\ei}{\end{itemize}}
\newcommand{\eps}{\varepsilon}
\newcommand{\IE}{{\em i.e.}\xspace}
\newcommand{\tx}{^{\rm th}}
\newtheorem{theorem}{Theorem}
\newenvironment{proof-sketch}{{\noindent\bf Sketch of Proof.\ }}{\hfill{\Pisymbol{pzd}{113}}\vspace{0.1in}}
\newcommand{\NP}{\mathsf{NP}}
\newcommand{\LP}{\mathsf{LP}}
\renewcommand{\deg}{\mathsf{deg}}
\newcommand{\cS}{\mathcal{S}}
\newcommand{\nbr}{\mathsf{Nbr}}
\newcommand{\cP}{\mathcal{P}}
\newcommand{\EA}{{\em et al.}\xspace}
\newcommand{\TB}{\vspace{-0.1ex}}\newcommand{\TiE}{\setlength{\itemsep}{-1ex}}
\newcommand{\junk}[1]{}
\newcommand{\EG}{{\it e.g.}\xspace}
\newcommand{\FI}[1]{Fig.~\ref{#1}\xspace}
\newcommand{\dist}{\mathrm{dist}}
\newcommand{\eqdef}{\stackrel{\mathrm{def}}{=}}
\definecolor{columbiablue}{rgb}{0.61, 0.87, 1.0}
\begin{document}


\title{Finding Influential Cores via Normalized Ricci Flows in Directed and Undirected Hypergraphs with Applications}



\author{Prithviraj Sengupta}
\email[]{prithvi1096@gmail.com}
\homepage{www.linkedin.com/in/prithviraj-sengupta/}
\author{Nazanin Azarhooshang}
\email[]{nazanin.azarhoushang@gmail.com}
\homepage{www.linkedin.com/in/nazaninazarhooshang/}
\affiliation{
Department of Computer Science, 
University of Illinois Chicago
\\
Chicago, IL 60607, USA
}

\author{R{\'{e}}ka Albert}
\email{rza1@psu.edu}
\homepage{www.ralbert.me}
\affiliation{
Department of Physics
\\
Pennsylvania State University 
\\
University Park, PA 16802, USA  
}

\author{Bhaskar DasGupta}
\email{bdasgup@uic.edu}
\homepage{http://bdasgup.github.io/}
\affiliation{
Department of Computer Science 
\\
University of Illinois Chicago
\\
Chicago, IL 60607, USA
}



\date{\today}

\begin{abstract}
Many biological and social systems are 
naturally represented as edge-weighted directed or undirected hypergraphs
since they exhibit \emph{group} interactions involving \emph{three or more}
system units 
as opposed to pairwise interactions that can be incorporated in graph-theoretic representations.
However, finding 
{influential cores}
in hypergraphs 
is still {not} as extensively studied as their graph-theoretic counter-parts.
To this end, we develop and implement a hypergraph-curvature guided discrete time diffusion process 
with suitable topological surgeries and edge-weight re-normalization procedures
for {both} 
undirected and directed 
weighted 
hypergraphs
to find influential cores.
We successfully apply 
our framework 
for directed hypergraphs
to 
seven metabolic hypergraphs 
and 
our framework 
for undirected hypergraphs
to 
two social (co-authorship) hypergraphs
to find influential cores, thereby demonstrating the practical feasibility 
of our approach. 
In addition, we prove a theorem showing that a certain edge weight re-normalization procedure in
a prior research work for Ricci flows for 
edge-weighted graphs has the 
{undesirable} outcome of modifying the edge-weights to negative numbers, 
thereby rendering the procedure {impossible} to use.
To the best of our knowledge, this seems to be 
{one of}
the first articles that formulates algorithmic approaches for finding 
core(s) of 
{(weighted or unweighted)}
directed 
hypergraphs. 
\end{abstract}


\maketitle


\section{\label{sec-intro}Introduction}

Useful insights for many complex systems 
are often obtained by representing them as graphs and 
analyzing them using 
{graph-theoretic} 
and 
{combinatorial} 
tools~\cite{DL16,Newman-book,Albert-Barabasi-2002}.
Such graphs may vary in diversity from 
simple undirected graphs 
to edge-labeled directed graphs.
In such graphs, nodes represent the 
basic
units of the system (\EG, system variables) and edges represent 
relationships (\EG, correlations) between 
pairs of such basic units.
Once such graphs are constructed, they can be analyzed using graph-theoretic
measures such as 
{degree-based measures} 
(\EG, degree distributions), 
{connectivity-based measures} 
(\EG, clustering coefficients),
{geodesic-based measures} 
(\EG, betweenness centralities)
and other more novel network measures such as in~\cite{rich-club,LM07,ADGGHPSS11,bassett-et-al-2011}
to give meaningful insights into the properties and the dynamics of the system.
However, many real-world systems exhibit \emph{group} 
(\IE, 
{higher order}) 
interactions involving 
{three or more}
system units~\cite{BATTISTON20201,Battetal21,doi10113720M1355896}.
One way to handle these
higher order interactions is to 
encode them by a suitable combination of pairwise interactions
and then simply use the existing graph-theoretic tools (\EG, see~\cite{net10-1,DESZ07}).
While such approaches have been successful in the context of many real-world networks, 
they obviously do not encode the higher-order interactions in their full generalities.
A more direct approach would be to use \emph{directed} or \emph{undirected} hypergraphs to encode 
these interactions, and this is the approach we follow in this article.
Although the 
{theory} 
of hypergraphs has been considerably developed during the last few decades 
(\EG, see~\cite{Berg89}), applications of hypergraphs to real-world networks face their own challenges.
Sometimes it is 
not
clear how to generalize a concept from graphs to hypergraphs such that it 
best serves its purpose in the corresponding application,
and some computationally tractable 
graph-theoretic algorithms 
may
become intractable when generalized to hypergraphs (\EG, the maximum 
matching problem for graphs is polynomial-time solvable whereas the $3$-dimensional matching problem
for hypergraphs is $\NP$-complete~\cite{GJ79}).

Suitable notions of curvatures are natural measures of shapes of higher dimensional objects in 
mainstream physics and mathematics~\cite{book99,Berger12}.
There have been several attempts to extend these curvature measures to 
graphs and hypergraphs.
Two major notions of curvatures of graphs can be obtained via extending Forman's discretization~\cite{F03}
of Ricci curvature for (polyhedral or CW) complexes 
to undirected graphs (the ``\emph{Forman-Ricci curvature}'')~\cite{Sree1,Sree2,Weber17,DJY20,Samal18,CATAD21}
and via Ollivier's discretization of manifold Ricci curvature to 
undirected edge-weighted graphs (the ``\emph{Ollivier-Ricci curvature}'')~\cite{Oll11,Oll09,Oll10,Oll07}.
Both Ollivier-Ricci curvature and Forman-Ricci curvature assign a number to each edge of
the given graph, but the numbers are calculated in very \emph{different} ways since 
they capture \emph{different} metric properties of a Riemannian manifold; 
some comparative analysis of these two measures
can be found in~\cite{CATAD21,Samal18}.
Recently, 
the Ollivier-Ricci curvature
and 
the Forman-Ricci curvature
measures have been generalized in a few ways to 
{unweighted} 
directed and undirected hypergraphs~\cite{AGE18,EJ20,CDR23,Ak22,LRSJ21}.

A curvature-guided diffusion process called 
Ricci flow, along with ``topological surgery'' procedures to avoid topological singularities,
was originally introduced in the context of a 
Riemannian manifold by Hamilton~\cite{Ha81} to provide a continuous change of the metric of the manifold
to provide a continuous transformation (homeomorphism) of one manifold to another manifold.
One of the most ground-breaking application of this technique was done by Perelman~\cite{Pe02} 
to solve the 
{Poincar\'{e} conjecture},
which asserts that 
any three-dimensional manifold that is closed, connected, and has a trivial fundamental group is homeomorphic to the 
three-dimensional sphere.
In the context of a weighted undirected graphs, these techniques were extended in papers  
such as~\cite{NLLG19,SJB19,LBL22,WJS16,Weber17,axioms11090486,NLGG18}
to 
{iteratively} 
and 
{synchronously} 
change the weights of the edges of the graph.
Motivated by the fact that 
connected sum decomposition can be detected by the geometric Ricci flows in manifolds,
these techniques were then used to find communities or modules mostly in the context of \emph{undirected social graphs}. 
To prevent lack of convergence of graph Ricci flows within reasonable time, edge-weight 
{re-normalization} 
methods after every iteration were suggested and investigated in~\cite{LBL22}.

In this article we 
devise a computational framework to 
detect 
\emph{influential cores}
in \emph{both} 
undirected and directed 
weighted 
hypergraphs
by 
formulating and using 
a hypergraph-curvature guided discrete time diffusion process 
with suitable topological surgeries and edge-weight re-normalization procedures.
We
demonstrate the practical feasibility 
of our approach
by 
{successfully} 
applying 
our computational framework 
for directed hypergraphs
to 
seven metabolic hypergraphs 
and 
our computational framework 
for undirected hypergraphs
to 
two social (co-authorship) hypergraphs
to find influential cores.
In addition, we prove a theorem showing that a certain edge weight re-normalization procedure in~\cite{LBL22} for Ricci flows for 
edge-weighted graphs has an 
{undesirable} 
outcome of modifying the edge-weights to negative numbers, 
thereby rendering the procedure \emph{impossible} to use.

\smallskip
\paragraph{Motivation for finding core(s) of a complex system}
Broadly speaking, the core of a complex system (also studied under the name ``cohesive subgraph''
in the network science literature~\cite{KIM2023290}) is a smaller sub-system 
that contributes \emph{significantly} to the functioning of the 
{overall} 
system, and 
thus focussing the analysis of the smaller sub-system, which could be \emph{easier} 
than analyzing the system as a whole, may reveal important characteristics of the 
overall system.
For example, in the context of brain graphs, 
identifications of  core(s) of the graph where neurons strongly interact with each other
provide 
effective characterizations of these graph topologies;
such cores 
{are} 
very important for various brain functions and cognition~\cite{fornito2016fundamentals}.
As another example, cores in attributed social networks
can be directly utilised for a 
{recommendation system}~\cite{KIM2023290}.

For systems represented by \emph{undirected graphs}, prior research works have used 
{various} 
definitions of 
what actually constitutes a core, such as 
via 
{modularity}~\cite{annurev-content-journals-annurev-psych}, 
via 
{rich clubs}~\cite{harrispo16},
or using 
{information-theoretic} 
ideas~\cite{KITAZONO2020232}.

Our definition for cores of hypergraphs 
requires sufficient connectivity and cohesiveness, nontrivial size, and a large centrality, 
evidenced by a large loss of short paths in the network when the core is removed. 
The specific definition and constraints are given in Section~\ref{sec-quality}.
The specific usefulness of finding cores 
for metabolic (directed) hypergraphs 
and 
for co-authorship (undirected) hypergraphs 
are discussed in Section~\ref{sec-finmod-dir-int}
and in Section~\ref{sec-finmod-undir-int}, respectively.
{\bf To the best of our knowledge, this seems to be 
one of
the first articles that formulates algorithmic approaches for finding 
core(s) of 
(weighted or unweighted)
directed 
hypergraphs}. 
Note that since graphs are special cases of hypergraphs, our methodologies are also applicable to graphs; 
however, in this article our focus is on hypergraphs that are \emph{not} graphs.

\smallskip
\paragraph{Finding core$($s$)$ vs.\ modular decomposition}
Note that finding a core of a system is \emph{different} than the modular 
decomposition of graphs that has been extensively studied in the network
science literature~\cite{NG04,LN08,N06,DD13,D14}: the overall goal 
of graph decomposition (partitioning) into modules (also called communities 
or clusters) is to partition the \emph{entire} node set into modules and requires the 
optimization of a \emph{joint} fitness function of these modules to evaluate the quality of the decomposition.
In particular, the 
centrality parameters (quantifying loss of short paths when removing the core, 
see Section~\ref{sec-quality})
are \emph{not} relevant for typical modular decomposition applications 
and the 
size constraints
are unimportant if the joint fitness function is satisfactory 
(\EG, papers such as~\cite{DD13} 
show that a good approximation to Newman's modularity value 
may be obtained even though the corresponding modules will \emph{not} satisfy our
size constraints).

\subsection{\label{sec-basic-defn}Basic Definitions and Notations}

A weighted \emph{directed} hypergraph $H=(V,E,w)$ 
consists of a node set $V$, a set $E$ of (directed) hyperedges and a hyperedge-weight function 
$w:E\mapsto \R_{\ge 0}$.
A \emph{directed} hyperedge $e\in E$ is an ordered pair $(\cTa_e,\cHe_e)$ 
where 
$\emptyset \subset \cTa_e\subset V$
is the \emph{tail}, 
$\emptyset \subset \cHe_e\subset V$
is the \emph{head} and $\cTa_e \neq \cHe_e$; for convenience we will also denote the
hyperedge by $\cTa_e\to \cHe_e$.
For a node $x \in V$, the \emph{in-degree} $\deg_{x}^{in}$ is the number of 
``incoming'' hyperedges, 
\IE, the number of hyperedges $e'$ such that $x\in \cHe_{e'}$, and 
the \emph{out-degree} $\deg_{x}^{out}$ is the number of ``outgoing'' hyperedges, 
\IE, the number of hyperedges $e'$ such that $x\in \cTa_{e'}$.
A (directed) \emph{path} 
$\cP_{x,y}$ from node $x$ to node $y$ 
is an alternating sequence 
$(x=v_1,e_1,\dots,v_k,e_k,v_{k+1}=y)$
of \emph{distinct} nodes and directed hyperedges such that
$v_i\in\cTa_{e_i}$ and 
$v_{i+1}\in\cHe_{e_i}$
for each $i\in\{1,\dots,k\}$; 
the \emph{length} of the path is $\sum_{i=1}^k w(e_i)$.
We will denote by $\dist_H(u,v)$ 
the \emph{minimum} length of any path from $u$ to $v$; note that
$\dist_H(u,v)$ 
need \emph{not} be same as 
$\dist_H(v,u)$. 
A directed hypergraph is \emph{weakly connected} provided
for every pair of nodes $x$ and $y$ there is \emph{either} a path from $x$ to $y$ 
\emph{or} a path from $y$ to $x$.
\textbf{We assume from now onwards that our original (input) directed hypergraph is weakly connected}.

A weighted \emph{undirected} hypergraph $H=(V,E,w)$ 
consists of a node set $V$, a set $E$ of (undirected) hyperedges and a hyperedge-weight function 
$w:E\mapsto \R_{\ge 0}$.
A (undirected) hyperedge $e\in E$ is a subset of nodes $\cA_e$ 
where 
$\emptyset\subset\cA_e\subseteq V$.
For a node $x \in V$, the \emph{degree} $\deg_{x}$ is the number of 
the number of hyperedges $e'$ such that $x\in \cA_{e'}$.
An (undirected) \emph{path} 
$\cP_{x,y}$ between nodes $x$ and $y$ 
is an alternating sequence 
$(x=v_1,e_1,\dots,v_k,e_k,v_{k+1}=y)$
of \emph{distinct} nodes and undirected hyperedges such that
$v_i,v_{i+1}\in\cA_{e_i}$
for each $i\in\{1,\dots,k\}$; 
the \emph{length} of the path is $\sum_{i=1}^k w(e_i)$.
We will denote by $\dist_H(u,v)$ 
the \emph{distance} (\IE, \emph{minimum} length of any path) between $u$ and $v$.
A undirected hypergraph is \emph{connected} provided
for every pair of nodes there is a path between them.
\textbf{We assume from now onwards that our original (input) undirected hypergraph is connected}.

\subsection{\label{sec-prior}Discussions on Prior Relevant Research}

There are some prior published articles that deal with finding cores or core-like structures  
for \emph{undirected} hypergraphs, \emph{mostly} for unweighted 
undirected hypergraphs~\cite{doi:10.1137/22M1480926,pmlr-v84-chien18a,doi:10.1126-sciadv.adg9159,Carletti_2021,Erikss21,Krits24,doi:10.1080-01621459-2021-2002157,Eriksson2022,MIPB23,MIPB24}
but some also on weighted undirected hypergraphs~\cite{YiqunXu2023}.
However, there seem to be very few peer-reviewed prior articles dealing with finding 
cores for \emph{directed} (and more so for \emph{weighted} directed) hypergraphs where cores are defined in the same sense as
used in this article; the authors themselves were unable to get any relevant peer-reviewed prior works via google search.
For example, the articles by Pretolani~\cite{PRETOLANI2013226}
and by Volpentesta~\cite{VOLPENTESTA2008390}
investigate intriguing but \emph{different} 
concepts that 
refer to
sub-structures in unweighted directed hypergraphs based on hyperpaths.
Note that although for graphs replacing an undirected
edge by two directed edges make the core finding problem for undirected graphs solvable from 
the core finding problem for directed graphs, a similar trick cannot directly be used for 
hypergraphs since the head or tail may contain more than one node (\IE, 
a core finding algorithm for directed hypergraphs does not 
readily translate
to an algorithm for finding cores 
in undirected hypergraphs).
A strength of our framework is that
we have a single overall algorithmic method (albeit with different calculations of certain quantities) that can be adopted 
for \emph{both} directed and undirected hypergraphs. 

Most of the prior works on finding cores for undirected unweighted hypergraphs
involve iteratively identifying a set of nodes such that the degree of every node in the 
set is at least $k$ for a given $k$. 
We go deeper than these prior works by defining the
centrality quality parameters of the core 
as stated via Equations~\eqref{eq-undir-discon} and \eqref{eq-undir-dist}.
These centrality quality parameters,
which are combinatorial analogs of the information loss quantifications 
used in prior research works such as~\cite{KITAZONO2020232}, 
are significant in real-world applications to biological and social networks,
as mentioned in prior publications such as~\cite{ADM14} in the context of undirected unweighted graphs
(\EG, see~Figure~5 for biological networks and Figures 7--8 for social networks 
in~\cite{ADM14}
with associated texts),
and are closely related to the concept of structural holes~\cite{Burt92} in social networks.

\section{\label{sec-methods-and-materials}Methods and Materials}

\subsection{\label{sec-def-curv}Definition of Curvatures of Weighted Hypergraphs}

\newcommand{\emd}{{\sc Emd}}
\newcommand{\ric}{{\sc Ric}}

Curvatures of weighted hypergraphs are defined in somewhat different ways depending on whether
the hypergraph is directed or undirected. However, both definitions use a common paradigm of 
\emd\ 
(\emph{Earth Mover's Distance}, 
also known as the \emph{ $L_1$ transportation distance}, the \emph{ $L_1$ Wasserstein distance} or 
the \emph{Monge-Kantorovich-Rubinstein distance}~\cite{Ma72,Gui1,Gui2,Vil03})
defined on the hypergraph in the following manner based on the notations and terminologies in~\cite{ASD20}.
Let $H=(V,E,w)$ be a directed or undirected hypergraph.
Suppose that we have
two probability distributions 
$\PPL$ and $\PPR$ over the set of nodes $V$, \IE, two real numbers 
$0\leq \PPL(v),\PPR(v)\leq 1$ for every node $v\in V$ with 
$
\sum_{v\in V}\PPL(v)=
\sum_{v\in V}\PPR(v)=1
$.
We can think of 
$\PPL(v)$ as the total amount of ``earth'' (dirt) at node $v$ that need to be moved to other nodes, 
and $\PPR(v)$ as the \emph{maximum} total amount of earth node $v$ can store.
The cost of transporting \emph{one} unit of earth from node $u$ to node $v$ is 
$\dist_H(u,v)$, and 
the goal is to move \emph{all} units of earth (determined by $\PPL$) while simultaneously
satisfying \emph{all} storage requirements (dictated by $\PPR$) and \emph{minimizing} the total transportation cost.
Letting the real variable $z_{u,v}\in[0,1]$ denote the amount of shipment from node $u$ to node $v$ in an optimal solution, 
\emd\ for 
the two probability distributions $\PP_1$ and $\PP_2$ on
$V$ is the \emph{linear programming} ($\LP$) problem shown in \FI{f1}
which can be solved in polynomial time.
We will use the notation 
\emd$_H(\PPL,\PPR)$ 
to denote the value of the objective function in an optimal solution 
of the $\LP$ in \FI{f1}.

\begin{figure*}
\begin{tabular}{ r l r}
\\
\toprule
\multicolumn{3}{c}{{\bf variables}: $z_{u,v}$ for every pair of nodes $u,v\in V$} 
\\
[5pt]
  \emph{minimize} &  
	      $\sum\limits_{u\in V}\sum\limits_{v\in V'} \dist_H(u,v)\, z_{u,v}$ 
	                &
        (* minimize total transportation cost *)
\\
[5pt]
  \emph{subject to} & & 
\\
[5pt]
	 & 
   $\sum\limits_{v\in V} z_{u,v} = \PPL(u),\,\,$ for each $u\in V$
	 & 
  (* ship from $u$ as much as it has *)
\\
[5pt]
	 & 
   $\sum\limits_{u\in V} z_{u,v} = \PPR(v),\,\,$ for each $v\in V$
	&
  (* ship to $v$ as much as it can store *)
\\
[5pt]
	& 
  $z_{u,v}\geq 0,\,\,$ for each $u,v\in V$
	& 
\\
\bottomrule
\end{tabular}
\caption{\label{f1}$\LP$-formulation for \emd\ on hypergraph $H=(V,E,w)$ corresponding 
to distributions $\PPL$ and $\PPR$.
Comments are enclosed by (* and *).}
\end{figure*}   

Given a hypergraph $H$ and an edge $e$ of $H$, the curvature value of $e$ is then 
computed as follows:
\begin{enumerate}[label=$\triangleright$]
\item
Fix appropriate distributions for $\PPL$ and $\PPR$.
\item
Use a formula 
for Ricci curvature of the hyperedge $e$
The formula is \emph{different} depending on whether the hypergraph is directed or undirected,
and shown below:
\begin{enumerate}[label=$\triangleright$]
\item
For directed hypergraphs,
the Ricci curvature of the hyperedge $e$ is calculated as:
\begin{empheq}[box=\Ovalbox]{gather}
\mbox{\ric}(e) = 1 - \frac { \mbox{\emd}_H(\PPL,\PPR) } { w(u,v) }
\label{eq1}
\end{empheq}
An informal intuitive understanding of the connection of \emd\ to 
Ricci curvature in the above formula, as explained in prior research works such as~\cite{Oll11}
in the context of graphs, is as follows.
The Ricci curvature at a point $x$ in 
a smooth Riemannian manifold
can be thought of 
transporting a small ball centered at $x$ along that direction and measuring the ``distortion'' 
of that ball. 
In~\eqref{eq1}
the role of the direction is captured by the hyperedge $(u,v)$,
the roles of the balls at the two nodes are played by the distributions $\PPL$ and $\PPR$,
and the role of the distortion due to transportation is captured by the \emd\ measure. 
\item
For undirected hypergraphs, 
the Ricci curvature of the hyperedge $e$ is calculated as:
\begin{empheq}[box=\Ovalbox]{gather}
\mbox{\ric}(e) = 1 - \frac{1}{\binom{|\cA_e|}{2} } \times 
\sum_{ \substack{p,q\in \cA_e \\ p\neq q  }} \mbox{\emd}_H(\PPL^p,\PPR^q)
\label{eq2}
\end{empheq}
The calculation for the curvature averages out weighted-lazy random walk probabilities 
over all pairs of distinct nodes in $e$.
There is one special case not covered by the above definition but may occur in our
undirected co-authorship hypergraphs: namely when $\cA_e=\{u\}$ for some node $u$ 
corresponding to a paper written by \emph{just one} author. For this case we treat the 
hyperedge as a self-loop from $u$ to $u$ giving an \emd\ value of zero.
\end{enumerate}
\end{enumerate}
The exact calculations of $\PPL$ and $\PPR$ are somewhat different depending on 
whether the hypergraph is directed or undirected, and this is described in next two sections.
Let $H=(V,E,w)$ be the weighted (directed or undirected) hypergraph, and $e\in E$ be the hyperedge considered.

\subsubsection{\label{sec-ricci-dir}Calculations of $\PPL$ and $\PPR$ for a Weighted Directed Hypergraph}

The distributions $\PPL$ and $\PPR$ are determined by the nodes in 
$\cTa_e$ and $\cHe_e$, respectively.
$\PPL$ is determined in the following manner
by adopting the calculations in~\cite{EJ20}: 
\begin{enumerate}[label=$\triangleright$]
\item
Initially, $\PPL(u)=0$ for all $u\in V$. In our subsequent steps, we will add to these values as appropriate.
\item
We divide the total probability $1$ equally among the nodes in $\cTa_e$, thus ``allocating'' a value of 
$(| \cTa_e |)^{-1}$ 
to each node in question.
\item
For every node $x\in\cTa_e$ with 
$\deg_x^{in}=0$, 
we add 
$(| \cTa_e |)^{-1}$ 
to 
$\PPL(x)$.
\item
For every node $x\in\cTa_e$ with 
$\deg_x^{in}>0$, 
we perform the following:
\begin{enumerate}[label=$\triangleright$]
\item
We divide the probability 
$(| \cTa_e |)^{-1}$ 
equally among the hyperedges 
$e'$ such that $x\in\cHe_{e'}$, 
thus ``allocating'' a value of 
$(| \cTa_e |\times\deg_x^{in})^{-1}$ 
to each hyperedge in question.
\item
For each such hyperedge 
$e'$ such that $x\in\cHe_{e'}$, 
we divide the allocated value equally among the nodes in 
$\cTa_{e'}$ and add these values to the probabilities of these nodes.
In other words, 
for every node $y\in\cTa_{e'}$
we add 
$(| \cTa_e |\times\deg_x^{in} \times |\cTa_{e'}|)^{-1}$ 
to 
$\PPL(y)$.
\end{enumerate}
\end{enumerate}
Note that the final probability for each node is calculated by 
summing all the contributions from each bullet point.
In closed form, 
$\PPL^x(y)$ is given by:
\begin{multline*}
\PPL^x(y)= 
\frac{
\delta_{\deg_y^{in},0} \times \delta_{ |\cTa_e|-1, | \cTa_e \setminus \{y\} | }
}{| \cTa_e |} 
\\
+
\sum_{\begin{subarray}{l} 
          x\in\cTa_e
					\\
					x\in\cHe_{e'}
          \\
          y\in\cTa_{e'}
      \end{subarray}
         }
\frac{
1 - \delta_{\deg_x^{in},0}
}
{
| \cTa_e |\times\deg_x^{in} \times |\cTa_{e'}|
}
\end{multline*}
where $\delta(i,j)$ is the 
Kronecker delta function, \IE,
\[
\delta_{i,j}=
\left\{
\begin{array}{r l}
1, & \mbox{if $i=j$} 
\\
0, & \mbox{otherwise} 
\end{array}
\right.
\]
$\PPR$ is determined in a symmetric manner. 
The details are provided in the appendix for the sake of completeness.

\subsubsection{\label{sec-ricci-undir}Calculations of $\PPL$ and $\PPR$ for an Undirected Hypergraph}

\begin{figure}
\caption{\label{fig-new-6}An illustration of the 
calculations of $\PPL$ and $\PPR$ for an undirected hypergraph,
as outlined in Section~\ref{sec-ricci-undir}, where the node set is 
$\{ s_1, s_2, s_3, s_4, s_5, s_6, s_7, s_8 \} $
and the four hyperedges are 
$\{s_1,s_2,s_3,s_4\}$, $\{s_1,s_5,s_7\}$, 
$\{s_5,s_6\}$ and $\{s_2,s_8\}$.}
\includegraphics[width=0.45\textwidth]{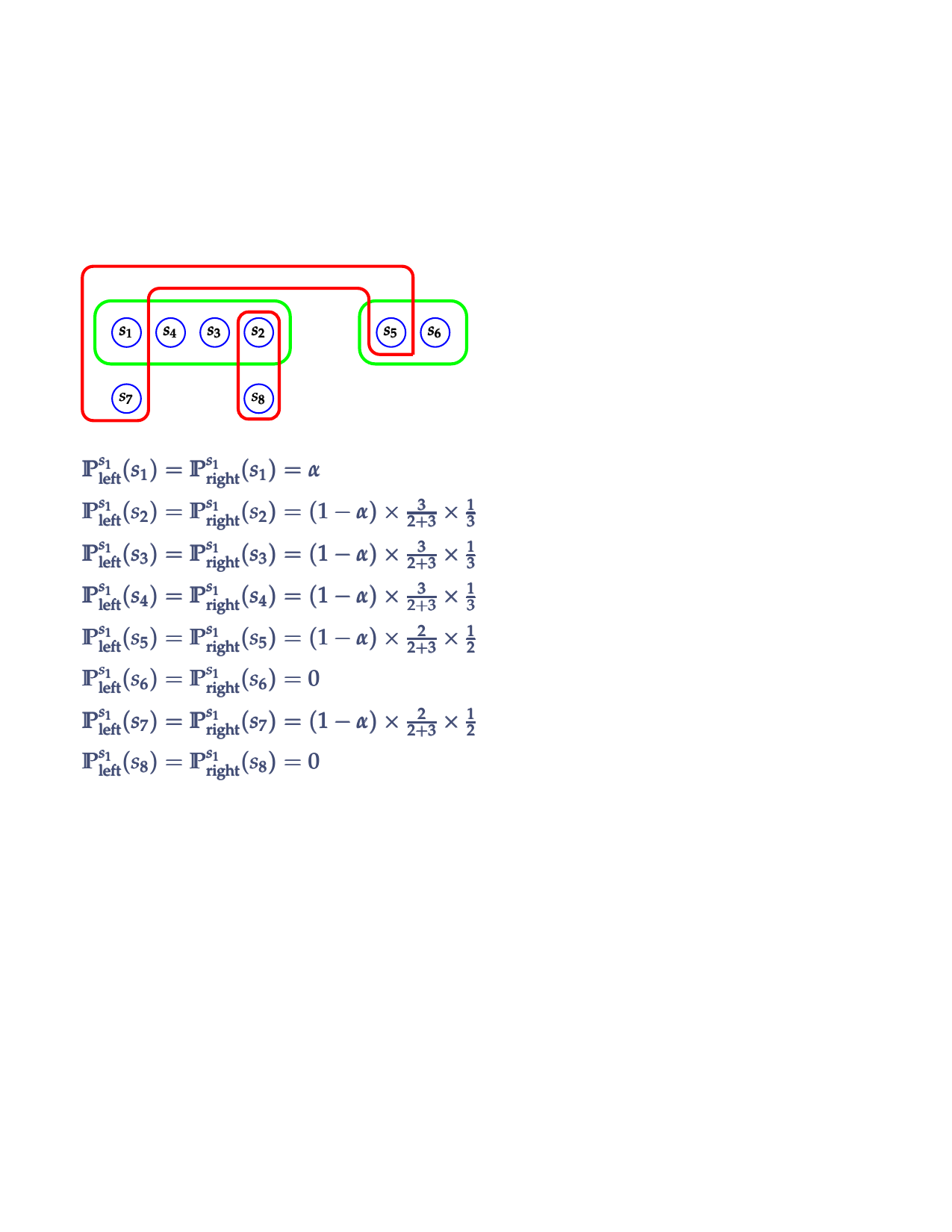}
\end{figure}  

For this case, 
$\PPL^x(y) = \PPR^x(y)$ for all 
$x,y\in\cA_e$ since the hypergraph is undirected.  
Let $0<\alpha<1$ be a parameter that encodes the ``laziness'' of the random walk.
Then, $\PPL^x$ is determined in the following manner
by adopting the calculations in~\cite{CDR23}
(see \FI{fig-new-6} for an illustration):
\begin{enumerate}[label=$\triangleright$]
\item
Initially, 
$\PPL^x(x)=\alpha$ and 
$\PPL^x(y)=0$ 
for all $y\neq x$. In our subsequent steps, we will add to these values as appropriate.
\item
We divide and allocate the remaining total probability $1-\alpha$ among all the hyperedges that contain $x$ proportionally to 
the cardinality of these hyperedges \emph{excluding} the node $x$, \IE, 
a hyperedge $e'$ such that $x\in\cA_e$ gets 
$
\delta_{e'}\eqdef\frac{|\cA_{e'}|-1}{\sum_{x\in \cA_{e''}} \left( |\cA_{e''}|-1 \right)}
$.
Then, 
we divide the allocated value $\delta_{e'}$ equally among the nodes in 
$\cA_{e'}$ \emph{excluding} $x$ and add these values to the probabilities of these nodes.
In other words, 
for every node $y\in\cA_{e'}$ such that $y\neq x$ 
we add 
$\frac{\delta_{e'}}{|\cA_{e'}|-1}$
to 
$\PPL^x(y)$.
\end{enumerate}
In closed form, 
$\PPL^x(y)$ is given by:
\begin{multline*}
\PPL^x(y) = 
\\
\left\{
\begin{array}{l l}
\alpha, & \mbox{if $x=y$}
\\
\begin{array}{l}
(1-\alpha)
\times
\!\!\!
{\displaystyle \sum_{x,y\in e'} }
\frac{|\cA_{e'}|-1}{\sum_{x\in \cA_{e''}} (|\cA_{e''}|-1) }
\times
\frac{1}{|\cA_{e'}|-1}
\\
[4pt]
\displaystyle
\hspace*{0.3in}
=
(1-\alpha)
\times
\frac{
\big| \, \{e' \,|\, x,y\in\cA_{e'} \} \, \big|
}
{
\sum_{x\in \cA_{e''}} (|\cA_{e''}|-1)
}
\end{array}
, 
& \mbox{otherwise}
\end{array}
\right.
\end{multline*}
The parameter $\alpha$ was used in the context of Ricci flows on 
undirected graphs by Ni \EA~\cite{NLLG19}, and by Lai \EA~\cite{LBL22}. 
These works suggested using a non-zero value for $\alpha$, but the exact
choice of $\alpha$ was left to the specific application in question. 
Thus, we decided to use a non-zero value of $\alpha$ and found that 
a value of $\alpha=0.1$ works best in our applications.

\subsection{\label{sec-flow-surg}Ricci Flow, Weight-renormalization, Topological Surgery
and Flow Convergence for Weighted Hypergraphs}

Before proceeding with the technical descriptions, we first provide a brief informal intuition 
behind the proposed approach.
The curvature values of the hyperedges provide a (positive or negative) value to each hyperedge of 
the hypergraph. 
The Ricci flow is an iterative process that produces a sequence of hypergraphs, where 
each iteration of the Ricci flow process dynamically alters the weights of hyperedges based on their current weights and 
curvature values.
On average, these weight alterations tend to 
increase the weights of the hyperedges connecting the core(s) to the rest of the hypergraph 
while decreasing the weights of those hyperedges within the core(s). 
The weight re-normalization procedure after every iteration ensures that the weights of the hyperedges 
do not increase in an unbounded manner
and instead eventually converge to some ``steady-state'' values.
The goal of the topological surgery procedure performed once in a few iterations is to remove the hyperedges 
connecting the core(s) to the rest of the hypergraph so that at the end of our Ricci flow process,
when the hyperedge weights have converged to some stable values,
we can recover the core(s) from the connected components of the hypergraph.
This process of topological surgeries and hyperedge weight updates via Ricci flows 
is somewhat analogous to the Newman-Girvan's algorithm~\cite{comm5} which,  in the context of undirected graphs, 
iteratively removes edges of high betweenness centrality  and recomputes the edge betweenness centrality.

We now present the precise technical descriptions of these concepts.
Let $t=0,1,2,\cdots$ 
is the discrete iteration index, 
and let 
$H^{(0)}=(V^{(0)},E^{(0)},w^{(0)}),
H^{(1)}=(V^{(1)},E^{(1)},w^{(1)}),
H^{(2)}=(V^{(2)},E^{(2)},w^{(2)}),\cdots$
denote the sequence of hypergraphs produced by the Ricci flow
with 
$H^{(0)}$ being equal to the original (starting) hypergraph 
$H=(V,E,w)$.
The Ricci flow equation (\emph{without} topological surgeries and hyperedge-weight renormalization)
is as follows~\cite{NLLG19,SJB19,LBL22}: 
\begin{empheq}[box=\Ovalbox]{gather}
w^{(t+1)}(e)
=
w^{(t)}(e) - 
w^{(t)}(e) 
\times
\mbox{\ric}^{(t)}(e)
\label{eq-ricci}
\end{empheq}
where 
$\mbox{\ric}^{(t)}(e)$
is the curvature value based on the edge-weights 
$W^{(t)}
=
\left\{ w^{(t)}(e) \,|\, e\in E^{(t)} \right\}
$.
Note that if $w^{(t)}(e)=0$ for some value $t=t_0$ then 
$w^{(t)}(e)$ stays zero for all $t>t_0$ based on~\eqref{eq-ricci},
so we may simply remove such edges $e$ from all $E^{(t)}$ with $t\geq t_0$.
Also note that 
$w^{(t)}(e)\geq 0$ for all $t$
since 
$\mbox{\ric}^{(t)}(e)\leq 1$
for all $t$.

Unfortunately, as observed in papers such as~\cite{LBL22}, there \emph{are} problematic aspects to 
the Ricci flow equation in~\eqref{eq-ricci}.  
In particular, in applications such as in this article, we would like the iterations
to eventually \emph{converge} (within a reasonable time), but 
it can be easily seen that there exists hypergraphs for which 
the hyperedge weights may keep on increasing in successive iterations.
As a remedy for graphs only, 
Lai, Bai and Lin~\cite{LBL22}
proposed changing~\eqref{eq-ricci} to~\eqref{eq-ricci-norm} as shown below
such that the Ricci flow is ``normalized''
in the sense that
the sum of edge weights of the graph remain the same and therefore edge weights \emph{cannot} become 
arbitrarily large:
\begin{empheq}[box=\Ovalbox]{multline}
w^{(t+1)}(e)
=
w^{(t)}(e) - 
w^{(t)}(e) 
\times
\mbox{\ric}^{(t)}(e)
\\
\textstyle
+
\frac{ 
       s \times w^{(t)}(e) \times 
			       \left(
              \sum_{h\in E} 
               \left( 
                  w^{(t)}(h) 
                  \times
                  \mbox{\ric}^{(t)}(h)
	             \right)
             \right)
		 }
    { \sum_{h\in E} \left( 
         w^{(0)}(h) 
         \times
         \mbox{\ric}^{(0)}(h)
       \right)
		}
\label{eq-ricci-norm}
\end{empheq}
In the above equation, $s>0$ is a \emph{constant} (called ``step size'' in~\cite{LBL22}).
Unfortunately, as we will prove in 
Theorem
in Section~\ref{sec-res-normprob},
there are infinitely many graphs
for which 
$w^{(1)}(e)$ will become negative thus rendering the iterative process in~\eqref{eq-ricci-norm}
\emph{impossible} to execute beyond the first step.
Instead, in our algorithm we 
perform hyperedge-weight re-normalization by 
applying a sigmoidal function to hyperedge weights to ensure that \emph{all} hyperedge weights do \emph{not} exceed $1$, 
\IE, 
for $t\geq 1$ we replace 
$w^{(t)}(e)$
by 
$\frac{1}{1+\bee^{-w^{(t)} (e)}}$.
{\bf In the sequel, unless explicitly mentioned otherwise, when we refer to weight 
$\pmb{w^{(t)}(e)}$
for $\pmb{t\geq 1}$ we refer to the weight after re-normalization.}

Unfortunately, there is \emph{no} closed-form analytical solution of Equation~\eqref{eq-ricci}
yet and it is \emph{not} even clear if such a solution is possible.
Here we introduce our topological surgery operation, and 
provide an \emph{informal} intuition behind our approach of finding cores 
by using Ricci flows with 
topological surgery.
Note that since all hyperedge weights are non-negative at \emph{all} times, 
Equation~\eqref{eq-ricci}
shows that 
$w^{(t+1)}(e)<w^{(t)}(e)$ 
if 
$\mbox{\ric}^{(t)}(e)>0$ 
and 
$w^{(t+1)}(e)>w^{(t)}(e)$ 
if 
$\mbox{\ric}^{(t)}(e)<0$, \IE,  
each iteration of the Ricci flow process dynamically alters the weights of hyperedges based on their curvature values,
increasing the weights of those with negative curvature while decreasing the weights of those with positive curvature. 
Using the observation
in~\cite{SJB19} that states (quoted verbatim) 
``{positively curved edges are well connected in the sense that none of them are essential for the proper transport
operation}'',
which is also supported by research works such as~\cite{NLLG19,SJB19} on graphs, 
it follows that 
this effect should on an average lead to 
pairs of nodes within a core being connected by hyperedges with a smaller weight whereas
pairs of nodes outside cores being connected by hyperedges with a larger weight.
Consequently, 
we can use the following ``topological surgery'' method
to isolate the core(s) from the remaining parts of the hypergraph:
\emph{remove hyperedges with substantial weights
following every several iterations of the Ricci flow}. 
This strategic manipulation enhances the clarity of core structures within the network. 
Moreover, since each hyperedge in a hypergraph typically involves many nodes, 
surgical removal of hyperedges may disconnect more nodes (as compared to graphs in which
each edge always involves two nodes), thus leading to the survival of a few very well-connected
sets of nodes as cores.
See \FI{fig-intuit} for a visual illustration of some of these intuitions.

\begin{figure*}
\caption{\label{fig-intuit}A visual illustration of the intuition behind our approach of finding cores 
by using Ricci flows with topological surgery as discussed in 
Section~\ref{sec-flow-surg}. The notation ``$\approx 0''$ refers to a function $f(n)$ such that 
$\lim_{n\to\infty}f(n)=0$.
The nodes are colored blue and red for visual clarity: 
red nodes are involved in cliques of hyperedges of two nodes (i.e., cliques of edges) and all the blue nodes together with an equal
number of red nodes appear in a single hyperedge 
of $2n$ nodes.
The cliques are enclosed by dotted black bounding boxes for visual clarity 
(the cliques do not correspond to hyperedges).
The second figure from top indicates the hypergraph after one iteration of Ricci flow but before the weight renormalization. The thicknesses of the red edges are reduced to indicate 
the decrease of their weights from approximately $1$ to approximately $0$ and the thickness of
the black hyperedge is increased to indicate an increase of its weight.
The third figure from top shows that the black hyperedge of $2n$ nodes gets deleted 
as a result of weight renormalization and topological surgery, thus giving us 
the $n$ cores corresponding to the $n$ cliques.
}
\includegraphics[width=0.75\textwidth]{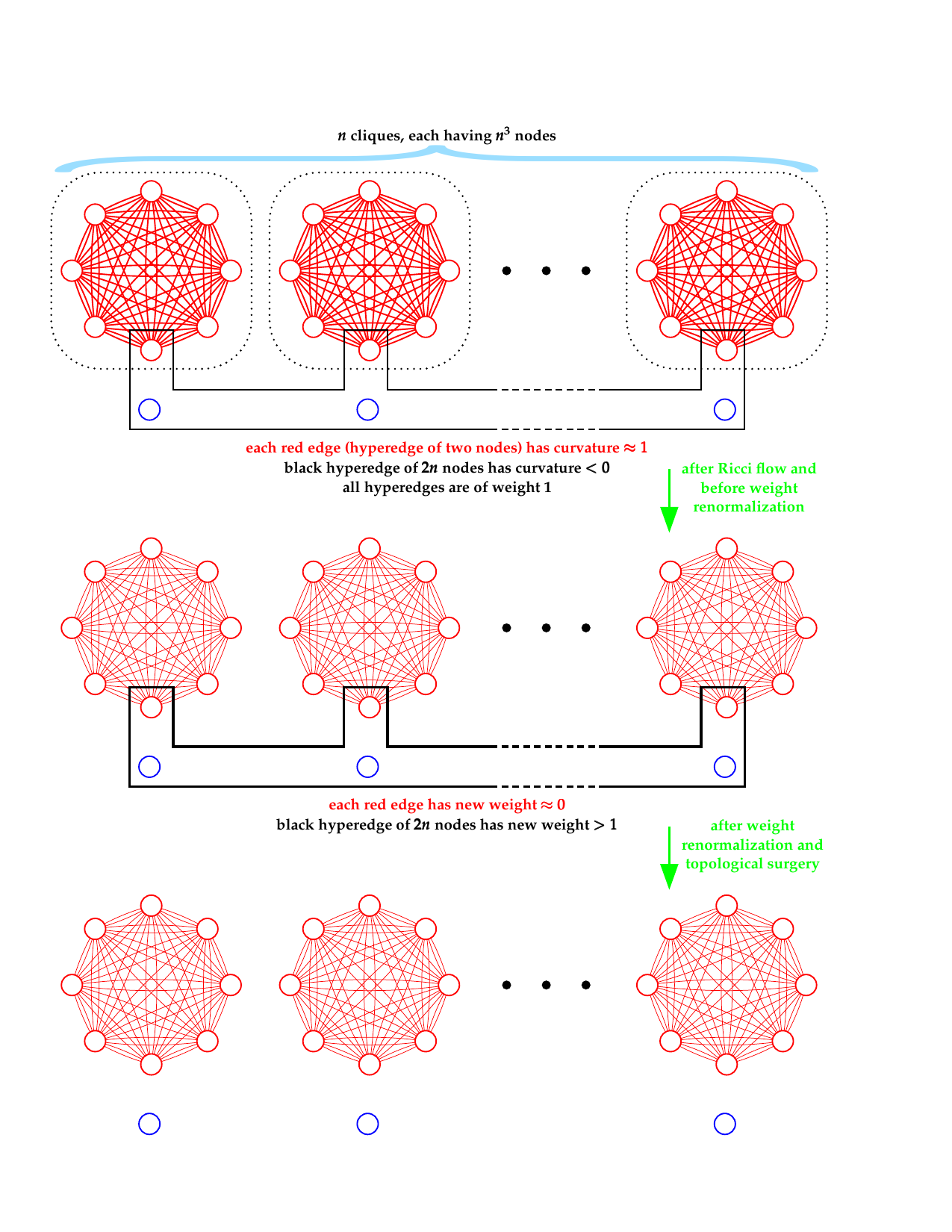}
\end{figure*}  

For our experiments, 
we did surgery every $2$ iterations and ran our algorithm for a total of $40$ iterations in total for every hypergraph. 
The amount of hyperedges to be removed is an adjustable parameter that can be tweaked accordingly to get cores of different sizes. 
For our experiments, we set our ``surgery amount'' to be 
those hyperedges that have weights in the largest 
$8\%$ of the weights of all hyperedges in the previous iteration.
These combinations of adjustable parameters provided us with a reasonable combination of 
rapid rate of convergence and acceptable core quality parameters.

To check if the edge-weights have converged to a fixed-point 
we use a standard convergence criterion in which the \emph{average} of the \emph{absolute} differences 
of edge-weights in successive iterations is sufficiently small, \IE, 
\begin{empheq}[box=\Ovalbox]{gather}
\Delta_{\mathrm{AVE}}=
\frac{1}{|E^{(t)}|} \times \sum_{e\in E^{(t)}} |w^{(t+1)}(e)-w^{(t)}(e)|
\label{eq-converg}
\end{empheq}
is at most $\eps$
for some small real number $\eps\ge 0$.
We use 
$\eps=0.005$ for directed hypergraph applications 
and 
$\eps=0.000005$ for undirected hypergraph applications.
To check the dispersion of these absolute differences around mean we calculate the standard
deviation, \IE, 
\begin{empheq}[box=\Ovalbox]{gather}
\Delta_{\mathrm{STD}}=
\!\!
\sqrt{
\frac{1}{|E^{(t)}|}  \times \!\!\! \sum_{e\in E^{(t)}} \!\!\! 
     \left( \Delta_{AVE} - |w^{(t+1)}(e)-w^{(t)}(e)| \right)^2
}
\label{eq-converg-2}
\end{empheq}

\subsection{\label{sec-quality}Quality Measures of Influential cores}

A {\em core} 
is a subset of nodes 
that are more connected to 
each other
as opposed to the rest of the hypergraph.
In this article, 
following the approach in~\cite{ADM14}
we also want our cores for hypergraphs to be \emph{central} and \emph{influential} in the sense that 
removal of the nodes in the core \emph{significantly} disrupts short paths 
between nodes \emph{not} in the core.
This leads to the following quality constraints and parameters for a core
that generalizes similar conventions used by the network science community
for graphs.

\subsubsection{\label{sec-conn}Connectivity Constraint}

For an undirected hypergraph 
(respectively, directed hypergraph) 
a core is a connected component
(respectively, a weakly connected component)
of the hypergraph.
This is a bare minimum constraint that a core should satisfy.

\subsubsection{\label{sec-size}Size Constraint}

The size (number of nodes) of a core should be \emph{non-trivial}, \IE, neither too small nor too large.
For example, a core containing more than $50\%$ of all nodes 
or containing only $5$ nodes 
is \emph{hardly} an interesting core.
For the real hypergraphs investigated in this paper, 
our algorithm \emph{always} produces 
only one or two cores of non-trivial size
(\emph{cf}.\ Table~\ref{tab-finmod-dir} and
Table~\ref{tab-finmod-undir}) in the sense that each of the remaining connected components 
has \emph{very} few nodes.

\subsubsection{\label{sec-cohes}Cohesiveness Measure}

This goal of quantifying this measure is to ensure that the nodes in the core should 
be connected more among themselves as opposed to nodes 
outside the core.
For our hypergraphs, we quantify this in the following manner.

\bigskip
\noindent
\textbf{Undirected hypergraph}
\medskip

For an undirected hypergraph $H=(V,E,w)$, a non-empty proper subset $V'$ of $V$ and 
a node $x\in V$, 
let the notation
$H\setminus V'$ denote the hypergraph $(V\setminus V',E',w)$ 
obtained by removing the nodes in $V'$ from $V$ and removing any hyperedge $e$ 
with 
$\cA_e\cap V' \neq\emptyset$
from $E$.
Let 
the notation
$
\deg_x(H) 
$
denote the degree
of $x$ in $H$.
For undirected graphs, 
several measures of cohesiveness have been used 
in prior published literatures~\cite{KOUJAKU2016143},
\EG, 
via \emph{distance}, 
via \emph{degree}, 
via \emph{density}, 
\emph{etc}.
However, most prior published articles dealing with finding cores or core-like structures  
for undirected hypergraphs
use a simple 
``degree'' cohesive measure by extending the 
concept of a $k$-core (or its minor variations)
from undirected graphs to undirected 
hypergraphs~\cite{doi:10.1137/22M1480926,pmlr-v84-chien18a,doi:10.1126-sciadv.adg9159,Carletti_2021,Erikss21,Krits24,doi:10.1080-01621459-2021-2002157,Eriksson2022,MIPB23,MIPB24,YiqunXu2023}.
In the notations and terminologies of this article, 
a $k$-core of
an undirected hypergraph $H=(V,E,w)$
is a subset of nodes $\cS$ such that 
$\deg_x(H \setminus (V\setminus \cS) )\geq k$
for every node $x\in\cS$; 
larger values of $k$ signify a better core.
In particular, a $1$-core trivially exists in any hypergraph and therefore is not considered a core at all.

We however believe that the above-mentioned $k$-core measure is not very suitable for the type of undirected 
hypergraphs studied in this article (namely, co-author hypergraphs (see Section~\ref{sec-coauth-data})
and similar other social interaction hypergraphs).
The condition of a $k$-core is too strict and changes the quality of the core too abruptly. 
For example, if a node $x$ in the core $\cS$ 
satisfies
$
\deg_x(H \setminus (V\setminus \cS) )= k-10
$
then $\cS$ defines a $(k-10)$-core even if every remaining node $y$ in $\cS$ satisfies 
$
\deg_y(H \setminus (V\setminus \cS) )=k
$.
Density-based measures centered on average degrees have been used 
extensively in the computer science literature for 
graphs~\cite{BONCHI202134,10.1145-3366423-3380140,doi:10.1137/1.9781611977073.64,10.14778/3554821.3554895,9201321,10184843,10.14778/3551793.3551826,10.1145/3483940,FPK01,10.1145/1806689.1806719}, and in one instance for undirected 
hypergraphs~\cite{10.1145/3485447.3512158}.
Following these research works, 
we consider a measure of cohesion based on 
average degrees.
Note that for the example mentioned above
the average degree changes from 
$\frac{k}{|\cS|}$ 
to 
$
\frac{k\times(|\cS|-1) + 1\times (k-10)}{|\cS|}
=
k-\frac{10}{|\cS|}
$,
showing a 
more gradual 
deterioration of the quality with an increasing value of $|\cS|$.
However, one should also account for
the nodes outside of $\cS$.
For example, for the node $x$ it is possible that 
either
\textbf{(\emph{a})}
$
\deg_x(H \setminus (V\setminus \cS) )
=
\deg_x(H)=k-10
$, 
or 
\textbf{(\emph{b})}
$
\deg_x(H \setminus (V\setminus \cS) )
=k-10
$
but 
$\deg_x(H)=k-\alpha$ for some $\alpha<10$. 
Our cohesiveness measure should indicate better cohesiveness for case 
\textbf{(\emph{a})}
as opposed to case  
\textbf{(\emph{b})}, 
and 
for case \textbf{(\emph{b})}
our cohesiveness measure should indicate worse cohesiveness with increasing $\alpha$.
Thus, we use the following measure for cohesiveness of a core $\cS$:
\begin{empheq}[box=\Ovalbox]{gather}
\mathscr{r}^{deg}
=
\frac {
\sum_{x\in\cS} \frac { \deg_x(H\setminus(V\setminus \cS) ) } { \deg_x(H) }
}
{ |\cS| }
\label{eq-r}
\end{empheq}
In the above equation, 
$
H\setminus(V\setminus \cS)
$
is the sub-hypergraph of $H$ induced by the nodes in $\cS$, 
$\deg_x(H\setminus(V\setminus \cS) )$
is the degree of node $x$ in this induced sub-hypergraph, 
the ratio 
${ \deg_x(H\setminus(V\setminus \cS) ) } / { \deg_x(H) }\in[0,1]$
provides a measure of how much the node $x$ is connected 
to only nodes in $\cS$ when we exclude all connections to 
nodes outside of $\cS$ via hyperedges, and 
the entire equation averages out the ratio over nodes in $\cS$.
Simple calculations show that 
for case \textbf{(\emph{b})}
of our example
$
\mathscr{r}^{deg}
=\frac{ (k-1)\times 1 + 1 \times \frac{k-10}{k-\alpha} } { k }
=
1 - \frac{10 - \alpha} {k\times (k-\alpha)}
$,
which decreases with increasing $\alpha$, as desired.
Note that 
obviously 
$0\leq \mathscr{r}^{deg} \leq 1$.
Any value of 
$\mathscr{r}^{deg}$
in the range $(\nicefrac{1}{2},1]$ 
with a statistical significance indicator $p$-value 
(see Section~\ref{sec-pcalc})
below $10^{-5}$ 
indicates a valid core
since in that case the nodes in the core on average are connected more to other
nodes in the core as opposed to nodes outside the core
and 
this property is not satisfied by the null hypothesis model; 
in other words, a core found by any algorithm will be considered to
be invalid if either
$\mathscr{r}^{deg}\leq 0.5$
or if the $p$-value 
is greater than or equal to $10^{-5}$. 
Higher values of
$\mathscr{r}^{deg}$
indicate better cohesiveness of the core.

\bigskip
\noindent
\textbf{Directed hypergraph}
\medskip

For a directed hypergraph $H=(V,E,w)$, a non-empty proper subset $V'$ of $V$ and 
a node $x\in V$, 
let the notation
$H\setminus V'$ denote the hypergraph $(V\setminus V',E',w)$ 
obtained by removing the nodes in $V'$ from $V$ and removing any hyperedge $e$ 
with 
$(\cTa_e\cup\cHe_e)\cap V'\neq\emptyset$
from $E$,
and 
the notations
$
\deg_x^{in}(H), 
\deg_x^{in}(H\setminus V'), 
\deg_x^{in}(out)
$
and $\deg_x^{out}(H\setminus V')$
denote the corresponding in-degrees and out-degrees
of $x$ in $H$ and $H\setminus V'$.
As we mentioned already, we could not find existing peer-reviewed published 
materials for finding cores in directed weighted or unweighted hypergraphs, and 
thus \emph{no} prior cohesive measures for directed hypergraphs were available.
Our cohesiveness quality measures for directed hypergraphs are a direct generalization 
the cohesiveness quality measure for undirected hypergraphs as stated in Equation~\eqref{eq-r}.
Due to the directionality of a hyperedge in a directed hypergraph, 
we get \emph{two} cohesive parameters.
For a directed hypergraph $H=(V,E,w)$
our cohesiveness measures 
for a core $\cS$ are 
the following two 
values:
\begin{empheq}[box=\Ovalbox]{gather}
\mathscr{r}_{\mathrm{in}}^{deg}
=
\frac {
\sum_{x\in\cS} \frac { \deg_x^{in}(H\setminus(V \setminus \cS) ) } { \deg_x^{in}(H) }
}
{ |\cS| }
\label{eq-rin}
\end{empheq}
\begin{empheq}[box=\Ovalbox]{gather}
\mathscr{r}_{\mathrm{out}}^{deg}
=
\frac {
\sum_{x\in\cS} \frac { \deg_x^{out}(H\setminus(V \setminus \cS) ) } { \deg_x^{out}(H) }
}
{ |\cS| }
\label{eq-rout}
\end{empheq}
In the above two equations,
$
H\setminus(V\setminus \cS)
$
is the directed sub-hypergraph of $H$ induced by the nodes in $\cS$, 
$\deg_x^{in}(H\setminus(V \setminus \cS) )$ (respectively, $\deg_x^{out}(H\setminus(V \setminus \cS) )$)
is the in-degree (respectively, out-degree) of node $x$ in this induced directed sub-hypergraph, 
the ratio ${ \deg_x^{in}(H\setminus(V \setminus \cS) ) } / { \deg_x^{in}(H) }\in[0,1]$
(respectively, the ratio ${ \deg_x^{out}(H\setminus(V \setminus \cS) ) } / { \deg_x^{in}(H) }\in[0,1]$)
provides a measure of how much 
the node $x$ is connected only to nodes in $\cS$ 
(respectively, nodes in $\cS$ are connected to the node $x$) 
when we exclude all connections to 
nodes outside of $\cS$ via directed hyperedges, and 
the entire equation averages out the ratio over nodes in $\cS$.
For example, 
$\frac { \deg_x^{in}(H\setminus(V \setminus \cS) ) } { \deg_x^{in}(H) }=0.7$
indicates that $70\%$ of the 
the in-degree of node $x$ is contributed by hyperedges that contain only nodes from $\cS$ both in their head and tail and 
only $30\%$ of the hyperedges contributing to the in-degree of $x$ have one or more outside nodes either in their head or in their tail.
Note that obviously 
$0\leq \mathscr{r}_{\mathrm{in}}^{deg},\, \mathscr{r}_{\mathrm{out}}^{deg} \leq 1$.
Again for a similar reason as in the undirected case, 
values of \emph{both} 
$\mathscr{r}_{\mathrm{in}}^{deg}$ and 
$\mathscr{r}_{\mathrm{out}}^{deg}$
in the range $(\nicefrac{1}{2},1]$ 
with a statistical significance indicator $p$-value 
(see Section~\ref{sec-pcalc})
below $10^{-5}$ 
indicate a valid core;
in other words, a core found by any algorithm will be considered to
be invalid if the values of at least one of 
$\mathscr{r}_{\mathrm{in}}^{deg}$
or
$\mathscr{r}_{\mathrm{out}}^{deg}$
is at most $0.5$ 
or if at least one of their $p$-values  
is greater than or equal to $10^{-5}$. 
Higher values of
$\mathscr{r}_{\mathrm{in}}^{deg}$ and 
$\mathscr{r}_{\mathrm{out}}^{deg}$
indicate better cohesiveness of the core.

\subsubsection{\label{sec-centrality}Centrality Measure}

These types of measures are a ``combinatorial analog'' of the \emph{information loss} quantifications 
used in prior research works such as~\cite{KITAZONO2020232}, 
and 
are significant in real-world applications to biological and social networks
as mentioned in prior publications such as~\cite{ADM14} in the context of undirected unweighted graphs.
For our hypergraphs, these measures quantify centrality and influential nature of the core in the sense that 
removal of the nodes in the core significantly disrupts short paths 
between nodes not in the core.
In the definitions below 
we use the notation $H\setminus V'$ as defined in 
Section~\ref{sec-cohes}.

\bigskip
\noindent
\textbf{Directed hypergraph}
\medskip

Let 
$H=(V,E,w)$ 
be the  directed hypergraph 
and 
$\emptyset\subset\cS\subset V$
be the core we are evaluating.
Our \emph{first} goal is to 
measure the fraction of ordered pairs of nodes for which there \emph{was} a path in the given input hypergraph but there 
\emph{no longer} is a path after removing the core. 
Let $\zeta$ denote the number of ordered pairs $(u,v)$ of nodes $u,v$ not in \emph{any} core
for which there was a (directed) path from $u$ to $v$ in $H$ but no (directed) path in $H\setminus\cS$.
We then calculate the 
following quantity:
\begin{empheq}[box=\Ovalbox]{gather}
\mathscr{r}_{\mathrm{disconnected}}^{\mathrm{ordered\_pairs}}
=
\frac{\zeta}{
|V\setminus \cS| \times
(|V\setminus \cS| -1)
}
\label{eq-dir-discon}
\end{empheq}
In addition, our \emph{second} goal is to 
measure the \emph{average percentage} increase in the length of paths among ordered pairs of nodes that remain connected
both before and after removing the core.
This is done as follows.
Let $\xi$ be the number of 
ordered pair $(u,v)$ of nodes $u,v$ not in \emph{any} core
for which there was a (directed) path from $u$ to $v$ in both $H$ and $H\setminus\cS$.
We then calculate the 
following quantity:
\begin{empheq}[box=\Ovalbox]{gather}
\mathscr{r}_{\mathrm{dist\_stretch}}^{directed}
=
\frac{1}{\xi} \times 
\hspace*{-0.6in}
	 \displaystyle
   \sum_{ \substack { 
	                   \,
										 (u,v)\in V\setminus\cS: 
										 \\ 
										 u\neq v 
										 \\
										 \hspace*{0.6in}
					            \dist_{H\setminus\cS}(u,v) < \infty
										 \\
										 \hspace*{0.5in}
					             \dist_{H}(u,v) < \infty
										 } } 
\hspace*{-0.45in}
	         \frac {
					        \dist_{H\setminus\cS}(u,v)
					       }
								 {
					        \dist_{H}(u,v)
								 }
\label{eq-dir-dist}
\end{empheq}
Note that every ordered pair of nodes from $V\setminus \cS$
appears in the calculation of either 
$\mathscr{r}_{\mathrm{disconnected}}^{\mathrm{ordered\_pairs}}$
or 
$\mathscr{r}_{\mathrm{dist\_stretch}}^{directed}$
but \emph{not} both, the reason being that 
incorporating the ordered pair of nodes used in~\eqref{eq-dir-discon}
in~\eqref{eq-dir-dist} instead 
would have make the value of 
$\mathscr{r}_{\mathrm{dist\_stretch}}^{directed}$
become $\infty$.
Note that 
$\mathscr{r}_{\mathrm{disconnected}}^{\mathrm{ordered\_pairs}}$ is at most $1$, 
and 
$\mathscr{r}_{\mathrm{dist\_stretch}}^{directed}$ is at least $1$
(since edge removal does not decrease the distance values).

The values of 
$\mathscr{r}_{\mathrm{dist\_stretch}}^{directed}$ and 
$\mathscr{r}_{\mathrm{disconnected}}^{\mathrm{ordered\_pairs}}$
are considered to be valid only if 
their statistical significance indicator $p$-values 
(see Section~\ref{sec-pcalc})
are below $10^{-5}$, and 
higher values of both 
$\mathscr{r}_{\mathrm{dist\_stretch}}^{directed}$ and 
$\mathscr{r}_{\mathrm{disconnected}}^{\mathrm{ordered\_pairs}}$ indicate stronger central and influential quality.
Note that a small value of 
$\mathscr{r}_{\mathrm{disconnected}}^{\mathrm{ordered\_pairs}}$
does not signify a weak-quality core as long as 
$\mathscr{r}_{\mathrm{dist\_stretch}}^{directed}$
is sufficiently large; 
however if 
$\mathscr{r}_{\mathrm{dist\_stretch}}^{directed}$
is not sufficiently large then 
$\mathscr{r}_{\mathrm{disconnected}}^{\mathrm{ordered\_pairs}}$
must be sufficiently large to signify the centrality of the core.
In this article, we adopt the strict criterion that 
for a valid core
either 
$\mathscr{r}_{\mathrm{dist\_stretch}}^{directed}$
must be at least $\nicefrac{3}{2}$ (\IE, shortest paths are stretched by at least $50\%$), or 
if 
$\mathscr{r}_{\mathrm{dist\_stretch}}^{directed}$
is below $\nicefrac{3}{2}$ 
then 
$\mathscr{r}_{\mathrm{disconnected}}^{\mathrm{ordered\_pairs}}$
must be at least $\nicefrac{1}{2}$
(\IE, at least $50\%$ of the ordered pairs of nodes are disconnected);
in other words, a core found by any algorithm will be considered to
be invalid if 
both 
$\mathscr{r}_{\mathrm{dist\_stretch}}^{directed}<\nicefrac{3}{2}$
and 
$\mathscr{r}_{\mathrm{disconnected}}^{\mathrm{ordered\_pairs}}<\nicefrac{1}{2}$.

\bigskip
\noindent
\textbf{Undirected hypergraph}
\medskip

Let 
$H=(V,E,w)$ 
be the undirected hypergraph 
and 
$\emptyset\subset\cS\subset V$
be the core we are evaluating.
Our \emph{first} goal is to 
measure the fraction of pairs of nodes for which there \emph{was} a path in the given input hypergraph but there 
\emph{no longer} is a path after removing the core. 
Let $\zeta$ denote the number of 
unordered pairs $\{u,v\}$ of nodes $u,v$ not in \emph{any} core
for which there is no path between them in $H\setminus\cS$.
We then calculate the 
following quantity:
\begin{empheq}[box=\Ovalbox]{gather}
\mathscr{r}_{\mathrm{disconnected}}^{\mathrm{unordered\_pairs}}
=
\frac{\zeta}{
\binom{|V\setminus \cS|}{2}
}
\label{eq-undir-discon}
\end{empheq}
In addition, our \emph{second} goal is to 
measure the \emph{average percentage} increase in the length of paths among unordered pairs of nodes that remain connected
both even after removing the core.
This is done as follows.
Let $\xi$ be the number of 
unordered pair $\{u,v\}$ of nodes $u,v$ not in \emph{any} core
for which there was still a path between $u$ to $v$ in $H\setminus\cS$.
We then calculate the 
following quantity:
\begin{empheq}[box=\Ovalbox]{gather}
\mathscr{r}_{\mathrm{dist\_stretch}}^{undirected}
=
\frac{1}{\xi} \times 
\hspace*{-0.6in}
	 \displaystyle
   \sum_{ \substack { 
										 \hspace*{0.1in}
	                   \{u,v\}\in V\setminus\cS: 
										 \\ 
										 u\neq v 
										 \\
										 \hspace*{0.6in}
					            \dist_{H\setminus\cS}(u,v) < \infty
										 } } 
\hspace*{-0.4in}
	         \frac {
					        \dist_{H\setminus\cS}(u,v)
					       }
								 {
					        \dist_{H}(u,v)
								 }
\label{eq-undir-dist}
\end{empheq}
Note that every (unordered) pair of nodes from $V\setminus \cS$
appears in the calculation of either 
$\mathscr{r}_{\mathrm{disconnected}}^{\mathrm{unordered\_pairs}}$ 
or 
$\mathscr{r}_{\mathrm{dist\_stretch}}^{undirected}$ 
but \emph{not} both, as 
incorporating the pair of nodes used in~\eqref{eq-undir-discon}
in~\eqref{eq-undir-dist} instead
would yield
$\mathscr{r}_{\mathrm{dist\_stretch}}^{undirected}=\infty$
Note that  
$\mathscr{r}_{\mathrm{disconnected}}^{\mathrm{unordered\_pairs}}$ is at most $1$, 
and
$\mathscr{r}_{\mathrm{dist\_stretch}}^{undirected}$ is at least $1$
(since edge removal does not decrease the distance values).
The values of 
$\mathscr{r}_{\mathrm{dist\_stretch}}^{undirected}$ and 
$\mathscr{r}_{\mathrm{disconnected}}^{\mathrm{unordered\_pairs}}$
are considered to be valid only if 
their statistical significance indicator $p$-values 
(see Section~\ref{sec-pcalc})
are below $10^{-5}$, and 
higher values of both 
$\mathscr{r}_{\mathrm{dist\_stretch}}^{undirected}$ and 
$\mathscr{r}_{\mathrm{disconnected}}^{\mathrm{unordered\_pairs}}$ indicate stronger central and influential quality.
Note that a small value of 
$\mathscr{r}_{\mathrm{disconnected}}^{\mathrm{unordered\_pairs}}$
does not signify a weak-quality core as long as 
$\mathscr{r}_{\mathrm{dist\_stretch}}^{undirected}$
is sufficiently large;
however if 
$\mathscr{r}_{\mathrm{dist\_stretch}}^{undirected}$
is not sufficiently large then 
$\mathscr{r}_{\mathrm{disconnected}}^{\mathrm{unordered\_pairs}}$
must be sufficiently large to justify centrality of the core.
In this article, we adopt the strict criterion that 
for a valid core
either 
$\mathscr{r}_{\mathrm{dist\_stretch}}^{undirected}$
must be at least $\nicefrac{3}{2}$ (\IE, shortest paths are stretched by at least $50\%$), or 
if 
$\mathscr{r}_{\mathrm{dist\_stretch}}^{undirected}$
is below $\nicefrac{3}{2}$ 
then 
$\mathscr{r}_{\mathrm{disconnected}}^{\mathrm{unordered\_pairs}}$
must be at least $\nicefrac{1}{2}$
(\IE, at least $50\%$ of pairs of nodes are disconnected);
in other words, a core found by any algorithm will be considered to
be invalid if 
both 
$\mathscr{r}_{\mathrm{dist\_stretch}}^{undirected}<\nicefrac{3}{2}$
and 
$\mathscr{r}_{\mathrm{disconnected}}^{\mathrm{unordered\_pairs}}<\nicefrac{1}{2}$.

\subsubsection{\label{sec-pcalc}Statistical Significance Measure:
Calculations of $p$-values for Core Quality Parameters}

Statistical significance ($p$-value) calculations for core quality parameters require 
a null hypothesis model corresponding a random hypergraph similar in some essential 
characteristics to the one studied. 
We explain below why two most common methods for generating random graphs 
used by the network science community for $p$-value calculations \emph{fail} to generalize
to hypergraphs:
\begin{description}
\item[Generative models]
The random graphs are generated so that they statistically match some key 
topological characteristics of the given graph such as 
node degree distributions for undirected graphs and 
distribution of in-degrees and out-degrees of nodes for directed graphs.
There are \emph{two} reasons that prevented us from using these methods for our hypergraphs.
First, there are \emph{no} broadly accepted evidences of topological characteristics 
such as degree distributions for metabolic and co-authorship hypergraphs.
Secondly, it is \emph{not} clear 
how we will generate random hypergraphs so that they statistically match key 
topological characteristics of the given hypergraph, \EG, the methods outlined 
in~\cite{LN08,NG04,comm1,comm2,comm5}
for generating random graphs with prescribed degree-distributions 
are \emph{not} easily generalizable to hypergraphs.
\item[Random-swap models]
For graphs, these kind of random graphs are generated using a 
Markov-chain algorithm~\cite{KTV99} by starting with the real graph and repeatedly swapping randomly chosen ``compatible''
pairs of edges.
However, it is not very clear if there is an useful generalization of this hypergraphs.
For graphs, an edge contributes exactly $1$ to the degrees of nodes at its endpoints, leading to
many compatible edges as candidates for swap and thus providing statistical validity of the model.
In contrast, hyperedges may contribute to the degree of an \emph{arbitrary} number of nodes in a more 
complicated fashion.
\end{description}
Based on the above observations, 
we design the following method to generate the $p$-values.
Let $H=(V,E,w)$ be the (directed or undirected) hypergraph, 
let $\cS\subset V$ be the core in question with 
$\alpha_1,\dots,\alpha_r$ being the values of its quality parameters
(for directed hypergraphs 
$\alpha_1,\alpha_2,\alpha_3,\alpha_4$ are the values of 
$\mathscr{r}_{\mathrm{in}}^{deg}$, 
$\mathscr{r}_{\mathrm{out}}^{deg}$, 
$\mathscr{r}_{\mathrm{dist\_stretch}}^{directed}$ 
and 
$\mathscr{r}_{\mathrm{disconnected}}^{\mathrm{ordered\_pairs}}$, respectively;
for undirected hypergraphs
$\alpha_1,\alpha_2,\alpha_3$ are the values of 
$\mathscr{r}^{deg}$,
$\mathscr{r}_{\mathrm{dist\_stretch}}^{undirected}$
and
$\mathscr{r}_{\mathrm{disconnected}}^{\mathrm{unordered\_pairs}}$, respectively).
We generate $100$ \emph{random} subsets of $V$, say 
$\cB_1,\dots,\cB_{100}$, 
such that 
$|\cB_1|=\dots=|\cB_{100}|=|\cS|$ 
and 
compute the values of 
$\beta_{i,j}$ for $i\in\{1,\dots,100\}$ and $j\in\{1,\dots,r\}$, 
where $\beta_{i,j}$ is the value of the $j\tx$ property for $\cB_i$.
We calculate the $p$-value for the $j\tx$ property 
by performing a \emph{one-sample} \texttt{t}-test 
with $\beta_{1,j},\dots,\beta_{100,j}$ as the values of the samples
and $\alpha_j$ as the value of the hypothesis. 

The $p$-value is a real number between $0$ and $1$; lower $p$-values indicate better 
statistical significance.
Following standard practice in network science, in this article we adopt a 
strict constraint on the acceptable $p$-values:
{\bf a $\pmb{p}$-value that is more than $\pmb{10^{-5}}$
even for a single quality measure for a core will invalidate the selection of that core.}

\subsection{\label{sec-data-all}Data Sources}

\subsubsection{Metabolic Systems (for Directed Hypergraphs)}

We collected seven metabolic systems from BiGG Models~\cite{king2016bigg}, 
a comprehensive public repository managed by the Systems Biology Research Group at UC San Diego. 
These seven metabolic systems pertained to the seven species 
Escherichia Coli,
Homo Sapiens,
Helicobacter Pylori,
Methanosarcina berkeri str.\ Fusaro,
Mycobacterium Tuberculosis, 
Synechococcus elongatus 
and
Synechocystis.

\subsubsection{Co-authorships data (for Undirected Hypergraphs)}
\newcommand{\csp}{{\sc Csp}}
\newcommand{\nsp}{{\sc Nsp}}

We build two undirected hypergraphs corresponding to two co-authorship datasets, 
which we will call the 
\emph{Computer Science Papers} (\csp) dataset and the \emph{Network Science Papers} (\nsp) dataset.
Each individual item in each dataset is a peer-reviewed publication in the respective (computer science 
or network science) research field.
Our datasets are constructed following similar approaches used by prior researchers 
such as~\cite{molontay2021twenty}.

\paragraph{\csp\ dataset} 
%
We selected three influential papers~\cite{Goldwasser1984ProbabilisticE,karp2010reducibility,nisan1994hardness}, 
by three \emph{Turing Award} winner researchers
S. Goldwasser, R. M. Karp and A. Wigderson working in the same general research area (\emph{Theoretical Computer Science}).
We then selected $300$ of the \emph{most} cited papers that cite each of these $3$ papers giving us a list of $900$ papers. 

\paragraph{\nsp\ dataset}
%
We selected three influential papers~\cite{BA99,comm5,WS:1998}
in network science that have been used by previous researchers for related research works~\cite{molontay2021twenty}.
We then selected $200$ of the \emph{most} cited papers that cite each of these $3$ papers giving us a list of $600$ papers. 

We faced a situation regarding co-authorships in network science that is usually not encountered
in computer science, mathematics or theoretical physics: 
there are papers co-authored by a \emph{large} number of authors. 
For example, 
there are $355$ and $141$ co-authors (after including corresponding consortium authors) 
respectively 
in the following two papers: 
\textbf{(\emph{a})}
\emph{Gene expression imputation across multiple brain regions provides insights into schizophrenia risk}, 
Nature Genetics 51, 659--674, 2019, 
and 
\textbf{(\emph{b})}
\emph{Brain structural covariance networks in obsessive-compulsive disorder: 
a graph analysis from the ENIGMA Consortium},
Brain 143(2), 684--700, 2020.
These kind of papers act as a bottleneck in the calculation of the Ricci curvature via 
Equation~\eqref{eq2} that was adopted from~\cite{CDR23}
(graph-theoretic methods as illustrated in \FI{fig333}
will \emph{not} be helpful either since they will include all or almost all of the $355$ or $141$ 
authors in the core).

Fortunately, there are \emph{only} $18$ of the $600$ papers in our collection ($3\%$ of all the papers) that were
co-authored by $15$ or more authors. 
We removed these papers from our dataset.

\section{Results and Discussions}

Before presenting our results in details in the following subsections, we 
provide a brief \emph{synopsis} of them below:
\begin{enumerate}[label=$\triangleright$]
\item
In Section~\ref{sec-algm-overall} we present our algorithm for finding core(s) 
in a hypergraph.
\item
In Section~\ref{hyper-construct} we show how to construct 
our directed and undirected hypergraphs from the corresponding datasets mentioned 
in Section~\ref{sec-data-all}.
\item
In Section~\ref{sec-res-normprob}
we prove a theorem
showing that 
there are infinitely many graphs for which 
the iterative process in~\eqref{eq-ricci-norm}
is \emph{impossible} to execute beyond the first step.
\item
In Section~\ref{sec-res-conv}
we show that the initial convergence of Ricci flows 
based on the values of 
$\Delta_{\mathrm{AVE}}$
occurs within a \emph{small} number of iterations for all 
our hypergraphs. 
We then show that increasing the number of iterations further 
improves the values of 
$\Delta_{\mathrm{STD}}$
thereby making the cores \emph{more} stable.
\item
In Section~\ref{sec-finmod}
we provide the final cores with their quality parameter values for all hypergraphs
as determined by our algorithm and observe that these values \emph{are} satisfactory.
\begin{enumerate}[label=$\triangleright$]
\item
In Subsection~\ref{sec-finmod-dir-int}
we discuss how to \emph{interpret} the cores and its quality parameters for the seven metabolic systems
corresponding to the seven directed hypergraphs.
We also briefly  
comment here on optimally designing the biological experimental mechanism to remove a core
in subsection~\ref{sec-rem-cor}.
\item
In Subsection~\ref{sec-finmod-undir-int}
we discuss how to \emph{interpret} the cores and its quality parameters for the 
two co-authorship data corresponding to the two undirected hypergraphs.
\end{enumerate}
\end{enumerate}

\subsection{\label{sec-algm-overall}Overall Algorithmic Approach}

\begin{table*}
\caption{\label{algo-mod}High-level overview of our algorithmic approach to find core(s) in a directed or undirected 
hypergraph. 
See Section~\ref{sec-algm-overall} for comments on the adjustable parameters.
In our experiments, $\eta=40$, $\tau=2$, $\kappa=2$ and $\delta=8$.} 
\begin{tabular}{r l l}
\toprule
\toprule
\textbf{Input:} & 
   \multicolumn{2}{l}{A directed or undirected hypergraph $H=(V,E,w)$.}
\\
\begin{tabular}{r}
\textbf{Adjustable}
\\
\textbf{parameters} 
\\
\end{tabular}:
              &  $\eta$, $\kappa$, $\tau$, $\delta$
\\
\textbf{Output:} & 
   \multicolumn{2}{l}{A set of $\mu\leq\kappa$ mutually disjoint node subsets (cores) $\cS_1,\dots,\cS_\mu\subset V$.}
\\ 
\midrule
\midrule
$1$. & 
   \multicolumn{2}{l}{
		 Starting with $H$, perform Ricci flow iterations 
		 (see Equation~\eqref{eq-ricci} and Section~\ref{sec-flow-surg}) 
		 for a suitable 
	   }
\\
     & 
   \multicolumn{2}{l}{
		 large number 
		 $\eta$ of steps 
		 such that the edge-weights has converged at the end of iterations
     (see
	   }
\\
     & 
   \multicolumn{2}{l}{
		 Equation~\eqref{eq-converg} and Section~\ref{sec-flow-surg}).
	   }
\\
$1.1$. & 
   \multicolumn{2}{l}{
	   \begin{tabular}{l}
		 \hspace*{0.3in}
		 $\bullet$
		 Perform topological surgery with ``surgery amount'' $\delta\%$ after every $\tau$ iterations
		 \\
		 \hspace*{0.4in}
		 (see Section~\ref{sec-flow-surg}). 
	   \end{tabular}
	   }
\\
$1.2$. & 
   \multicolumn{2}{l}{
	   \begin{tabular}{l}
		 \hspace*{0.3in}
		 $\bullet$
		 Perform edge-weight normalization before the start of the next iteration
		 \\
		 \hspace*{0.4in}
		 (see Section~\ref{sec-flow-surg}). 
	   \end{tabular}
	   }
\\
$2$. & 
   \multicolumn{2}{l}{
	 If $G$ is an undirected hypergraph (respectively, directed hypergraph) 
	 then output up to $\kappa$ 
	   }
\\
     & 
   \multicolumn{2}{l}{
	 connected 
	 (respectively, weakly connected) components of $G$
	 that best satisfy the quality 
	   }
\\
     & 
   \multicolumn{2}{l}{
	 parameters
	 for the particular application 
	 (see Section~\ref{sec-quality}). 
	   }
\\ 
\bottomrule
\bottomrule
\end{tabular}
\end{table*}

Based on our discussions in Sections~\ref{sec-def-curv}--\ref{sec-quality}, we 
design an algorithm for finding core(s) in a (directed or undirected) hypergraph 
whose high-level overview is presented in Table~\ref{algo-mod}. 
Below we provide brief comments on the adjustable parameters in 
Table~\ref{algo-mod}:
\begin{enumerate}[label=$\triangleright$]
\item
$\eta$ controls the \emph{number} of iterations.
Selecting larger $\eta$ will make the algorithm slower but is likely to generate 
smaller cores, although smaller cores may \emph{not} necessarily have better values of
other quality parameters. We recommend selecting $\eta$ sufficiently high to ensure that
the diffusion process has actually converged in several successive iterations 
based on the value of 
$\Delta_{\mathrm{AVE}}$ (and, optionally, $\Delta_{\mathrm{STD}}$)
and that the quality parameters are within acceptable bounds.
\item
$\kappa$ controls the \emph{number} of cores selected for further analysis. We suggest a \emph{small}
value for this parameter.
\item
$\delta$ and $\tau$ control the \emph{frequency} and the \emph{amount} of the topological surgery
operation, respectively.
Selecting larger values of $\delta$ and smaller values of 
$\tau$ may help with faster convergence, but may also 
end up providing smaller cores by subdividing them. 
\end{enumerate}
%
Readers, especially from the algorithms or the computational complexity community,
may be curious to have an expression in the theoretical worst-case running time 
in the big-O notation
for the above algorithm.
Unfortunately, it does not seem possible to derive such an expression that would provide 
meaningful bound to the reader since it involves too many parameters for the hypergraph. 
We illustrate this point for an undirected hypergraph $H=(V,E)$.
Let 
\textsc{Time}$_{\!\!\text{\tiny\emd}}\!\!(p)$
denote the running time for solving 
the Earth Mover's distance on a hypergraph with $p$ nodes;
note that obviously 
\textsc{Time}$_{\!\!\text{\tiny\emd}}\!\!(p)=\Omega(p)$.
Consider a hyperedge $e\in E$. For every pair of nodes $x,y\in\cA_e$, 
the total time taken to compute 
$\PPL^x(y)$, $\PPR^x(y)$, 
$\PPL^y(x)$ and $\PPR^y(x)$
is 
$
O \left(
\sum_{e':x\in\cA_{e'}} |\cA_{e'}| + 
\sum_{e':y\in\cA_{e'}} |\cA_{e'}|
\right)
$, and the time to compute the corresponding 
$\mbox{\emd}_H(\PPL^p,\PPR^q)$ value is 
\textsc{Time}$_{\!\!\text{\tiny\emd}}\!\!
\left(
\sum_{e':x\in\cA_{e'}} |\cA_{e'}| + 
\sum_{e':y\in\cA_{e'}} |\cA_{e'}|
\right)$.
Summing over all pairs of nodes in $e$ and then summing over all hyperedges gives 
us the following time bound for one iteration of Ricci flow:
\begin{gather*}
O
\left(
\!\!
\!\!
\text{\textsc{Time}}_{\!\!\!\!\text{\tiny\emd}}
\!\!
\!
\left(
\sum_{e\in E}
\sum_{x,y\in \cA_e}
\left(
\sum_{e':x\in\cA_{e'}} 
\!\!
|\cA_{e'}| + 
\!\!
\sum_{e':y\in\cA_{e'}} 
\!\!
|\cA_{e'}|
\right)
\right)
\right)
\end{gather*}
The above bound depends in a non-trivial manner on the frequencies of nodes and 
the lengths of hyperedges, and further simplification is not possible without 
making additional assumptions.
For the very special case when every node occurs in \emph{exactly} $f$ hyperedges 
and every hyperedge has \emph{exactly} the same length $\ell$,
letting $n$ denote the number of nodes in the hypergraph
the above running time can be simplified to
$
O
\left(
\!\!
\text{\textsc{Time}}_{\!\!\!\!\text{\tiny\emd}}
\!\!
\left(
\frac{n\,f}{\ell}
\times 
\ell^2 
\times 
f
\right)
\right)
=
O
\left(
\!\!
\text{\textsc{Time}}_{\!\!\!\!\text{\tiny\emd}}
\!\!
\left(
n\,f^2\,\ell
\right)
\right)
$.
A somewhat more complicated expression for the running time for 
directed hypergraphs can also be calculated in a similar manner.

\paragraph{Implementation and source codes}

We implemented our algorithm in python. 
The linear program for calculation of \emd\ was solved using the python library of the Gurobi Optimizer
whose $\LP$ solver is known for its superior performance, often solving optimization models faster than 
other $\LP$ solvers in the industry (we used the free academic license to use the optimizer for our work).
The source codes for our implementation 
are freely available via GitHub at the link \url{https://github.com/iamprith/Ricci-Flow-on-Hypergraphs}.

\subsection{\label{hyper-construct}Hypergraph Construction}

\subsubsection{\label{sec-data-metabolic}Directed Hypergraphs}

\begin{table*}
\caption{\label{tab:hypergraph_data}Some first-order statistics of constructed 
directed hypergraphs from metabolic systems in~\cite{king2016bigg}.} 
\begin{tabular}{l @{\hskip 0.3in} c @{\hskip 0.3in} c @{\hskip 0.3in} 
             c c c  c c c }\toprule
\multicolumn{1}{c}{} 
      & \multicolumn{8}{c}{Constructed directed hypergraph $(V,E,w)$}
\\ \cmidrule{2-9}
\multicolumn{1}{c}{} 
      &      &       &
      \multicolumn{3}{c|}{in-degree}
			&
      \multicolumn{3}{c}{out-degree}
\\ 
\multicolumn{1}{c}{Name of metabolic system} 
      & $|V|$ & $|E|$ &
			average & max & \multicolumn{1}{c|}{min} & 
			average & max & min 
\\ \midrule
Escherichia Coli & $762$ & $1335$ 
     & $3.63$ & $380$ & $0$ & $3.56$ & $231$ & $0$ \\ 
\midrule
Homo Sapiens & $343$ & $672$ 
     & $3.58$ & $165$ & $0$ & $3.27$ & $114$ & $1$ \\ 
\midrule
Helicobacter Pylori & $486$ & $740$ 
     & $3.20$ & $191$ & $0$ & $3.10$ & $129$ & $0$ \\ 
\midrule
Methanosarcina berkeri str. Fusaro & $629$ & $905$ 
     & $3.22$ & $255$ & $0$ & $3.09$ & $167$ & $0$ \\ 
\midrule
Mycobacterium Tuberculosis & $826$ & $1297$ 
     & $3.68$ & $364$ & $0$ & $3.49$ & $253$ & $0$ \\ 
\midrule
Synechococcus elongatus & $769$ & $849$ 
     & $3.09$ & $244$ & $0$ & $2.97$ & $193$ & $0$ \\ 
\midrule
Synechocystis & $796$ & $863$
     & $3.00$ & $277$ & $0$ & $3.26$ & $222$ & $0$ 
\\ \bottomrule
\end{tabular}
\end{table*}

\begin{figure}
\caption{\label{fig222}A visual illustration of the hypergraph-theoretic representation 
(Section~\ref{sec-data-metabolic})
vs.\ two common graph-theoretic representations of biochemical reactions.
If the reaction times are known accurately then they can be used as the weights of the 
corresponding hyperedges.}
\includegraphics[width=0.45\textwidth]{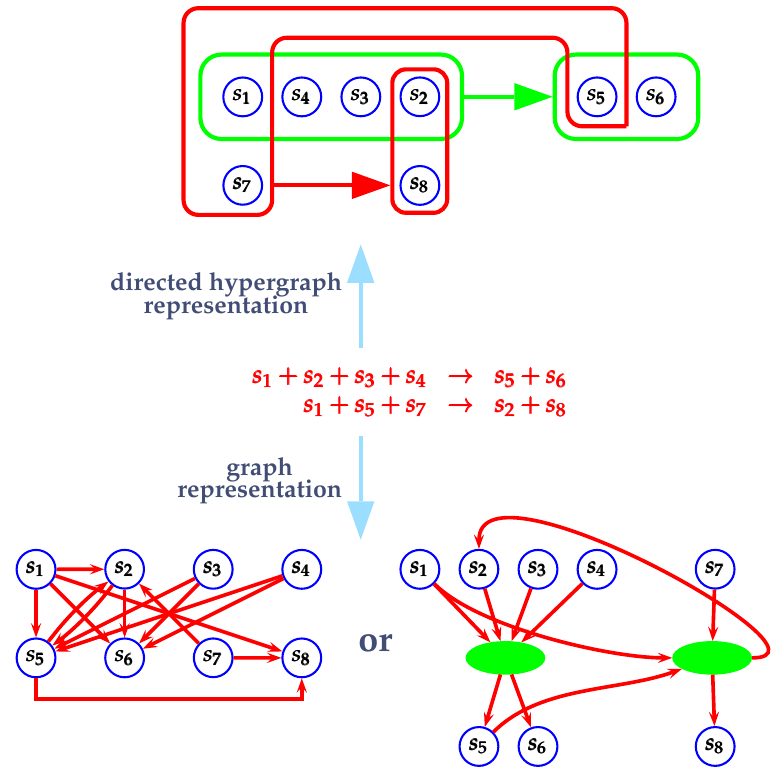}
\end{figure}  

We modeled various metabolic and biochemical reactions as directed hypergraphs. 
Each reaction in the data 
had always at least one reactant but sometimes did not have a product.
We represent a reaction of the form 
``$R_1+\cdots+R_k\to P_1+\cdots+P_\ell$'',
where $R_i$'s are the reactants, $P_j$'s are the products,
as a directed hyperedge $e$ (see \FI{fig222})
with 
$\cTa_e=\{R_1,\dots,R_k\}$ 
and
$\cHe_e=\{P_1,\dots,P_\ell\}$\footnote{{\emph{If the reaction times are known accurately then they can be used as the 
weights of the corresponding hyperedges}}.}.
In other words, each hyperedge points from the set of reactants to the set of 
products of the corresponding reaction.
For reactions of the form 
``$R_1+\cdots+R_k\to$'' without a product
we use 
$\cHe_e=\{\mathsf{sink}\}$
for a unique node named 
``$\mathsf{sink}$'' 
following the \emph{same} conventions used in the network science literature 
(note that there is exactly \emph{one} $\mathsf{sink}$ node in the entire hypergraph 
and it does \emph{not} appear in $\cTa_f$ for any hyperedge $f$ in the hypergraph). 
\emph{All} the directed hypergraphs that we construct are weakly connected.
We computed 
some first-order statistics of the constructed directed hypergraphs
as shown in Table~\ref{tab:hypergraph_data}.
We show in \FI{fig222}
a visual comparison of our hypergraph-theoretic representation 
with two common graph-theoretic representations of biochemical reactions.

\subsubsection{\label{sec-coauth-data}Undirected Hypergraphs}

\begin{figure}
\caption{\label{fig333}A visual illustration of the hypergraph-theoretic representation 
(Section~\ref{sec-coauth-data})
vs. graph-theoretic representation of co-authorships.}
\includegraphics[width=0.4\textwidth]{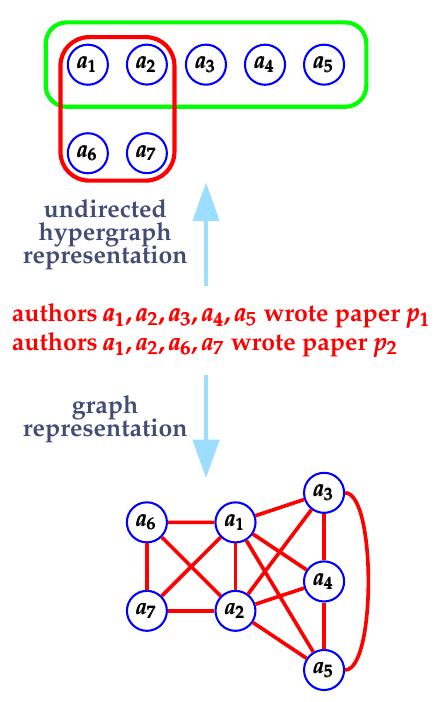}
\end{figure}  

The hypergraph has a node corresponding to every authors and 
an undirected hyperedge $e$ with $\cA_e=\{a_1,\dots,a_k\}$
corresponding to 
each paper co-authored by authors $a_1,\dots,a_k$. 
For the \csp\ dataset,
we build an undirected hypergraph out of the $900$ papers and take the largest connected component
as our input undirected hypergraph.
For the \nsp\ dataset,
we found the largest connected component in the 
resulting undirected hypergraph of $582$ papers, and took that connected component as our input undirected hypergraph.
%
We computed 
some first-order statistics of the constructed undirected hypergraphs
as shown in Table~\ref{tab:undir-hypergraph_data}.
We show in \FI{fig333}
a visual comparison of our hypergraph-theoretic representation 
with a common graph-theoretic representation of co-authorship relationships.

\begin{table}
\caption{\label{tab:undir-hypergraph_data}Some first-order statistics of constructed 
undirected hypergraphs in Section~\ref{sec-coauth-data}.
} 
\begin{tabular}{l @{\hskip 0.3in} c @{\hskip 0.3in} c @{\hskip 0.3in} c c c }\toprule
\multicolumn{1}{c}{} 
      & \multicolumn{5}{c}{
			    \begin{tabular}{c}
			    Constructed undirected 
					\\
					hypergraph $(V,E,w)$
			    \end{tabular}
					}
\\ \cmidrule{2-6}
      &      &       &
      \multicolumn{3}{c}{degree}
\\ 
      & $|V|$ & $|E|$ &
			average & max & min 
\\ \midrule
\csp\ dataset &
    $496$ & $609$ & $2.97$  & $39$ & $1$ 
\\ \midrule
\nsp\ dataset &
    $518$ & $213$ &   $1.61$ & $20$  &  $1$
\\ \bottomrule
\end{tabular}
\end{table}

\subsection{\label{sec-res-normprob}Proof of Inapplicability of Normalized Ricci 
Flow equation~\eqref{eq-ricci-norm} to graphs}

\begin{figure*}
\caption{\label{fig111}The graph $G_n$ used in the proof of Theorem~\ref{thm-ex}.}
\includegraphics[width=\textwidth]{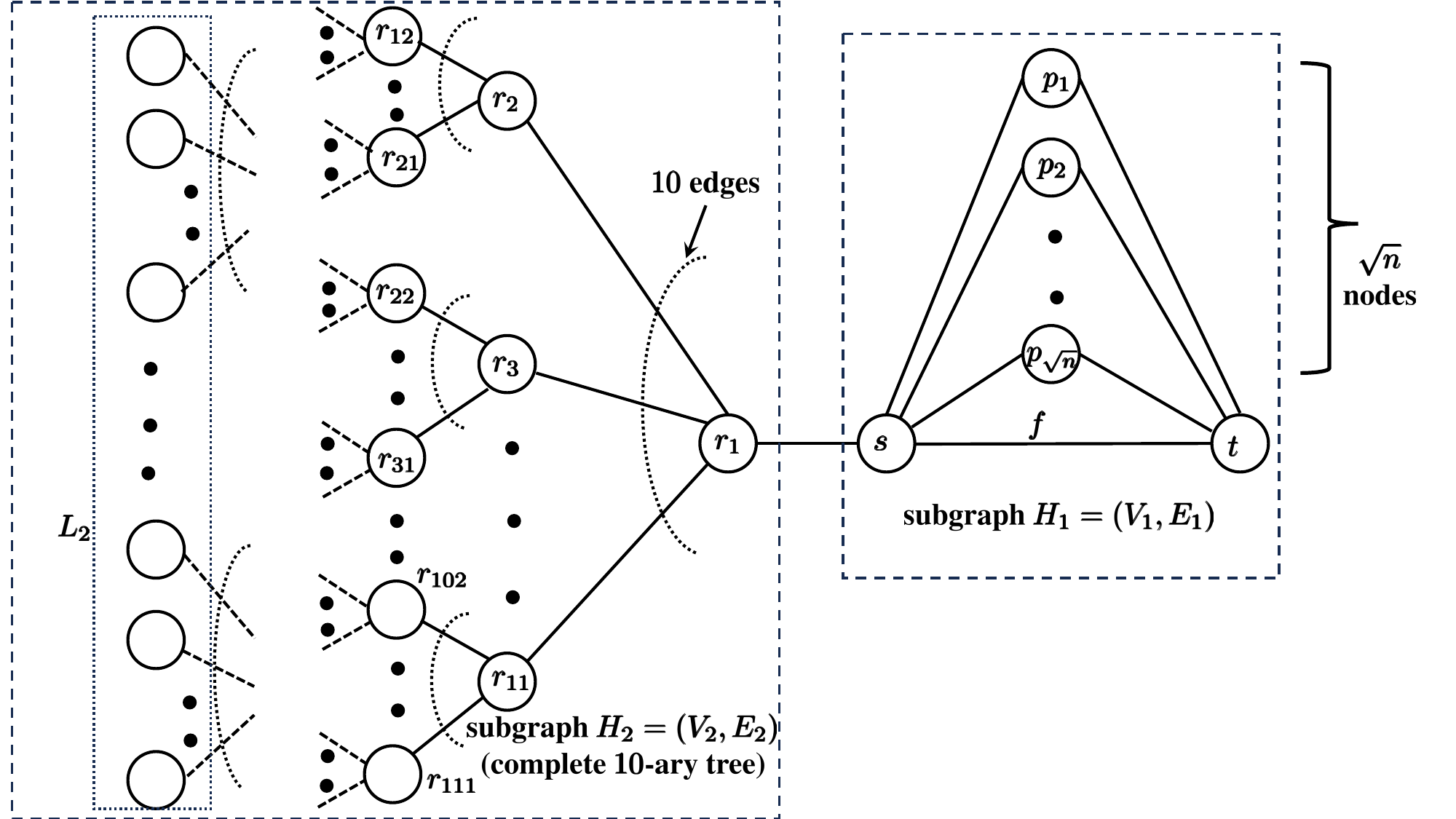}
\end{figure*}  

The following theorem shows that there are infinitely many graphs for which the normalized Ricci 
flow equation~\eqref{eq-ricci-norm} will make $w^{(1)}(f)$ negative for some edge $f$ 
thus rendering the iterative process in~\eqref{eq-ricci-norm}
\emph{impossible} to execute beyond the first step.

\begin{theorem}\label{thm-ex}
For all sufficiently large $n$,
there exists an undirected graph $G_n$ on $n$ nodes for which 
$w^{(1)}(f)<0$ for some edge $f$ of $G_n$.
\end{theorem}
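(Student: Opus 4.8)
The plan is to show that the pathology already surfaces at the very first iteration, where \eqref{eq-ricci-norm} degenerates into a transparent rescaling. At $t=0$ every quantity on the right-hand side is an \emph{initial} quantity, so, writing $\bar\kappa=\bigl(\sum_{h\in E}w^{(0)}(h)\,\mbox{\ric}^{(0)}(h)\bigr)\big/\bigl(\sum_{h\in E}w^{(0)}(h)\bigr)$ for the weight-averaged curvature (the normalization whose stated role is to keep the total edge weight fixed), the update collapses to $w^{(1)}(e)=w^{(0)}(e)\,\bigl(1-\mbox{\ric}^{(0)}(e)+s\,\bar\kappa\bigr)$ for every edge $e$. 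Since $w^{(0)}(e)>0$, the sign of $w^{(1)}(e)$ equals the sign of the scalar $1-\mbox{\ric}^{(0)}(e)+s\,\bar\kappa$, which is negative for a distinguished edge $f$ precisely when $\mbox{\ric}^{(0)}(f)-\bar\kappa>1/s$. Because \eqref{eq2} forces $\mbox{\ric}^{(0)}(\cdot)\le 1$ (the subtracted \mbox{\emd} term is nonnegative), the cleanest way to meet this inequality robustly in $s$ is to make $\mbox{\ric}^{(0)}(f)$ close to its ceiling $1$ while driving the average curvature $\bar\kappa$ far below $0$; my task therefore reduces to building a graph with one almost-maximally curved edge inside an ocean of strongly negatively curved edges.

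First I would exploit a simplification specific to graphs: under \eqref{eq2} the walk distribution $\PPL^{x}$ places $\alpha$ at $x$ and spreads $1-\alpha$ \emph{uniformly} over the edges at $x$, independently of the edge weights, so weights affect the curvature only through the distances $\dist_H$ entering \mbox{\emd}. For the high-curvature edge I would take two vertices $u,v$ joined by an edge $f$ of \emph{tiny} weight and sharing a large common neighborhood (a $K_{2,m}$-type gadget with the extra edge $f$); then $\PPL^{u}$ and $\PPL^{v}$ coincide on the shared neighbors and differ only by a small amount of mass that can be shipped across $f$ at cost proportional to $\dist_H(u,v)\to 0$, so $\mbox{\emd}_H(\PPL^{u},\PPL^{v})\to 0$ and $\mbox{\ric}^{(0)}(f)\to 1$. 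For the negatively curved bulk I would attach a growing subgraph on $\Theta(n)$ vertices (a long path, or many pendant gadgets) every edge of which carries a very large weight $W$; for such an edge $h=\{x,y\}$ the two endpoint neighborhoods lie a distance $\Theta(W)$ apart, so $\mbox{\emd}_H(\PPL^{x},\PPL^{y})=\Omega(W)$ and $\mbox{\ric}^{(0)}(h)=-\Omega(W)$, whence each heavy edge contributes $w^{(0)}(h)\,\mbox{\ric}^{(0)}(h)=-\Omega(W^{2})$. Summing over the $\Theta(n)$ heavy edges and dividing by the total weight $\Theta(nW)$ gives $\bar\kappa=-\Theta(W)$.

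Combining the two estimates, for any fixed step size $s>0$ I would fix $W=W(s)$ large enough that $s\,|\bar\kappa|=\Theta(sW)$ exceeds $1-\mbox{\ric}^{(0)}(f)$, which is $o(1)$ and independent of $W$ and $n$; then $1-\mbox{\ric}^{(0)}(f)+s\,\bar\kappa<0$, so $w^{(1)}(f)<0$, and this persists for every sufficiently large $n$ once the heavy part has enough vertices to realize the bound. I would also record that the denominator $\sum_{h}w^{(0)}(h)\,\mbox{\ric}^{(0)}(h)$ is then large and negative, hence nonzero, so \eqref{eq-ricci-norm} is genuinely well-defined at $t=0$ and the defect is the \emph{sign} of $w^{(1)}(f)$ and not an undefined expression. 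I expect the main obstacle to be the two \mbox{\emd} estimates: \mbox{\emd} is a global optimal-transport cost fixed by \emph{all} weighted shortest-path distances, so I must check both that the huge peripheral weights create no shortcut shrinking the local distances around $f$ (preserving $\mbox{\ric}^{(0)}(f)\to1$) and that transported mass is genuinely forced across the heavy edges (lower-bounding their \mbox{\emd}); pinning these transport costs down exactly rather than up to constants is the delicate, calculation-heavy step, and is presumably where the explicit $G_n$ of Figure~\ref{fig111} is engineered so that every distance and curvature is computed in closed form.
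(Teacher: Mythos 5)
Your proposal is correct, but it takes a genuinely different route from the paper's proof. The paper works with an \emph{unweighted} graph (every $w^{(0)}(e)=1$): it glues a theta-type gadget, in which the two endpoints of the bad edge $f$ share $\sqrt{n}$ common neighbors (so the curvature of $f$ is at least $1-3/(\sqrt{n}+3)=1-o(1)$), onto a complete $10$-ary tree filling the remaining nodes, whose leaf and internal edges have curvature at most $-11/24$ and $-1/4$ respectively; the average curvature over all $m\approx n$ edges is then at most a negative \emph{constant}, which beats the $o(1)$ gap $1-\mathfrak{C}_{G_n}(f)$. Because unit weights cap how negative the average curvature can get, the paper is forced into careful constant-tracking (total-variation estimates, the cited transport propositions, leaf counts of the tree). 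You instead exploit the freedom of the edge weights: a tiny-weight edge inside a shared-neighborhood gadget has curvature at least $1-w^{(0)}(f)$, while $\Theta(n)$ edges of huge weight $W$ each have curvature $-\Omega(W)$, so the weight-averaged curvature $\bar\kappa$ is $-\Theta(W)$, unbounded below, and any fixed step size $s>0$ is defeated by choosing $W$ large. Your route buys robustness --- only order-of-magnitude transport bounds are needed and the role of $s$ is transparent --- while the paper's route buys the stronger conclusion that the normalization already fails on unit-weight graphs, so the defect cannot be blamed on extreme initial weights. Two small points. First, both arguments must read the normalizing denominator of \eqref{eq-ricci-norm} as the total initial weight $\sum_{h}w^{(0)}(h)$; this is what your $\bar\kappa$ does and what the paper's $\frac{s}{m}\sum_h$ term does, whereas under the denominator as literally typeset the $t=0$ correction collapses to $s\,w^{(0)}(e)$ and no graph could satisfy the theorem, so this interpretive choice deserves to be stated explicitly. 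Second, your intermediate claim that $w^{(1)}(f)<0$ holds ``precisely when'' the curvature of $f$ minus $\bar\kappa$ exceeds $1/s$ is mis-algebraized --- the correct equivalence is $s\bar\kappa<\mathrm{Ric}^{(0)}(f)-1$ --- but this slip is harmless, since your final verification uses the correct inequality $s|\bar\kappa|>1-\mathrm{Ric}^{(0)}(f)$ together with $\bar\kappa<0$, which your construction satisfies with room to spare.
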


\begin{proof}
For convenience and ease of proof, we first explicitly state the standard definition of 
the Ricci curvature for an undirected graph $G=(V,E,w)$ where the weight $w(e)$ is $1$
for every edge $e\in E$~\cite{LBL22,Oll11,Oll09,Oll10,Oll07,ASD20,DGM23}. 
Consider an edge $e=\{u,v\}\in E$ of our input (undirected unweighted) graph $G=(V,E)$. 
For a node $u$ of $G$, 
let
$\nbr_G(u)=\{u\} \bigcup \{ v \,|\, \{u,v\}\in E\}$
and 
$\deg_G(u)= |\, \nbr_H(u) \setminus \{u\} \,|$ 
denote 
the \emph{closed} neighborhood
and the \emph{degree} of $u$
in $G$, respectively.
Let 
$\PP_{\nbr_G(u)}$ and $\PP_{\nbr_G(v)}$
denote the two uniform distributions over the nodes in 
$\nbr_G(u)$ and $\nbr_G(v)$, respectively.
Extend the distributions 
$\PP_{\nbr_G(u)}$ and $\PP_{\nbr_G(v)}$
to all nodes in $G$ by assigning zero probabilities to nodes in 
$V\setminus\nbr_G(u)$ and 
$V\setminus\nbr_G(v)$, respectively.
The Ollivier-Ricci curvature 
$\mathfrak{C}_{G}(e)$ 
of the edge $e=\{u,v\}$ is then defined as
\begin{gather}
\mathfrak{C}_{G}(e)= 1 - \text{\emd}_{H}(\PP_{\nbr_G(u)},\PP_{\nbr_G(v)})
\label{eq10}
\end{gather}
where we use the same calculations of \emd\ as in Section~\ref{sec-def-curv}.

We will show the graph $G_n=(V,E,w)$ of $m$ edges where $w(e)=1$ 
for every edge $e$
(and thus the sum of all edge weights is $m$).
For this case, for $t=0$ and an edge $e$ of $G_n=G_n^{(0)}=(V^{(0)},E^{(0)},w^{(0)})$ equation~\eqref{eq-ricci-norm} simplifies to 
\begin{gather*}
w^{(1)}(e)
=
1 - 
\mathfrak{C}_{G_n}(e)
+
\frac{s}{m}
\sum_{h\in E} 
\mathfrak{C}_{G_n}(h)
\end{gather*}
It thus suffices to show the graph $G_n$ with an edge $e$ satisfying 
$
\mathfrak{C}_{G_n}(e)
-
\frac{s}{m}
\sum_{h\in E} 
\mathfrak{C}_{G_n}(h)
>1
$.
We show this by showing a graph $G_n$ of $n$ nodes in which 
$
\mathfrak{C}_{G_n}(e) 
\geq 1 - o(1)$ 
and 
$
\frac{s}{m}
\sum_{h\in E} 
\mathfrak{C}_{G_n}(h)
\leq 
-\eps
$
for some positive \emph{constant} $\eps>0$.
Our graph $G_n$, as shown in \FI{fig111}, 
consists of two subgraphs $H_1=(V_1,E_1)$ and $H_2=(V_2,E_2)$ connected by an edge, where 
$H_1$ has $n_1=\sqrt{n}+2$ nodes and $m_1=2\sqrt{n}+1$ edges 
$H_2$ is a complete $10$-ary tree
having 
$n_2=n-n_1$ nodes and $m_2=n_2-1=n-\sqrt{n}-3$ edges.
(since the graph is unweighted, we omit mentioning the weights of the edges).
Thus, the total number of edges of $G_n$ is $m=m_1+m_2+1=n+\sqrt{n}+2$.
Let $L_2$ be the set of leaf nodes of $H_2$.
We calculate the number of leaves $\ell_2 = |L_2|$ of $H_2$ in the following simple manner. 
Letting $k$ be the depth of $H_2$, we have 
$\sum_{j=0}^k 10^j =n_2$.
This gives the number of leaves $\ell_2$ of $H_2$ as 
$\ell_2=10^k=\frac{9n_2+1}{10}=\frac{9n-9\sqrt{n}-17}{10}$.
The edge $f=\{s,t\}$ shown in \FI{fig111} is the edge for which we will show that 
$w^{(1)}(f)<0$.

Consider any edge 
$e=\{u,v\}\in E$ of $G_n$ 
and the associated distributions 
$\PP_{\nbr_{G_n}(u)}$ and $\PP_{\nbr_{G_n}(v)}$
as defined before.
The (standard) \emph{total variation distance} 
$|| \PP_{\nbr_{G_n}(u)} - \PP_{\nbr_{G_n}(v)} ||_{\mathrm{TVD}}$
between the two distributions 
$\PP_{\nbr_{G_n}(u)}$ and $\PP_{\nbr_{G_n}(v)}$
is defined as
\begin{multline*}
|| e ||_{\mathrm{TVD}}
\eqdef
|| \PP_{\nbr_{G_n}(u)} - \PP_{\nbr_{G_n}(v)} ||_{\mathrm{TVD}}
=
\\
\frac{1}{2} \times 
\Bigg(
\hspace*{-0.5in}
\sum_{\hspace*{0.5in} \alpha \in \nbr_{G_n}(u) \cap \nbr_{G_n}(v)} 
    \hspace*{-0.5in}
      \left|\, 
         \PP_{\nbr_{G_n}(u)}(\alpha) - \PP_{\nbr_{G_n}(v)} (\alpha)
			\,\right|
\\
+
\hspace*{-0.5in}
\sum_{\hspace*{0.5in} \beta \in \nbr_{G_n}(u) \setminus \nbr_{G_n}(v)} 
    \hspace*{-0.5in}
         \PP_{\nbr_{G_n}(u)}(\beta)
+
\hspace*{-0.5in}
\sum_{\hspace*{0.5in} \gamma \in \nbr_{G_n}(v) \setminus \nbr_{G_n}(u)} 
    \hspace*{-0.5in}
         \PP_{\nbr_{G_n}(v)}(\gamma)
\Bigg)
\end{multline*}
By Proposition~1 of~\cite{ASD20}
we have 
\begin{gather}
1- 3 \times || e ||_{\mathrm{TVD}}
\leq
\mathfrak{C}_{{G_n}}(e)
\leq
1- || e ||_{\mathrm{TVD}}
\label{eqb1}
\end{gather}
Now suppose that 
the following condition holds for the edge $e$:
\begin{multline}
\forall \, \alpha\in \nbr_{G_n}(u)\setminus\{u,v\} \,\,\, \forall \, \beta\in\nbr_{G_n}(v)\setminus\{u,v\}
:
\\
\dist_{G_n}(\alpha,\beta)=3
\label{eqc1}
\tag{C1}
\end{multline}
Using the discussions surrounding Proposition~1 and 
Proposition~2 of~\cite{ASD20}, 
we get the following bound for this case: 
\begin{multline}
\text{\emd}_{e}(\PP_{\nbr_{G_n}(u)},\PP_{\nbr_{G_n}(v)})
\geq
\\
3 \times || e ||_{\mathrm{TVD}}
-
         \left| \, \PP_{\nbr_{G_n}(u)}(u) - \PP_{\nbr_{G_n}(v)} (u) \, \right|
\\
-
         \left| \, \PP_{\nbr_G(u)}(v) - \PP_{\nbr_G(v)} (v) \, \right|
\label{eqb2}
\end{multline}
Finally, suppose that edge $e$ satisfies 
$\nbr_{G_n}(u)=\{u,v\}$. 
Since 
$
\dist_{G_n}(u,\alpha)=2
$
for all $\alpha\in 
\nbr_{G_n}(v)\setminus\{u,v\}
$,
in this case we get the following bound:
\begin{multline}
\text{\emd}_{e}(\PP_{\nbr_{G_n}(u)},\PP_{\nbr_{G_n}(v)})
\geq
\\
\frac{1}{2} \times
\left [
\frac{
  2 (\deg_{G_n}(v)-1) 
}{
  \deg_{G_n}(v)+1
}
+
2 
\left(
\frac{1}{2} - \frac{1}{ 
  \deg_{G_n}(v)+1
   }
\right)
\right ]
\\
=
\frac{3}{2} - \frac{1}{
  2\deg_{G_n}(v)+2
}
\label{eqb3}
\end{multline}
We now use relatively straightforward calculations to calculate
the Ricci curvature values of various edges of $G_n$:
\begin{description}
\item[(\emph{i})]
For the edge 
$f=\{s,t\}$, 
we get  
\begin{multline*}
\textstyle
|| f ||_{\mathrm{TVD}}
=
\frac{\sqrt{n}+2}{2}\times
\left(
  \frac{1}{\sqrt{n}+2} - \frac{1}{\sqrt{n}+3}  
\right)
+
  \frac{1}{2(\sqrt{n}+3)}
\\
\textstyle
=
\frac{1}{2} - 
  \frac{ \sqrt{n}+1 }{2\sqrt{n}+6}
=
  \frac{ 1 }{\sqrt{n}+3}
\end{multline*}
and thus using~\eqref{eqb1} we get
$\mathfrak{C}_{G_n}(f)
\geq 
1 - 
  \frac{ 3 }{\sqrt{n}+3}
=1 - o(1)
$, as required.
\item[(\emph{ii})]
For any edge $e=\{p_i,s\}\in E_1$ for $i\in\{1,\dots,\sqrt{n}\,\}$, we have
\begin{gather*}
|| e ||_{\mathrm{TVD}}
=
\frac{1}{2} \times 
\frac{1}{4} 
+ 3\times 
\frac{1}{2} \times 
\left( \frac{1}{3}-\frac{1}{4} \right) = \frac{1}{4}
\end{gather*}
and thus using~\eqref{eqb1} we get
\begin{gather*}
\Lambda_1 = 
\sum_{
{ e=\{p_i,s\}\in E_1,\, i\in\{1,\dots,\sqrt{n}\,\} }
}
\mathfrak{C}_{G_n}(e)
\leq
\frac{3}{4}\sqrt{n}
\end{gather*}
\item[(\emph{iii})]
Since trivially 
$\mathfrak{C}_{G_n}(e) \leq 1$ 
for any edge $e=\{p_i,t\}\in E_1$ for $i\in\{1,\dots,\sqrt{n}\,\}$, 
we get
\begin{gather*}
\Lambda_2 = 
\sum_{
{ e=\{p_i,t\}\in E_1,\, i\in\{1,\dots,\sqrt{n}\,\} }
}
\mathfrak{C}_{G_n}(e)
\leq
\sqrt{n}
\end{gather*}
\item[(\emph{iv})]
For any edge $e=\{\, \{r_i,r_j \} \,|\, r_i\in L_2 \, \}\in E_2$ connecting a leaf node $r_i$ to another non-leaf node $r_j$ in $H_2$, 
since 
$\nbr_{G_n}(r_i)=\{r_i,r_j\}$ and  
$\deg_{G_n}(v)=11$, using~\eqref{eqb3} we get  
$
\text{\emd}_{e}(\PP_{\nbr_{G_n}(r_i)},\PP_{\nbr_{G_n}(r_j)})
\geq
\frac{3}{2} - \frac{1}{24}=
\frac{35}{24}
$,
and thus
\begin{multline*}
\Lambda_3 = 
\sum_{ e=\{\, \{r_i,r_j \} \,|\, r_i\in L_2 \, \}\in E_2 }
\mathfrak{C}_{G_n}(e)
\leq
\left( 1 - 
\frac{35}{24}
\right) \times \ell_2
\\
=
\frac{-99n+99\sqrt{n}+187}{240}
\end{multline*}
\item[(\emph{v})]
Condition~\eqref{eqc1} applies
to any edge $e=\{\, \{r_i,r_j \} \,|\, r_i,r_j\notin L_2 \}\in E_2$ connecting a pair of non-leaf nodes $r_i,r_j$ in $H_2$, 
or to the edge $\{u,r_1\}$.
Thus, by using~\eqref{eqb2} we get the following bounds: 
\begin{itemize}
\item
For an edge $e=\{\, \{r_i,r_j \} \,|\, r_i,r_j\notin L_2 \}\in E_2$, 
$
\text{\emd}_{e}(\PP_{\nbr_{G_n}(r_i)},\PP_{\nbr_{G_n}(r_j)}) = \frac{5}{4}
$
and thus
\begin{multline*}
\Lambda_4 = 
\hspace*{-0.5in}
\sum_{ 
\hspace*{0.5in}
e=\{\, \{r_i,r_j \} \,|\, r_i,r_j\notin L_2 \}\in E_2
}
\hspace*{-0.5in}
\mathfrak{C}_{G_n}(e)
\\
\leq
- \frac{1}{4}\times (n_2-\ell_2)
=
\frac{-n+\sqrt{n}+3}{10}
\end{multline*}
\item
If $e=\{u,r_1\}$ then
\begin{multline*}
\text{\emd}_{e}(\PP_{\nbr_{G_n}(r_1)},\PP_{\nbr_{G_n}(u)}) = 
\\
3\times \frac{1}{2} \times 
\left(
\frac{10}{12} + \frac{\sqrt{n}+1}{\sqrt{n}+3}
\right)
=\frac{5}{4} + \frac{3}{2} \times \frac{\sqrt{n}+1}{\sqrt{n}+3}
\end{multline*}
and thus
\begin{gather*}
\Lambda_5 = 
\mathfrak{C}_{G_n}(e) \leq
-\frac{1}{4} - \frac{3}{2} \times \frac{\sqrt{n}+1}{\sqrt{n}+3}
\end{gather*}
\end{itemize}
\end{description}
Adding up the relevant quantities, we get 
$
\sum_{h\in E} \mathfrak{C}_{G_n}(h)
\leq 
\mathfrak{C}_{G_n}(f) + \sum_{j=1}^5 \Lambda_j
\leq 
1 + \sum_{j=1}^5 \Lambda_j
$.
Since 
$m=m_1+m_2+1=n+\sqrt{n}+2$
and $s>0$ is a constant, 
it now follows that 
$
\lim_{n\to\infty}
\frac{s}{m}
\sum_{h\in E} 
\mathfrak{C}_{G_n}(h)
\leq 
-\frac{123\,s}{240}
$,
thus there 
exists a constant $\eps>0$ such that 
$
\frac{s}{m}
\sum_{h\in E} 
\mathfrak{C}_{G_n}(h)
<
-\eps
$
for all sufficiently large $n$.
\end{proof}

\subsection{\label{sec-res-conv}Rapid Initial Convergence of the Ricci Flow for Directed and Undirected Hypergraphs}

As we shown in Table~\ref{tab-fastconv} below, the first iteration in which the edge-weights converge 
under Equation~\eqref{eq-converg}
is a \emph{small} number for all of our (directed and undirected) hypergraphs.

\begin{table}[h]
\caption{\label{tab-fastconv}Fast convergence of the Ricci flows for all 
constructed hypergraphs (cf.\ Section~\ref{sec-flow-surg}).
$\eta_{\mathrm{first}}$
is the \emph{smallest} iteration after which $\Delta_{\mathrm{AVE}}\leq\eps$; 
$\eps=0.005$ for directed hypergraphs
and $\eps=0.000005$ for undirected hypergraphs. 
} 
\begin{tabular}{l @{\hskip 0.3in} c }\toprule
\toprule
Hypergraph name  & $\eta_{\mathrm{first}}$
\\ \midrule
\midrule
Escherichia Coli & $4$ \\ \midrule
Homo Sapiens & $4$ \\ \midrule
Helicobacter Pylori & $4$ \\ \midrule
Methanosarcina berkeri str. Fusaro & $4$ \\ \midrule
Mycobacterium Tuberculosis & $4$ \\ \midrule
Synechococcus elongatus & $4$ \\ \midrule
Synechocystis & $4$ \\[4pt] \midrule
\midrule
\csp\ dataset & $10$ \\ \midrule
\nsp\ dataset & $5$ 
\\ \bottomrule
\bottomrule
\end{tabular}
\end{table}

Table~\ref{tab-fastconv} show fast initial convergence of Ricci flows
via small value of the parameter $\Delta_{\mathrm{AVE}}$.
However, as mentioned in Section~\ref{sec-algm-overall}, 
it is preferable to execute more iterations to 
ensure that
the diffusion process has actually converged in several successive iterations 
based on the value of 
$\Delta_{\mathrm{AVE}}$
and that the quality parameters are within acceptable bounds.
Moreover, 
even if $\Delta_{\mathrm{AVE}}$ is within acceptable bounds, 
some individual edge weights may still change significantly in subsequent iterations
since the value of 
$\Delta_{\mathrm{STD}}$ (\emph{cf}.\ Equation~\eqref{eq-converg-2})
may not be acceptably low, and this may lead to 
changes in the cores.
In our case, we indeed found that 
further iterations produced 
decreasing values of 
$\Delta_{\mathrm{STD}}$
leading to more stable cores (see Table~\ref{tab-stabconv}).

\begin{table*}
\caption{\label{tab-stabconv}
Values of 
$\Delta_{\mathrm{STD}}$ (\emph{cf}.\ Equation~\eqref{eq-converg-2})
value \textbf{at the end} of $\eta$ Ricci flows iterations for all 
constructed hypergraphs (cf.\ Section~\ref{sec-flow-surg}).
For all these cases, as observed in Table~\ref{tab-fastconv}, 
the value of 
$\Delta_{\mathrm{AVE}}$ is 
at most 
$\eps=0.005$ for directed hypergraphs
and 
is at most
$\eps=0.000005$ for undirected hypergraphs. 
} 
\begin{tabular}{l @{\hskip 0.3in} 
                   c @{\hskip 0.3in} 
                   c @{\hskip 0.3in} 
                   c @{\hskip 0.3in} 
                   c @{\hskip 0.3in} 
									 c
									 }\toprule
\toprule
                 & 
          \multicolumn{5}{c}{
$\Delta_{\mathrm{STD}}$ values at end of $\eta=$  
          }
\\ \cmidrule{2-6}
Hypergraph name  & 
            $\eta_{\mathrm{conv}}^{\,\ddagger}$ & $10$ & $20$ & $30$ & $40$ 
\\ \midrule
\midrule
Escherichia Coli & 
    $0.0818$ & $0.0650$ & 
     $0.0498$ & $0.0411$ & $0.0358$  
\\ \midrule
Homo Sapiens & 
    $0.0286$ & $0.0199$ & 
     $0.0136$ & $0.0131$ & $0.0123$  
\\ \midrule
Helicobacter Pylori & 
    $0.0037$ & $0.0025$ &  
     $0.0020$ & $0.0017$ & $0.0015$  
\\ \midrule
Methanosarcina berkeri str. Fusaro & 
    $0.0872$  & $0.0592$ & 
     $0.0554$ & $0.0461$ & $0.0403$  
\\ \midrule
Mycobacterium Tuberculosis & 
    $0.0884$ & $0.0600$ &  
     $0.0530$ & $0.0441$ & $0.0385$  
\\ \midrule
Synechococcus elongatus & 
    $0.0184$ & $0.0132$ &  
     $0.0100$ & $0.0082$ & $0.0071$  
\\ \midrule
Synechocystis & 
    $0.1171$ & $0.0837$ &  
     $0.0628$ & $0.0515$ & $0.0446$  
\\[4pt] \midrule
\midrule
\csp\ dataset & 
    $0.0024$ & $0.0024$ & 
     $0.0017$ & $0.0014$ & $0.0012$  
\\ \midrule
\nsp\ dataset & 
    $0.0189$ & $0.0138$ &  
     $0.0103$ & $0.0091$ & $0.0085$  
\\ \bottomrule
\bottomrule
\\
[-6pt]
\multicolumn{6}{c}{
$^\ddagger$the smallest value of $\eta$ for which 
$\Delta_{\mathrm{AVE}}\leq\eps$
}
\end{tabular}
\end{table*}

\subsection{\label{sec-finmod}Final Cores and Their Qualities Determined by Our Algorithm}

\begin{table*}
\caption{\label{tab-finmod-dir}
For directed hypergraphs,
cores with their quality parameters 
(\emph{cf}.\ 
Equations~\eqref{eq-rin},\eqref{eq-rout},\eqref{eq-dir-dist},\eqref{eq-dir-discon})
as found by our algorithm.
\textbf{The $\pmb{p}$-values for the core quality parameters (\emph{cf}.\ Section~\ref{sec-pcalc}) for all cores 
were found to be significantly less than $\pmb{10^{-5}}$, so these values are not listed explicitly}.
} 
\begin{tabular}{l 
    @{\hskip 0.2in} c 
    @{\hskip 0.2in} c 
    @{\hskip 0.2in} c 
    @{\hskip 0.2in} c 
    @{\hskip 0.2in} c 
    @{\hskip 0.2in} c 
    @{\hskip 0.2in} c 
    @{\hskip 0.2in} c 
	 }\toprule
\toprule
\multirow{2}{*}{Hypergraph $(V,E,w)$}  & \multirow{2}{*}{$|V|$} 
    & \multirow{2}{*}{$\mathrm{core}_\#^\dagger$} & \multirow{2}{*}{$\mathrm{core\_size}^\ddagger$} 
    & \multicolumn{4}{c}{core quality parameters}
\\ \cmidrule{5-8}
                                       & 
    &                                             & 
		& $\mathscr{r}_{\mathrm{in}}^{deg}$ & $\mathscr{r}_{\mathrm{out}}^{deg}$ 
		& $\mathscr{r}_{\mathrm{dist\_stretch}}^{directed}$ 
    & $\mathscr{r}_{\mathrm{disconnected}}^{\mathrm{ordered\_pairs}}$
\\ \midrule
\midrule
Escherichia Coli &  $762$ 
   & $1$ 
	 & $326$ & $0.7259$ & $0.7340$ & $2.7043$ & $0.1839$ 
\\ \midrule
Homo Sapiens &  $343$ 
   & $1$ 
	 & $144$ & $0.6960$ & $0.7400$ & $2.7065$ & $0.4303$ 
\\ \midrule
Helicobacter Pylori & $486$ 
   & $1$ 
	 & $230$ & $0.7859$ & $0.7501$ & $2.7065$ & $0.3215$ 
\\ \midrule
Methanosarcina berkeri str. Fusaro &  $629$ 
   & $1$ 
	 & $254$ & $0.7829$ & $0.7590$ & $2.7337$ & $0.2463$ 
\\ \midrule
Mycobacterium Tuberculosis &  $826$ 
   & $1$ 
	 & $377$ & $0.7993$ & $0.7092$ & $2.7236$ & $0.2886$ 
\\ \midrule
Synechococcus elongatus &  $769$ 
   & $1$ 
	 & $305$ & $0.8063$ & $0.7907$ & $2.7211$ & $0.7345$ 
\\ \midrule
Synechocystis & $796$
   & $1$ 
	 & $305$ & $0.8058$ & $0.7694$ & $2.6035$ & $0.1622$ 
\\ \bottomrule
\bottomrule
\\
[-6pt]
\multicolumn{8}{c}{
\begin{tabular}{ r l}
$^{\dagger}\mathrm{core}_\#\,$ : & number of cores 
\\
$^{\ddagger}\mathrm{core\_size}\,$ : & number of nodes in the cores 
\end{tabular}
}
\end{tabular}
\end{table*}

\begin{table*}
\caption{\label{tab-finmod-undir}
For undirected hypergraphs,
cores with their quality parameters 
(\emph{cf}.\ 
Equations~\eqref{eq-r},\eqref{eq-undir-dist},\eqref{eq-undir-discon})
as found by our algorithm.
\textbf{The $\pmb{p}$-values for the core quality parameters (\emph{cf}.\ Section~\ref{sec-pcalc}) for all cores 
were found to be significantly less than $\pmb{10^{-5}}$, so these values are not listed explicitly}.
} 
\begin{tabular}{l 
    @{\hskip 0.3in} c 
    @{\hskip 0.3in} c 
    @{\hskip 0.3in} c 
    @{\hskip 0.3in} c 
    @{\hskip 0.3in} c 
    @{\hskip 0.3in} c 
	 }\toprule
\toprule
\multirow{2}{*}{Hypergraph $(V,E,w)$}  & \multirow{2}{*}{$|V|$} 
    & \multirow{2}{*}{$\mathrm{core}_\#^\dagger$} & \multirow{2}{*}{$\mathrm{core\_size}^\ddagger$} 
    & \multicolumn{3}{c}{core quality parameters} 
\\ \cmidrule{5-7}
                                       & 
		&                                             & 
    & $\mathscr{r}^{deg}$
		& $\mathscr{r}_{\mathrm{dist\_stretch}}^{undirected}$
    & $\mathscr{r}_{\mathrm{disconnected}}^{\mathrm{unordered\_pairs}}$
\\ \midrule
\midrule
\csp\ dataset & $496$  
   & $1$ 
	 & $62^{a}$ & $0.6411$ & $2.8964$ & $0.0006$ 
\\ \midrule
\multirow{2}{*}{\nsp\ dataset} & \multirow{2}{*}{$518$}
   & \multirow{2}{*}{$2$} & $24^{b}$  
	                 & $0.7530$ &  $3.4574$ & $0.0002$ 
\\ \cmidrule{4-7}
                                    &              
	 &                      & $23^{c}$ 
	                 & $0.7600$ &  $1.6309$ & $0.0001$ 
\\ \bottomrule
\bottomrule
\\
[-6pt]
\multicolumn{7}{c}{
\begin{tabular}{ r l}
$^{\dagger}\mathrm{core}_\#\,$ : & number of cores 
\\
$^{\ddagger}\mathrm{core\_size}\,$ : & number of nodes in the cores 
\end{tabular}
}
\\
\multicolumn{7}{l}{
\begin{tabular}{l}
\small
$^a$Authors in the core:
\\
\hspace*{0.2in}
\begin{tabular}{l}
\small
M. Bellare, R. Impagliazzo, M. Szegedy, L. Fortnow, M. Yung, B. Waters, D. Dolev, L. Babai, M. Luby, 
\\
\small
A. Lysyanskaya, A. Wigderson, A. Sahai, M. Sudan, J. Naor, T. Okamoto, L. Levin, Omer Reingold,
\\
\small
N. Buchbinder, P. Rogaway, C. Dwork, H. Krawczyk, H. Buhrman, O. Goldreich, B. Barak, D. Boneh,
\\
\small
N. Nisan, Y. Lindell, S. Halevi, S. Keelveedhi, A. Herzberg, S. Wolf, G. Segev, R. Rivest, S. Vadhan,
\\
\small
C. Peikert, R. Santhanam, J. Stern, E. Fujisaki, V. Kabanets, S. Goldwasser, L. Trevisan,
\\
\small
S. Jarecki, A. O'Neill, G. Rothblum, M. Kharitonov, R. Schwartz, D. Melkebeek, M. Naor, R. Canetti,
\\
\small
A. Goldberg, S. Micali, A. Boldyreva, D. Pointcheval, S. Arora, K. Yang, N. Linial, J. Hastad, 
\\
\small
S. Rudich, E. Sweedyk, R. Ostrovsky, A. Desai, M. Feldman
\end{tabular}
\\
\small
$^b$Authors in the core:
\\
\hspace*{0.2in}
\begin{tabular}{l}
\small
J. Zhuang, C. Bauch, M. Perc, Y. Tian, A. Sheikhahmadi, C. Hens, M. Duh, Y. Mu, Y. Xia, W. Lin,
\\
\small
P. Ji, K. Skok, M. Milojevic, J. Ye, J. Sun, J. Kurths, Z. Cheng, A. Zareie, J. Cao, Y. Tang,
\\
\small
L. Guerrini, M. S. K. Fasaei, L. Tang, M Gosak
\end{tabular}
\\
\small
$^c$Authors in the core:
\\
\hspace*{0.2in}
\begin{tabular}{l}
\small
X. Wu, W. Han, D. Zhao, C. Lv, E. M. Ruiz, P. A. C. Sousa, L.-L. Jiang, A. Nicchi, S. Boccaletti,
\\
\small
J. Gao, Z. Wang, D. Duan, E. Kubik, H. Stanley, S. Havlin, L. Wang, S. Li, Q. Su, A. Li, S. Si,
\\
\small
D. Li, M. Zanin, D. Papo
\end{tabular}
\end{tabular}
}
\end{tabular}
\end{table*}

We report in Table~\ref{tab-finmod-dir} and Table~\ref{tab-finmod-undir}
the cores found by our algorithm with their quality parameters via 
Equations~\eqref{eq-rin}--\eqref{eq-undir-discon}.
As can be seen, taken together 
the parameters 
$\mathscr{r}_{\mathrm{in}}^{deg}$, $\mathscr{r}_{\mathrm{out}}^{deg}$, 
$\mathscr{r}_{\mathrm{dist\_stretch}}^{directed}$,  
$\mathscr{r}_{\mathrm{disconnected}}^{\mathrm{ordered\_pairs}}$
for directed hypergraphs
and 
the parameters 
$\mathscr{r}^{deg}$, $\mathscr{r}_{\mathrm{dist\_stretch}}^{undirected}$, 
$\mathscr{r}_{\mathrm{disconnected}}^{\mathrm{unordered\_pairs}}$
for undirected hypergraphs 
indicate a good quality of modularity and centrality of the cores
and satisfy all the validity criteria set forth in 
Sections \ref{sec-conn}--\ref{sec-pcalc}.
For example, for 
\emph{Synechococcus elongatus} 
hyperedges with only nodes from the core both in their head and tail
contributed $80\%$ on average to the in-degrees of the nodes  
and similarly 
hyperedges with only nodes from the core both in their head and tail
contributed $79\%$ on average to the out-degrees of the nodes.
Furthermore, 
for \emph{Synechococcus elongatus} 
about $73\%$ of ordered pairs of nodes not in the core that were connected by a path lose this path
when the core is removed, 
and ordered node pairs that stay connected increase their distance by about $272\%$ on average.

An inspection of the values in 
Table~\ref{tab-finmod-dir} and Table~\ref{tab-finmod-undir}
shows that 
removal of the cores disconnect 
\emph{fewer} pairs of nodes in the core
for undirected hypergraphs
as compared to the directed hypergrahs (\IE, 
the $\mathscr{r}_{\mathrm{disconnected}}^{\mathrm{unordered\_pairs}}$
values are smaller than the 
$\mathscr{r}_{\mathrm{disconnected}}^{\mathrm{ordered\_pairs}}$ values),
even though the stretch factors of shortest paths surviving after core removals 
are comparable (\IE,
the $\mathscr{r}^{deg}$, $\mathscr{r}_{\mathrm{dist\_stretch}}^{undirected}$
values are of similar magnitudes to the values of 
$\mathscr{r}_{\mathrm{dist\_stretch}}^{directed}$).
There are \emph{two} possible reasons for this behavior.
Firstly, 
the sizes of cores for undirected hypergraphs are \emph{smaller} than the 
typical core sizes of  
directed hypergraphs, and 
intuitively one would expect removal of \emph{fewer} nodes to disconnect less number of
remaining paths.
Secondly,
the \emph{directionality} constraints on paths for directed hypergraphs (\IE, 
edges may \emph{not} be traversed in the wrong direction) 
allow fewer avenues to ``bypass'' the nodes in the core; indeed,
majority of biochemical reactions in our dataset are \emph{not} bidirectional reactions.

In the following two sections, we comment on interpreting the cores and their 
implications
to future research works. 

\subsubsection{\label{sec-finmod-dir-int}Interpretation and Usefulness of Cores for Directed Hypergraphs (Metabolic Systems)}

A core $\cS\subset V$ of the directed hypergraph $G=(V,E,w)$
corresponds to a set of 
molecules
(reactants and/or products) in the 
biochemical system under consideration. 
Our analysis indicates the following properties for this subset of molecules:
\begin{enumerate}[label={\textbf{({\Roman*})}}]
\item
The high values of 
$\mathscr{r}_{\mathrm{in}}^{deg}$ 
and
$\mathscr{r}_{\mathrm{out}}^{deg}$ 
in Table~\ref{tab-finmod-dir}
suggest that each 
molecule (node) $u$ in the core
is highly dependent on 
other 
molecules (nodes) in the core, \EG, 
for another 
molecule $v$ in the core
either $u$ and $v$ are both reactants in the same biochemical reaction, 
which itself is part of the core
or one of the them is a product produced in a core biochemical reaction 
in which the other one is a reactant. 
\item
The values of 
$\mathscr{r}_{\mathrm{disconnected}}^{\mathrm{ordered\_pairs}}$
and 
$\mathscr{r}_{\mathrm{dist\_stretch}}^{directed}$ 
signify the importance of the 
molecules in the core
in the overall functioning of the biochemical system in the following manner.
\begin{enumerate}[label={\textbf{({\Alph*})}}]
\item
Consider an ordered pair $(u,v)$ which
contributes towards the value of 
$\mathscr{r}_{\mathrm{disconnected}}^{\mathrm{ordered\_pairs}}$ by losing their path when the core is removed.
This implies that the production of $v$ can be significantly reduced or perhaps 
completely disrupted by the removal of the core $\cS$.
\item
Consider an ordered pair $(u,v)$ 
such that
$\frac{\dist_{H\setminus\cS}(u,v)}{\dist_{H\setminus\cS}(u,v)}$ 
is large enough to 
contribute significantly towards the value of 
$\mathscr{r}_{\mathrm{dist\_stretch}}^{directed}$. 
Then removal of the core $\cS$ 
may significantly delay
the production of $v$ by increasing the number of reactions needed for its production.
\end{enumerate}
\end{enumerate}
Since removal of the core(s) 
\emph{significantly} affects the overall functioning of the biochemical system, 
one can conclude that the core plays a dominant role in the system. 
Following a standard practice used in computational biology, researchers may focus on the 
molecules in the core and their associated biochemical reactions for further computational 
analysis if the original system was too computationally intensive to analyse because of its size.

\paragraph{\label{sec-rem-cor}\color{black}Removing a core by biological experiments}
%
Perturbation experiments (\EG, genetic knockouts) are well-established 
biological method for probing the importance of biochemical entities. The best way to eliminate a 
chemical reaction is to knock out the enzyme that catalyzes the reaction.
We \emph{briefly} comment here on
\emph{optimally} designing the biological experimental mechanism to remove a core
$\cS$,
assuming that we have the data corresponding to the \emph{enzymes}
for each reaction (our datasets did \emph{not} provide this information).  
Let $\cE$ be the set of all biochemical reactions in which one or more
members of $\cS$ appear as either a reactant or a product (or both); 
thus, disabling the reactions in $\cE$ will effectively disconnect the core $\cS$.
Let $\cC$ be the set of enzymes catalyzing the reactions from $\cE$.
Then, a \emph{minimal} set of enzyme knockouts that can be used to disable all the reactions 
in $\cE$ can be determined by solving an appropriate \emph{minimum hitting set} problem~\cite{GJ79} defined such that 
the universe is $\cE$ and corresponding to each enzyme
$c\in\cC$ we have a set $\{e\in\cE \,|\,\text{$c$ is a catalyst for $e$} \}$.
The reader is referred to standard literatures in computer algorithms such as~\cite{V01}
for methods to solve a minimum hitting set problem efficiently.

\subsubsection{\label{sec-finmod-undir-int}Interpretation and Usefulness of Cores for 
Undirected Hypergraphs (Co-author Relationships)}

\begin{figure}
\caption{\label{fig444}A visual illustration of the example discussed in 
item \textbf{(I)} in Section~\ref{sec-finmod-undir-int} for $k=8$.
The cores are shown in dotted black lines.
The hypergraph representation captures the cores correctly whereas the graph
representation does not.}
\includegraphics[width=0.4\textwidth]{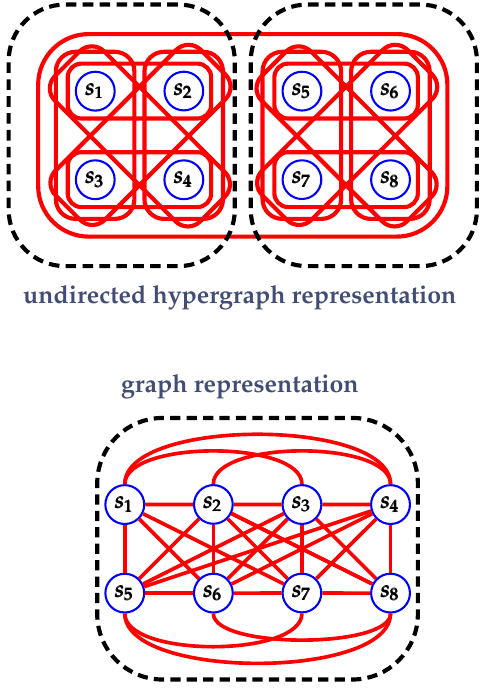}
\end{figure}  

A core $\cS\subset V$ of the undirected hypergraph $G=(V,E,w)$
corresponds to a set of authors.
Below we explain what the core signifies in terms of 
its properties:
\begin{enumerate}[label={\textbf{({\Roman*})}}]
\item
The values of 
$\mathscr{r}^{deg}$
in Table~\ref{tab-finmod-undir}
suggest that 
the authors in $\cS$ form a ``close group'' of collaborators in the sense 
that they wrote more papers with each other as compared to with authors 
outside the core.

\textbf{We show by an illustrative example that the above need \emph{not} be the case if cores are
found by graph-theoretic approaches}.
For some large even $k>2$, suppose that authors $a_1,\dots,a_k$ wrote a paper,
authors $a_i,a_j$ wrote a paper for $1\leq i<j\leq \nicefrac{k}{2}$ 
and 
authors $a_i,a_j$ wrote a paper for $\nicefrac{k}{2}+1\leq i<j\leq k$ 
(see \FI{fig444} for a visual illustration when $k=10$).
The standard graph-theoretic approach (\emph{cf}.\ \FI{fig333})
will group all the $k$ nodes $a_1,\dots,a_k$ in the same core
since they form a large $k$-clique, 
whereas hypergraph-theoretic approach will correctly group them in two cores 
containing 
$a_1,\dots,a_{\nicefrac{k}{2}}$
and 
$a_{1+\nicefrac{k}{2}},\dots,a_k$, respectively.
\item
The values of 
$\mathscr{r}_{\mathrm{dist\_stretch}}^{undirected}$
and 
$\mathscr{r}_{\mathrm{disconnected}}^{\mathrm{unordered\_pairs}}$
signify the importance of the 
authors in $\cS$ in \emph{promoting} collaborations between other researchers 
in the following manner.
Note that for a path 
$\cP_{x,y}=(x=v_1,e_1,\dots,v_k,e_k,v_{k+1}=y)$
between nodes $x$ and $y$ 
$v_i$ and $v_{i+1}$ are co-authors for $i=1,\dots,k$.
Consider a pair of authors $\{x,y\}$ 
contributing towards the value of 
$\mathscr{r}_{\mathrm{disconnected}}^{\mathrm{unordered\_pairs}}$.
This implies complete disruption 
of collaborations between any pairs of authors where 
one of them is in the set of authors reachable from $u$ via chains of collaboration 
and 
the other one is in the set of authors reachable from $v$ via chains of collaboration. 
If for a pair of authors $\{x,y\}$ 
$\frac{\dist_{H\setminus\cS}(u,v)}{\dist_{H\setminus\cS}(u,v)}$ 
was large enough to 
contribute significantly towards the value of 
$\mathscr{r}_{\mathrm{dist\_stretch}}^{undirected}$
then, 
even though some collaborative chains connecting $u$ and $v$ are not completely disrupted, 
they are elongated leading to a decrease in productivity. 
\end{enumerate}

\section{Conclusion}

In this article, we have designed and implemented an algorithmic paradigm for finding
cores in edge-weighted directed and undirected hypergraphs
using a hypergraph-curvature guided discrete time diffusion process, 
and have \emph{successfully} 
applied our methods to 
seven metabolic hypergraphs 
and 
two social (co-authorship) hypergraphs.
\emph{En route}, 
we have also shown 
that an edge-weight re-normalization procedure in
a prior research work for Ricci flows has undesirable properties.
Finding cores of hypergraphs is \emph{still} a relatively new 
research topic that is growing rapidly, and we expect that our 
our work will provide further impetus and guidance to this
burgeoning research area. We conclude by listing a few research questions 
in this direction that may be of interest to 
researchers:
\begin{enumerate}[label=$\triangleright$,leftmargin=*]
\item
Our calculations of Ricci curvatures of 
directed hypergraphs in Section~\ref{sec-ricci-dir}
and 
directed hypergraphs in Section~\ref{sec-ricci-undir}
is a generalization of the corresponding calculations for undirected and directed
graphs, respectively.
However, it is possible to carry out these generalizations in different ways,
and thus it will be of interest to know if other generalizations produce better qualities
for cores of hypergraphs.
\item
It would be of interest to see if the discrete time diffusion process using Ricci curvatures 
\emph{can} be combined with random walks in hypergraphs~\cite{Carletti_2021}
for application domains such as modular
decompositions of graphs.
\item
In spite of writing our code in Python (which runs slower than codes written in C or similar 
other programming languages) and using a single older (slower) laptop,  
we were able to finish 
our experimental work within reasonable time, 
and therefore we made no attempt to rewrite the code in a more suitable programming language, 
optimize the code and use a faster computational device.
We believe this could be due to 
two reasons: 
\textbf{(\emph{i})}
our hypergraphs were of moderate size (number of nodes) and moderate density (number and size of hyperedges),
and 
\textbf{(\emph{ii})}
our Ricci flow iterations converged within a few steps. 
However, computational speed could become an issue with larger hypergraphs 
or if Ricci flows required more steps to converge.
Here we provide a few suggestions to 
researchers to overcome possible future
computational bottlenecks. Note that the most computational intensive part of the
curvature calculations involve the following two computational components:
\textbf{(\emph{a})}
computing the 
Earth Mover's Distance (\emd$_H(\PPL,\PPR)$) for each directed hyperedge or for each pair of nodes 
within an undirected hyperedge for every iteration, 
and 
\textbf{(\emph{b})}
the all-pairs distance calculations once for the entire hypergraph for every iteration.

For all-pairs distance calculations we either used 
a straightforward adoption of 
Floyd-Warshall's all-pair shortest path calculations with early terminations
for graphs to our hypergraphs, or used a simple breadth-first-search based approach.
However, the standard all-pairs-shortest-path problems for
graphs have a long and rich algorithmic history~\cite{CLRS09} 
with strong connections to matrix multiplication algorithms~\cite{S95,Zw02}
and other problems~\cite{WW18}, and future researchers could implement 
some of these advanced algorithmic methods for their computations. 
Note that in many applications it may even suffice to compute the distances 
\emph{approximately} 
and in that case algorithms such as in~\cite{Dor00} could be useful.

We have the following suggestions for future researchers regarding the 
\emd\ calculations:
\begin{enumerate}[label=$\triangleright$,leftmargin=*]
\item
Note that within each iteration, the calculations of the \emd\ values for the hyperedges
can be done in parallel, and thus a clustered computing could be used.
\item
If it suffices to compute the \emd\ values approximately, one could implement 
the $\eps$-additive approximation algorithms in~\cite{Q18,DGK18}.
\end{enumerate}
\end{enumerate}
\appendix*
\section{Details of Calculations of $\PPR$ for Weighted Directed Hypergraphs in Section~\ref{sec-ricci-dir}}

\begin{enumerate}[label=$\triangleright$]
\item
Initially, $\PPR(u)=0$ for all $u\in V$. In our subsequent steps, we will add to these values as appropriate.
\item
We divide the total probability $1$ equally among the nodes in $\cHe_e$, thus ``allocating'' a value of 
$(| \cHe_e |)^{-1}$ 
to each node in question.
\item
For every node $x\in\cHe_e$ with 
$\deg_x^{out}=0$, 
we add 
$(| \cHe_e |)^{-1}$ 
to 
$\PPL(x)$.
\item
For every node $x\in\cHe_e$ with 
$\deg_x^{out}>0$, 
we perform the following:
\begin{enumerate}[label=$\triangleright$]
\item
We divide the probability 
$(| \cHe_e |)^{-1}$ 
equally among the hyperedges 
$e'$ such that $x\in\cTa_{e'}$, 
thus ``allocating'' a value of 
$(| \cHe_e |\times\deg_x^{out})^{-1}$ 
to each hyperedge in question.
\item
For each such hyperedge 
$e'$ such that $x\in\cTa_{e'}$, 
we divide the allocated value equally among the nodes in 
$\cHe_{e'}$ and add this values to the probabilities of these nodes.
In other words, 
for every node $y\in\cHe_{e'}$
we add 
$(| \cHe_e |\times\deg_x^{out} \times |\cHe_{e'}|)^{-1}$ 
to 
$\PPL(y)$.
\end{enumerate}
\end{enumerate}
Note that the final probability for each node is calculated by 
summing all the contributions from each bullet point.

\begin{acknowledgments}
We thank Katie Kruzan for useful discussions and help in debugging our software.
\end{acknowledgments}

\bibliography{references-journ}

\begin{thebibliography}{103}%
\makeatletter
\providecommand \@ifxundefined [1]{%
 \@ifx{#1\undefined}
}%
\providecommand \@ifnum [1]{%
 \ifnum #1\expandafter \@firstoftwo
 \else \expandafter \@secondoftwo
 \fi
}%
\providecommand \@ifx [1]{%
 \ifx #1\expandafter \@firstoftwo
 \else \expandafter \@secondoftwo
 \fi
}%
\providecommand \natexlab [1]{#1}%
\providecommand \enquote  [1]{``#1''}%
\providecommand \bibnamefont  [1]{#1}%
\providecommand \bibfnamefont [1]{#1}%
\providecommand \citenamefont [1]{#1}%
\providecommand \href@noop [0]{\@secondoftwo}%
\providecommand \href [0]{\begingroup \@sanitize@url \@href}%
\providecommand \@href[1]{\@@startlink{#1}\@@href}%
\providecommand \@@href[1]{\endgroup#1\@@endlink}%
\providecommand \@sanitize@url [0]{\catcode `\\12\catcode `\$12\catcode
  `\&12\catcode `\#12\catcode `\^12\catcode `\_12\catcode `\%12\relax}%
\providecommand \@@startlink[1]{}%
\providecommand \@@endlink[0]{}%
\providecommand \url  [0]{\begingroup\@sanitize@url \@url }%
\providecommand \@url [1]{\endgroup\@href {#1}{\urlprefix }}%
\providecommand \urlprefix  [0]{URL }%
\providecommand \Eprint [0]{\href }%
\providecommand \doibase [0]{https://doi.org/}%
\providecommand \selectlanguage [0]{\@gobble}%
\providecommand \bibinfo  [0]{\@secondoftwo}%
\providecommand \bibfield  [0]{\@secondoftwo}%
\providecommand \translation [1]{[#1]}%
\providecommand \BibitemOpen [0]{}%
\providecommand \bibitemStop [0]{}%
\providecommand \bibitemNoStop [0]{.\EOS\space}%
\providecommand \EOS [0]{\spacefactor3000\relax}%
\providecommand \BibitemShut  [1]{\csname bibitem#1\endcsname}%
\let\auto@bib@innerbib\@empty
\bibitem [{\citenamefont {{DasGupta}}\ and\ \citenamefont
  {Liang}(2016)}]{DL16}%
  \BibitemOpen
  \bibfield  {author} {\bibinfo {author} {\bibfnamefont {B.}~\bibnamefont
  {{DasGupta}}}\ and\ \bibinfo {author} {\bibfnamefont {J.}~\bibnamefont
  {Liang}},\ }\href@noop {} {\emph {\bibinfo {title} {Models and Algorithms for
  Biomolecules and Molecular Networks}}}\ (\bibinfo  {publisher} {Wiley-IEEE
  Press},\ \bibinfo {address} {New Jersey},\ \bibinfo {year}
  {2016})\BibitemShut {NoStop}%
\bibitem [{\citenamefont {Newman}(2010)}]{Newman-book}%
  \BibitemOpen
  \bibfield  {author} {\bibinfo {author} {\bibfnamefont {M.~E.~J.}\
  \bibnamefont {Newman}},\ }\href
  {https://doi.org/10.1093/acprof:oso/9780199206650.001.0001} {\emph {\bibinfo
  {title} {Networks: An Introduction}}}\ (\bibinfo  {publisher} {Oxford
  University Press},\ \bibinfo {year} {2010})\BibitemShut {NoStop}%
\bibitem [{\citenamefont {{Albert}}\ and\ \citenamefont
  {Barab\'asi}(2002)}]{Albert-Barabasi-2002}%
  \BibitemOpen
  \bibfield  {author} {\bibinfo {author} {\bibfnamefont {R.}~\bibnamefont
  {{Albert}}}\ and\ \bibinfo {author} {\bibfnamefont {A.-L.}\ \bibnamefont
  {Barab\'asi}},\ }\bibfield  {title} {\bibinfo {title} {Statistical mechanics
  of complex networks},\ }\href {https://doi.org/10.1103/RevModPhys.74.47}
  {\bibfield  {journal} {\bibinfo  {journal} {Reviews of Modern Physics}\
  }\textbf {\bibinfo {volume} {74}},\ \bibinfo {pages} {47} (\bibinfo {year}
  {2002})}\BibitemShut {NoStop}%
\bibitem [{\citenamefont {Colizza}\ \emph {et~al.}(2006)\citenamefont
  {Colizza}, \citenamefont {Flammini}, \citenamefont {Serrano},\ and\
  \citenamefont {Vespignani}}]{rich-club}%
  \BibitemOpen
  \bibfield  {author} {\bibinfo {author} {\bibfnamefont {V.}~\bibnamefont
  {Colizza}}, \bibinfo {author} {\bibfnamefont {A.}~\bibnamefont {Flammini}},
  \bibinfo {author} {\bibfnamefont {M.}~\bibnamefont {Serrano}},\ and\ \bibinfo
  {author} {\bibfnamefont {A.}~\bibnamefont {Vespignani}},\ }\bibfield  {title}
  {\bibinfo {title} {Detecting rich-club ordering in complex networks},\ }\href
  {https://doi.org/10.1038/nphys209} {\bibfield  {journal} {\bibinfo  {journal}
  {Nature Physics}\ }\textbf {\bibinfo {volume} {2}},\ \bibinfo {pages} {110}
  (\bibinfo {year} {2006})}\BibitemShut {NoStop}%
\bibitem [{\citenamefont {Latora}\ and\ \citenamefont
  {Marchiori}(2007)}]{LM07}%
  \BibitemOpen
  \bibfield  {author} {\bibinfo {author} {\bibfnamefont {V.}~\bibnamefont
  {Latora}}\ and\ \bibinfo {author} {\bibfnamefont {M.}~\bibnamefont
  {Marchiori}},\ }\bibfield  {title} {\bibinfo {title} {A measure of centrality
  based on network efficiency},\ }\href
  {https://doi.org/10.1088/1367-2630/9/6/188} {\bibfield  {journal} {\bibinfo
  {journal} {New Journal of Physics}\ }\textbf {\bibinfo {volume} {9}},\
  \bibinfo {pages} {188} (\bibinfo {year} {2007})}\BibitemShut {NoStop}%
\bibitem [{\citenamefont {{Albert}}\ \emph {et~al.}(2011)\citenamefont
  {{Albert}}, \citenamefont {{DasGupta}}, \citenamefont {Hegde}, \citenamefont
  {Sivanathan}, \citenamefont {Gitter}, \citenamefont {G\"ursoy}, \citenamefont
  {Paul},\ and\ \citenamefont {Sontag}}]{ADGGHPSS11}%
  \BibitemOpen
  \bibfield  {author} {\bibinfo {author} {\bibfnamefont {R.}~\bibnamefont
  {{Albert}}}, \bibinfo {author} {\bibfnamefont {B.}~\bibnamefont
  {{DasGupta}}}, \bibinfo {author} {\bibfnamefont {R.}~\bibnamefont {Hegde}},
  \bibinfo {author} {\bibfnamefont {G.~S.}\ \bibnamefont {Sivanathan}},
  \bibinfo {author} {\bibfnamefont {A.}~\bibnamefont {Gitter}}, \bibinfo
  {author} {\bibfnamefont {G.}~\bibnamefont {G\"ursoy}}, \bibinfo {author}
  {\bibfnamefont {P.}~\bibnamefont {Paul}},\ and\ \bibinfo {author}
  {\bibfnamefont {E.}~\bibnamefont {Sontag}},\ }\bibfield  {title} {\bibinfo
  {title} {Computationally efficient measure of topological redundancy of
  biological and social networks},\ }\href
  {https://doi.org/10.1103/PhysRevE.84.036117} {\bibfield  {journal} {\bibinfo
  {journal} {Physical Review E}\ }\textbf {\bibinfo {volume} {84}},\ \bibinfo
  {pages} {036117} (\bibinfo {year} {2011})}\BibitemShut {NoStop}%
\bibitem [{\citenamefont {Bassett}\ \emph {et~al.}(2011)\citenamefont
  {Bassett}, \citenamefont {Wymbs}, \citenamefont {Porter}, \citenamefont
  {Mucha}, \citenamefont {Carlson},\ and\ \citenamefont
  {Grafton}}]{bassett-et-al-2011}%
  \BibitemOpen
  \bibfield  {author} {\bibinfo {author} {\bibfnamefont {D.~S.}\ \bibnamefont
  {Bassett}}, \bibinfo {author} {\bibfnamefont {N.~F.}\ \bibnamefont {Wymbs}},
  \bibinfo {author} {\bibfnamefont {M.~A.}\ \bibnamefont {Porter}}, \bibinfo
  {author} {\bibfnamefont {P.~J.}\ \bibnamefont {Mucha}}, \bibinfo {author}
  {\bibfnamefont {J.~M.}\ \bibnamefont {Carlson}},\ and\ \bibinfo {author}
  {\bibfnamefont {S.~T.}\ \bibnamefont {Grafton}},\ }\bibfield  {title}
  {\bibinfo {title} {Dynamic reconfiguration of human brain networks during
  learning},\ }\href {https://doi.org/10.1073/pnas.1018985108} {\bibfield
  {journal} {\bibinfo  {journal} {Proceedings of the National Academy of
  Sciences}\ }\textbf {\bibinfo {volume} {118}},\ \bibinfo {pages} {7641}
  (\bibinfo {year} {2011})}\BibitemShut {NoStop}%
\bibitem [{\citenamefont {Battiston}\ \emph {et~al.}(2020)\citenamefont
  {Battiston}, \citenamefont {Cencetti}, \citenamefont {Iacopini},
  \citenamefont {Latora}, \citenamefont {Lucas}, \citenamefont {Patania},
  \citenamefont {Young},\ and\ \citenamefont {Petri}}]{BATTISTON20201}%
  \BibitemOpen
  \bibfield  {author} {\bibinfo {author} {\bibfnamefont {F.}~\bibnamefont
  {Battiston}}, \bibinfo {author} {\bibfnamefont {G.}~\bibnamefont {Cencetti}},
  \bibinfo {author} {\bibfnamefont {I.}~\bibnamefont {Iacopini}}, \bibinfo
  {author} {\bibfnamefont {V.}~\bibnamefont {Latora}}, \bibinfo {author}
  {\bibfnamefont {M.}~\bibnamefont {Lucas}}, \bibinfo {author} {\bibfnamefont
  {A.}~\bibnamefont {Patania}}, \bibinfo {author} {\bibfnamefont {J.-G.}\
  \bibnamefont {Young}},\ and\ \bibinfo {author} {\bibfnamefont
  {G.}~\bibnamefont {Petri}},\ }\bibfield  {title} {\bibinfo {title} {Networks
  beyond pairwise interactions: Structure and dynamics},\ }\href
  {https://doi.org/10.1016/j.physrep.2020.05.004} {\bibfield  {journal}
  {\bibinfo  {journal} {Physics Reports}\ }\textbf {\bibinfo {volume} {874}},\
  \bibinfo {pages} {1} (\bibinfo {year} {2020})}\BibitemShut {NoStop}%
\bibitem [{\citenamefont {Battiston}\ \emph {et~al.}(2021)\citenamefont
  {Battiston}, \citenamefont {Amico}, \citenamefont {Barrat}, \citenamefont
  {Bianconi}, \citenamefont {de~Arruda}, \citenamefont {Franceschiello},
  \citenamefont {Iacopini}, \citenamefont {K\'{e}fi}, \citenamefont {Latora},
  \citenamefont {Moreno}, \citenamefont {Murray}, \citenamefont {Peixoto},
  \citenamefont {Vaccarino},\ and\ \citenamefont {Petr}}]{Battetal21}%
  \BibitemOpen
  \bibfield  {author} {\bibinfo {author} {\bibfnamefont {F.}~\bibnamefont
  {Battiston}}, \bibinfo {author} {\bibfnamefont {E.}~\bibnamefont {Amico}},
  \bibinfo {author} {\bibfnamefont {A.}~\bibnamefont {Barrat}}, \bibinfo
  {author} {\bibfnamefont {G.}~\bibnamefont {Bianconi}}, \bibinfo {author}
  {\bibfnamefont {G.~F.}\ \bibnamefont {de~Arruda}}, \bibinfo {author}
  {\bibfnamefont {B.}~\bibnamefont {Franceschiello}}, \bibinfo {author}
  {\bibfnamefont {I.}~\bibnamefont {Iacopini}}, \bibinfo {author}
  {\bibfnamefont {S.}~\bibnamefont {K\'{e}fi}}, \bibinfo {author}
  {\bibfnamefont {V.}~\bibnamefont {Latora}}, \bibinfo {author} {\bibfnamefont
  {Y.}~\bibnamefont {Moreno}}, \bibinfo {author} {\bibfnamefont {M.~M.}\
  \bibnamefont {Murray}}, \bibinfo {author} {\bibfnamefont {T.~P.}\
  \bibnamefont {Peixoto}}, \bibinfo {author} {\bibfnamefont {F.}~\bibnamefont
  {Vaccarino}},\ and\ \bibinfo {author} {\bibfnamefont {G.}~\bibnamefont
  {Petr}},\ }\bibfield  {title} {\bibinfo {title} {The physics of higher-order
  interactions in complex systems},\ }\href
  {https://doi.org/10.1038/s41567-021-01371-4} {\bibfield  {journal} {\bibinfo
  {journal} {Nature Physics}\ }\textbf {\bibinfo {volume} {17}},\ \bibinfo
  {pages} {1093} (\bibinfo {year} {2021})}\BibitemShut {NoStop}%
\bibitem [{\citenamefont {Torres}\ \emph {et~al.}(2021)\citenamefont {Torres},
  \citenamefont {Blevins}, \citenamefont {Bassett},\ and\ \citenamefont
  {Eliassi-Rad}}]{doi10113720M1355896}%
  \BibitemOpen
  \bibfield  {author} {\bibinfo {author} {\bibfnamefont {L.}~\bibnamefont
  {Torres}}, \bibinfo {author} {\bibfnamefont {A.~S.}\ \bibnamefont {Blevins}},
  \bibinfo {author} {\bibfnamefont {D.}~\bibnamefont {Bassett}},\ and\ \bibinfo
  {author} {\bibfnamefont {T.}~\bibnamefont {Eliassi-Rad}},\ }\bibfield
  {title} {\bibinfo {title} {The why, how, and when of representations for
  complex systems},\ }\href {https://doi.org/10.1137/20M1355896} {\bibfield
  {journal} {\bibinfo  {journal} {SIAM Review}\ }\textbf {\bibinfo {volume}
  {63}},\ \bibinfo {pages} {435} (\bibinfo {year} {2021})},\ \Eprint
  {https://arxiv.org/abs/https://doi.org/10.1137/20M1355896}
  {https://doi.org/10.1137/20M1355896} \BibitemShut {NoStop}%
\bibitem [{\citenamefont {Jeong}\ \emph {et~al.}(2000)\citenamefont {Jeong},
  \citenamefont {Tombor}, \citenamefont {Albert}, \citenamefont {Oltvai},\ and\
  \citenamefont {Barabasi}}]{net10-1}%
  \BibitemOpen
  \bibfield  {author} {\bibinfo {author} {\bibfnamefont {H.}~\bibnamefont
  {Jeong}}, \bibinfo {author} {\bibfnamefont {B.}~\bibnamefont {Tombor}},
  \bibinfo {author} {\bibfnamefont {R.}~\bibnamefont {Albert}}, \bibinfo
  {author} {\bibfnamefont {Z.~N.}\ \bibnamefont {Oltvai}},\ and\ \bibinfo
  {author} {\bibfnamefont {A.-L.}\ \bibnamefont {Barabasi}},\ }\bibfield
  {title} {\bibinfo {title} {The large-scale organization of metabolic
  networks},\ }\href {https://doi.org/10.1038/35036627} {\bibfield  {journal}
  {\bibinfo  {journal} {Nature}\ }\textbf {\bibinfo {volume} {407}},\ \bibinfo
  {pages} {651} (\bibinfo {year} {2000})}\BibitemShut {NoStop}%
\bibitem [{\citenamefont {{DasGupta}}\ \emph {et~al.}(2007)\citenamefont
  {{DasGupta}}, \citenamefont {Enciso}, \citenamefont {Sontag},\ and\
  \citenamefont {Zhang}}]{DESZ07}%
  \BibitemOpen
  \bibfield  {author} {\bibinfo {author} {\bibfnamefont {B.}~\bibnamefont
  {{DasGupta}}}, \bibinfo {author} {\bibfnamefont {G.~A.}\ \bibnamefont
  {Enciso}}, \bibinfo {author} {\bibfnamefont {E.}~\bibnamefont {Sontag}},\
  and\ \bibinfo {author} {\bibfnamefont {Y.}~\bibnamefont {Zhang}},\ }\bibfield
   {title} {\bibinfo {title} {Algorithmic and complexity results for
  decompositions of biological networks into monotone subsystems},\ }\href
  {https://doi.org/10.1016/j.biosystems.2006.08.001} {\bibfield  {journal}
  {\bibinfo  {journal} {Biosystems}\ }\textbf {\bibinfo {volume} {90}},\
  \bibinfo {pages} {161} (\bibinfo {year} {2007})}\BibitemShut {NoStop}%
\bibitem [{\citenamefont {Berge}(1989)}]{Berg89}%
  \BibitemOpen
  \bibfield  {author} {\bibinfo {author} {\bibfnamefont {C.}~\bibnamefont
  {Berge}},\ }\href {https://doi.org/10.1007/978-1-4612-6290-9} {\emph
  {\bibinfo {title} {Hypergraphs: Combinatorics of Finite Sets}}},\ \bibinfo
  {edition} {2nd}\ ed.\ (\bibinfo  {publisher} {Elsevier Science Publishers},\
  \bibinfo {address} {New York, NY},\ \bibinfo {year} {1989})\BibitemShut
  {NoStop}%
\bibitem [{\citenamefont {Garey}\ and\ \citenamefont {Johnson}(1979)}]{GJ79}%
  \BibitemOpen
  \bibfield  {author} {\bibinfo {author} {\bibfnamefont {M.~R.}\ \bibnamefont
  {Garey}}\ and\ \bibinfo {author} {\bibfnamefont {D.~S.}\ \bibnamefont
  {Johnson}},\ }\href@noop {} {\emph {\bibinfo {title} {Computers and
  Intractability: A Guide to the Theory of NP-Completeness}}},\ \bibinfo
  {edition} {1st}\ ed.\ (\bibinfo  {publisher} {W. H. Freeman},\ \bibinfo
  {year} {1979})\BibitemShut {NoStop}%
\bibitem [{\citenamefont {Bridson}\ and\ \citenamefont
  {H\"{a}fliger}(1999)}]{book99}%
  \BibitemOpen
  \bibfield  {author} {\bibinfo {author} {\bibfnamefont {M.~R.}\ \bibnamefont
  {Bridson}}\ and\ \bibinfo {author} {\bibfnamefont {A.}~\bibnamefont
  {H\"{a}fliger}},\ }\href {https://doi.org/10.1007/978-3-662-12494-9} {\emph
  {\bibinfo {title} {Metric Spaces of Non-Positive Curvature}}},\ \bibinfo
  {edition} {1st}\ ed.\ (\bibinfo  {publisher} {Springer-Verlag Berlin
  Heidelberg},\ \bibinfo {year} {1999})\BibitemShut {NoStop}%
\bibitem [{\citenamefont {Berger}(2003)}]{Berger12}%
  \BibitemOpen
  \bibfield  {author} {\bibinfo {author} {\bibfnamefont {M.}~\bibnamefont
  {Berger}},\ }\href {https://doi.org/10.1007/978-3-642-18245-7} {\emph
  {\bibinfo {title} {A Panoramic View of Riemannian Geometry}}},\ \bibinfo
  {edition} {1st}\ ed.\ (\bibinfo  {publisher} {Springer-Verlag Berlin
  Heidelberg},\ \bibinfo {year} {2003})\BibitemShut {NoStop}%
\bibitem [{\citenamefont {Forman}(2003)}]{F03}%
  \BibitemOpen
  \bibfield  {author} {\bibinfo {author} {\bibfnamefont {R.}~\bibnamefont
  {Forman}},\ }\bibfield  {title} {\bibinfo {title} {Bochner's method for cell
  complexes and combinatorial ricci curvature},\ }\href
  {https://doi.org/10.1007/s00454-002-0743-x} {\bibfield  {journal} {\bibinfo
  {journal} {Discrete and Computational Geometry}\ }\textbf {\bibinfo {volume}
  {29}},\ \bibinfo {pages} {323} (\bibinfo {year} {2003})}\BibitemShut
  {NoStop}%
\bibitem [{\citenamefont {Sreejith}\ \emph {et~al.}(2016)\citenamefont
  {Sreejith}, \citenamefont {Mohanraj}, \citenamefont {Jost}, \citenamefont
  {Saucan},\ and\ \citenamefont {Samal}}]{Sree1}%
  \BibitemOpen
  \bibfield  {author} {\bibinfo {author} {\bibfnamefont {R.~P.}\ \bibnamefont
  {Sreejith}}, \bibinfo {author} {\bibfnamefont {K.}~\bibnamefont {Mohanraj}},
  \bibinfo {author} {\bibfnamefont {J.}~\bibnamefont {Jost}}, \bibinfo {author}
  {\bibfnamefont {E.}~\bibnamefont {Saucan}},\ and\ \bibinfo {author}
  {\bibfnamefont {A.}~\bibnamefont {Samal}},\ }\bibfield  {title} {\bibinfo
  {title} {Forman curvature for complex networks},\ }\href
  {https://doi.org/10.1088/1742-5468/2016/06/063206} {\bibfield  {journal}
  {\bibinfo  {journal} {Journal of Statistical Mechanics: Theory and
  Experiment}\ }\textbf {\bibinfo {volume} {2016}},\ \bibinfo {pages} {063206}
  (\bibinfo {year} {2016})}\BibitemShut {NoStop}%
\bibitem [{\citenamefont {Sreejith}\ \emph {et~al.}(2017)\citenamefont
  {Sreejith}, \citenamefont {Jost}, \citenamefont {Saucan},\ and\ \citenamefont
  {Samal}}]{Sree2}%
  \BibitemOpen
  \bibfield  {author} {\bibinfo {author} {\bibfnamefont {R.~P.}\ \bibnamefont
  {Sreejith}}, \bibinfo {author} {\bibfnamefont {J.}~\bibnamefont {Jost}},
  \bibinfo {author} {\bibfnamefont {E.}~\bibnamefont {Saucan}},\ and\ \bibinfo
  {author} {\bibfnamefont {A.}~\bibnamefont {Samal}},\ }\bibfield  {title}
  {\bibinfo {title} {Systematic evaluation of a new combinatorial curvature for
  complex networks},\ }\href {https://doi.org/10.1016/j.chaos.2017.05.021}
  {\bibfield  {journal} {\bibinfo  {journal} {Chaos, Solitons and Fractals}\
  }\textbf {\bibinfo {volume} {101}},\ \bibinfo {pages} {50} (\bibinfo {year}
  {2017})}\BibitemShut {NoStop}%
\bibitem [{\citenamefont {Weber}\ \emph {et~al.}(2017)\citenamefont {Weber},
  \citenamefont {Saucan},\ and\ \citenamefont {Jost}}]{Weber17}%
  \BibitemOpen
  \bibfield  {author} {\bibinfo {author} {\bibfnamefont {M.}~\bibnamefont
  {Weber}}, \bibinfo {author} {\bibfnamefont {E.}~\bibnamefont {Saucan}},\ and\
  \bibinfo {author} {\bibfnamefont {J.}~\bibnamefont {Jost}},\ }\bibfield
  {title} {\bibinfo {title} {Characterizing complex networks with forman-ricci
  curvature and associated geometric flows},\ }\href
  {https://doi.org/10.1093/comnet/cnw030} {\bibfield  {journal} {\bibinfo
  {journal} {Journal of Complex Networks}\ }\textbf {\bibinfo {volume} {5}},\
  \bibinfo {pages} {527} (\bibinfo {year} {2017})}\BibitemShut {NoStop}%
\bibitem [{\citenamefont {{DasGupta}}\ \emph {et~al.}(2020)\citenamefont
  {{DasGupta}}, \citenamefont {Janardhanan},\ and\ \citenamefont
  {Yahyanejad}}]{DJY20}%
  \BibitemOpen
  \bibfield  {author} {\bibinfo {author} {\bibfnamefont {B.}~\bibnamefont
  {{DasGupta}}}, \bibinfo {author} {\bibfnamefont {M.~V.}\ \bibnamefont
  {Janardhanan}},\ and\ \bibinfo {author} {\bibfnamefont {F.}~\bibnamefont
  {Yahyanejad}},\ }\bibfield  {title} {\bibinfo {title} {Why did the shape of
  your network change? (on detecting network anomalies via non-local
  curvatures)},\ }\href {https://doi.org/10.1007/s00453-019-00665-7} {\bibfield
   {journal} {\bibinfo  {journal} {Algorithmica}\ }\textbf {\bibinfo {volume}
  {82}},\ \bibinfo {pages} {1741} (\bibinfo {year} {2020})}\BibitemShut
  {NoStop}%
\bibitem [{\citenamefont {Samal}\ \emph {et~al.}(2018)\citenamefont {Samal},
  \citenamefont {Sreejith}, \citenamefont {Gu}, \citenamefont {Liu},
  \citenamefont {Saucan},\ and\ \citenamefont {Jost}}]{Samal18}%
  \BibitemOpen
  \bibfield  {author} {\bibinfo {author} {\bibfnamefont {A.}~\bibnamefont
  {Samal}}, \bibinfo {author} {\bibfnamefont {R.~P.}\ \bibnamefont {Sreejith}},
  \bibinfo {author} {\bibfnamefont {J.}~\bibnamefont {Gu}}, \bibinfo {author}
  {\bibfnamefont {S.}~\bibnamefont {Liu}}, \bibinfo {author} {\bibfnamefont
  {E.}~\bibnamefont {Saucan}},\ and\ \bibinfo {author} {\bibfnamefont
  {J.}~\bibnamefont {Jost}},\ }\bibfield  {title} {\bibinfo {title}
  {Comparative analysis of two discretizations of ricci curvature for complex
  networks},\ }\href {https://doi.org/10.1038/s41598-018-27001-3} {\bibfield
  {journal} {\bibinfo  {journal} {Scientific Reports}\ }\textbf {\bibinfo
  {volume} {8}},\ \bibinfo {pages} {8650} (\bibinfo {year} {2018})}\BibitemShut
  {NoStop}%
\bibitem [{\citenamefont {Chatterjee}\ \emph {et~al.}(2021)\citenamefont
  {Chatterjee}, \citenamefont {{Albert}}, \citenamefont {Thapliyal},
  \citenamefont {Azarhooshang},\ and\ \citenamefont {{DasGupta}}}]{CATAD21}%
  \BibitemOpen
  \bibfield  {author} {\bibinfo {author} {\bibfnamefont {T.}~\bibnamefont
  {Chatterjee}}, \bibinfo {author} {\bibfnamefont {R.}~\bibnamefont
  {{Albert}}}, \bibinfo {author} {\bibfnamefont {S.}~\bibnamefont {Thapliyal}},
  \bibinfo {author} {\bibfnamefont {N.}~\bibnamefont {Azarhooshang}},\ and\
  \bibinfo {author} {\bibfnamefont {B.}~\bibnamefont {{DasGupta}}},\ }\bibfield
   {title} {\bibinfo {title} {Detecting network anomalies using forman-ricci
  curvature and a case study for human brain networks},\ }\bibfield  {journal}
  {\bibinfo  {journal} {Scientific Reports}\ }\textbf {\bibinfo {volume}
  {11}},\ \href {https://doi.org/10.1038/s41598-021-87587-z}
  {10.1038/s41598-021-87587-z} (\bibinfo {year} {2021})\BibitemShut {NoStop}%
\bibitem [{\citenamefont {Ollivier}(2013)}]{Oll11}%
  \BibitemOpen
  \bibfield  {author} {\bibinfo {author} {\bibfnamefont {Y.}~\bibnamefont
  {Ollivier}},\ }\bibfield  {title} {\bibinfo {title} {A visual introduction to
  \mbox{R}iemannian curvatures and some discrete generalizations},\ }in\ \href
  {https://doi.org/10.1090/crmp/056/08} {\emph {\bibinfo {booktitle} {Analysis
  and Geometry of Metric Measure Spaces: Lecture Notes of the 50th
  S{\'e}minaire de Math{\'e}matiques Sup{\'e}rieures (SMS), Montr\'{e}al,
  2011}}},\ Vol.~\bibinfo {volume} {56},\ \bibinfo {editor} {edited by\
  \bibinfo {editor} {\bibfnamefont {G.}~\bibnamefont {Dafni}}, \bibinfo
  {editor} {\bibfnamefont {R.~J.}\ \bibnamefont {McCann}},\ and\ \bibinfo
  {editor} {\bibfnamefont {A.}~\bibnamefont {Stancu}}}\ (\bibinfo  {publisher}
  {American Mathematical Society},\ \bibinfo {address} {Providence, RI, USA},\
  \bibinfo {year} {2013})\ pp.\ \bibinfo {pages} {197--219}\BibitemShut
  {NoStop}%
\bibitem [{\citenamefont {Ollivier}(2009)}]{Oll09}%
  \BibitemOpen
  \bibfield  {author} {\bibinfo {author} {\bibfnamefont {Y.}~\bibnamefont
  {Ollivier}},\ }\bibfield  {title} {\bibinfo {title} {Ricci curvature of
  markov chains on metric spaces},\ }\href
  {https://doi.org/10.1016/j.jfa.2008.11.001} {\bibfield  {journal} {\bibinfo
  {journal} {Journal of Functional Analysis}\ }\textbf {\bibinfo {volume}
  {256}},\ \bibinfo {pages} {810} (\bibinfo {year} {2009})}\BibitemShut
  {NoStop}%
\bibitem [{\citenamefont {Ollivier}(2010)}]{Oll10}%
  \BibitemOpen
  \bibfield  {author} {\bibinfo {author} {\bibfnamefont {Y.}~\bibnamefont
  {Ollivier}},\ }\bibfield  {title} {\bibinfo {title} {A survey of ricci
  curvature for metric spaces and markov chains},\ }in\ \href
  {https://doi.org/10.2969/aspm/05710343} {\emph {\bibinfo {booktitle}
  {Advanced Studies in Pure Mathematics}}},\ Vol.~\bibinfo {volume} {57},\
  \bibinfo {editor} {edited by\ \bibinfo {editor} {\bibfnamefont
  {M.}~\bibnamefont {Kotani}}, \bibinfo {editor} {\bibfnamefont
  {M.}~\bibnamefont {Hino}},\ and\ \bibinfo {editor} {\bibfnamefont
  {T.}~\bibnamefont {Kumagai}}}\ (\bibinfo  {publisher} {Mathematical Society
  of Japan},\ \bibinfo {year} {2010})\ pp.\ \bibinfo {pages}
  {343--381}\BibitemShut {NoStop}%
\bibitem [{\citenamefont {Ollivier}(2007)}]{Oll07}%
  \BibitemOpen
  \bibfield  {author} {\bibinfo {author} {\bibfnamefont {Y.}~\bibnamefont
  {Ollivier}},\ }\bibfield  {title} {\bibinfo {title} {Ricci curvature of
  metric spaces},\ }\href {https://doi.org/10.1016/j.crma.2007.10.041}
  {\bibfield  {journal} {\bibinfo  {journal} {Comptes Rendus Mathematique}\
  }\textbf {\bibinfo {volume} {345}},\ \bibinfo {pages} {643} (\bibinfo {year}
  {2007})}\BibitemShut {NoStop}%
\bibitem [{\citenamefont {Asoodeh}\ \emph {et~al.}(2018)\citenamefont
  {Asoodeh}, \citenamefont {Gao},\ and\ \citenamefont {Evans}}]{AGE18}%
  \BibitemOpen
  \bibfield  {author} {\bibinfo {author} {\bibfnamefont {S.}~\bibnamefont
  {Asoodeh}}, \bibinfo {author} {\bibfnamefont {T.}~\bibnamefont {Gao}},\ and\
  \bibinfo {author} {\bibfnamefont {J.}~\bibnamefont {Evans}},\ }\bibfield
  {title} {\bibinfo {title} {Curvature of hypergraphs via multi-marginal
  optimal transport},\ }in\ \href {https://doi.org/10.1109/CDC.2018.8619706}
  {\emph {\bibinfo {booktitle} {2018 IEEE Conference on Decision and
  Control}}}\ (\bibinfo {year} {2018})\ pp.\ \bibinfo {pages}
  {1180--1185}\BibitemShut {NoStop}%
\bibitem [{\citenamefont {Eidi}\ and\ \citenamefont {Jost}(2020)}]{EJ20}%
  \BibitemOpen
  \bibfield  {author} {\bibinfo {author} {\bibfnamefont {M.}~\bibnamefont
  {Eidi}}\ and\ \bibinfo {author} {\bibfnamefont {J.}~\bibnamefont {Jost}},\
  }\bibfield  {title} {\bibinfo {title} {Ollivier ricci curvature of directed
  hypergraphs},\ }\href {https://doi.org/10.1038/s41598-020-68619-6} {\bibfield
   {journal} {\bibinfo  {journal} {Scientific Reports}\ }\textbf {\bibinfo
  {volume} {10}},\ \bibinfo {pages} {12466} (\bibinfo {year}
  {2020})}\BibitemShut {NoStop}%
\bibitem [{\citenamefont {Coupette}\ \emph {et~al.}(2023)\citenamefont
  {Coupette}, \citenamefont {Dalleiger},\ and\ \citenamefont {Rieck}}]{CDR23}%
  \BibitemOpen
  \bibfield  {author} {\bibinfo {author} {\bibfnamefont {C.}~\bibnamefont
  {Coupette}}, \bibinfo {author} {\bibfnamefont {S.}~\bibnamefont
  {Dalleiger}},\ and\ \bibinfo {author} {\bibfnamefont {B.}~\bibnamefont
  {Rieck}},\ }\bibfield  {title} {\bibinfo {title} {Ollivier-ricci curvature
  for hypergraphs: A unified framework},\ }in\ \href
  {https://openreview.net/forum?id=sPCKNl5qDps} {\emph {\bibinfo {booktitle}
  {The Eleventh International Conference on Learning Representations}}}\
  (\bibinfo {year} {2023})\BibitemShut {NoStop}%
\bibitem [{\citenamefont {Akamatsu}(2022)}]{Ak22}%
  \BibitemOpen
  \bibfield  {author} {\bibinfo {author} {\bibfnamefont {T.}~\bibnamefont
  {Akamatsu}},\ }\bibfield  {title} {\bibinfo {title} {A new transport distance
  and its associated ricci curvature of hypergraphs},\ }\href
  {https://doi.org/10.1515/agms-2022-0135} {\bibfield  {journal} {\bibinfo
  {journal} {Analysis and Geometry in Metric Spaces}\ }\textbf {\bibinfo
  {volume} {10}},\ \bibinfo {pages} {90} (\bibinfo {year} {2022})}\BibitemShut
  {NoStop}%
\bibitem [{\citenamefont {Leal}\ \emph {et~al.}(2021)\citenamefont {Leal},
  \citenamefont {Restrepo}, \citenamefont {Stadler},\ and\ \citenamefont
  {Jost}}]{LRSJ21}%
  \BibitemOpen
  \bibfield  {author} {\bibinfo {author} {\bibfnamefont {W.}~\bibnamefont
  {Leal}}, \bibinfo {author} {\bibfnamefont {G.}~\bibnamefont {Restrepo}},
  \bibinfo {author} {\bibfnamefont {P.~F.}\ \bibnamefont {Stadler}},\ and\
  \bibinfo {author} {\bibfnamefont {J.}~\bibnamefont {Jost}},\ }\bibfield
  {title} {\bibinfo {title} {Forman-ricci curvature for hypergraphs},\ }\href
  {https://doi.org/10.1142/S021952592150003X} {\bibfield  {journal} {\bibinfo
  {journal} {Advances in Complex Systems}\ }\textbf {\bibinfo {volume} {24}},\
  \bibinfo {pages} {2150003} (\bibinfo {year} {2021})},\ \Eprint
  {https://arxiv.org/abs/https://doi.org/10.1142/S021952592150003X}
  {https://doi.org/10.1142/S021952592150003X} \BibitemShut {NoStop}%
\bibitem [{\citenamefont {Hamilton}(1982)}]{Ha81}%
  \BibitemOpen
  \bibfield  {author} {\bibinfo {author} {\bibfnamefont {R.~S.}\ \bibnamefont
  {Hamilton}},\ }\bibfield  {title} {\bibinfo {title} {{Three-manifolds with
  positive Ricci curvature}},\ }\href {https://doi.org/10.4310/jdg/1214436922}
  {\bibfield  {journal} {\bibinfo  {journal} {Journal of Differential
  Geometry}\ }\textbf {\bibinfo {volume} {17}},\ \bibinfo {pages} {255}
  (\bibinfo {year} {1982})}\BibitemShut {NoStop}%
\bibitem [{\citenamefont {Perelman}(2002)}]{Pe02}%
  \BibitemOpen
  \bibfield  {author} {\bibinfo {author} {\bibfnamefont {G.}~\bibnamefont
  {Perelman}},\ }\bibfield  {title} {\bibinfo {title} {The entropy formula for
  the ricci flow and its geometric applications},\ }\bibfield  {journal}
  {\bibinfo  {journal} {arXiv preprint arXiv:math/0211159v1}\ }\href
  {https://doi.org/10.48550/arXiv.math/0211159} {10.48550/arXiv.math/0211159}
  (\bibinfo {year} {2002})\BibitemShut {NoStop}%
\bibitem [{\citenamefont {Ni}\ \emph {et~al.}(2019)\citenamefont {Ni},
  \citenamefont {Lin}, \citenamefont {Luo},\ and\ \citenamefont
  {Gao}}]{NLLG19}%
  \BibitemOpen
  \bibfield  {author} {\bibinfo {author} {\bibfnamefont {C.-C.}\ \bibnamefont
  {Ni}}, \bibinfo {author} {\bibfnamefont {Y.-Y.}\ \bibnamefont {Lin}},
  \bibinfo {author} {\bibfnamefont {F.}~\bibnamefont {Luo}},\ and\ \bibinfo
  {author} {\bibfnamefont {J.}~\bibnamefont {Gao}},\ }\bibfield  {title}
  {\bibinfo {title} {Community detection on networks with ricci flow},\ }\href
  {https://doi.org/10.1038/s41598-019-46380-9} {\bibfield  {journal} {\bibinfo
  {journal} {Scientific Reports}\ }\textbf {\bibinfo {volume} {9}},\ \bibinfo
  {pages} {9984} (\bibinfo {year} {2019})}\BibitemShut {NoStop}%
\bibitem [{\citenamefont {Sia}\ \emph {et~al.}(2019)\citenamefont {Sia},
  \citenamefont {Jonckheere},\ and\ \citenamefont {Bogdan}}]{SJB19}%
  \BibitemOpen
  \bibfield  {author} {\bibinfo {author} {\bibfnamefont {J.}~\bibnamefont
  {Sia}}, \bibinfo {author} {\bibfnamefont {E.}~\bibnamefont {Jonckheere}},\
  and\ \bibinfo {author} {\bibfnamefont {P.}~\bibnamefont {Bogdan}},\
  }\bibfield  {title} {\bibinfo {title} {Ollivier-ricci curvature-based method
  to community detection in complex networks},\ }\href
  {https://doi.org/10.1038/s41598-019-46079-x} {\bibfield  {journal} {\bibinfo
  {journal} {Scientific Reports}\ }\textbf {\bibinfo {volume} {9}},\ \bibinfo
  {pages} {9800} (\bibinfo {year} {2019})}\BibitemShut {NoStop}%
\bibitem [{\citenamefont {Lai}\ \emph {et~al.}(2022)\citenamefont {Lai},
  \citenamefont {Bai},\ and\ \citenamefont {Lin}}]{LBL22}%
  \BibitemOpen
  \bibfield  {author} {\bibinfo {author} {\bibfnamefont {X.}~\bibnamefont
  {Lai}}, \bibinfo {author} {\bibfnamefont {S.}~\bibnamefont {Bai}},\ and\
  \bibinfo {author} {\bibfnamefont {Y.}~\bibnamefont {Lin}},\ }\bibfield
  {title} {\bibinfo {title} {Normalized discrete ricci flow used in community
  detection},\ }\href {https://doi.org/10.1016/j.physa.2022.127251} {\bibfield
  {journal} {\bibinfo  {journal} {Physica A: Statistical Mechanics and its
  Applications}\ }\textbf {\bibinfo {volume} {597}},\ \bibinfo {pages} {127251}
  (\bibinfo {year} {2022})}\BibitemShut {NoStop}%
\bibitem [{\citenamefont {Weber}\ \emph {et~al.}(2016)\citenamefont {Weber},
  \citenamefont {Jost},\ and\ \citenamefont {Saucan}}]{WJS16}%
  \BibitemOpen
  \bibfield  {author} {\bibinfo {author} {\bibfnamefont {M.}~\bibnamefont
  {Weber}}, \bibinfo {author} {\bibfnamefont {J.}~\bibnamefont {Jost}},\ and\
  \bibinfo {author} {\bibfnamefont {E.}~\bibnamefont {Saucan}},\ }\bibfield
  {title} {\bibinfo {title} {Forman-ricci flow for change detection in large
  dynamic data sets},\ }\bibfield  {journal} {\bibinfo  {journal} {Axioms}\
  }\textbf {\bibinfo {volume} {5}},\ \href
  {https://doi.org/10.3390/axioms5040026} {10.3390/axioms5040026} (\bibinfo
  {year} {2016})\BibitemShut {NoStop}%
\bibitem [{\citenamefont {Cohen}\ \emph {et~al.}(2022)\citenamefont {Cohen},
  \citenamefont {Nachshon}, \citenamefont {Maril}, \citenamefont {Naim},
  \citenamefont {Jost},\ and\ \citenamefont {Saucan}}]{axioms11090486}%
  \BibitemOpen
  \bibfield  {author} {\bibinfo {author} {\bibfnamefont {E.}~\bibnamefont
  {Cohen}}, \bibinfo {author} {\bibfnamefont {Y.}~\bibnamefont {Nachshon}},
  \bibinfo {author} {\bibfnamefont {A.}~\bibnamefont {Maril}}, \bibinfo
  {author} {\bibfnamefont {P.~M.}\ \bibnamefont {Naim}}, \bibinfo {author}
  {\bibfnamefont {J.}~\bibnamefont {Jost}},\ and\ \bibinfo {author}
  {\bibfnamefont {E.}~\bibnamefont {Saucan}},\ }\bibfield  {title} {\bibinfo
  {title} {Object-based dynamics: Applying forman-ricci flow on a multigraph to
  assess the impact of an object on the network structure},\ }\bibfield
  {journal} {\bibinfo  {journal} {Axioms}\ }\textbf {\bibinfo {volume} {11}},\
  \href {https://doi.org/10.3390/axioms11090486} {10.3390/axioms11090486}
  (\bibinfo {year} {2022})\BibitemShut {NoStop}%
\bibitem [{\citenamefont {Ni}\ \emph {et~al.}(2018)\citenamefont {Ni},
  \citenamefont {Lin}, \citenamefont {Gao},\ and\ \citenamefont {Gu}}]{NLGG18}%
  \BibitemOpen
  \bibfield  {author} {\bibinfo {author} {\bibfnamefont {C.-C.}\ \bibnamefont
  {Ni}}, \bibinfo {author} {\bibfnamefont {Y.-Y.}\ \bibnamefont {Lin}},
  \bibinfo {author} {\bibfnamefont {J.}~\bibnamefont {Gao}},\ and\ \bibinfo
  {author} {\bibfnamefont {X.}~\bibnamefont {Gu}},\ }\bibfield  {title}
  {\bibinfo {title} {Network alignment by discrete ollivier-ricci flow},\ }in\
  \href {https://doi.org/10.1007/978-3-030-04414-5_32} {\emph {\bibinfo
  {booktitle} {Graph Drawing and Network Visualization}}},\ \bibinfo {editor}
  {edited by\ \bibinfo {editor} {\bibfnamefont {T.}~\bibnamefont {Biedl}}\ and\
  \bibinfo {editor} {\bibfnamefont {A.}~\bibnamefont {Kerren}}}\ (\bibinfo
  {publisher} {Springer International Publishing},\ \bibinfo {year} {2018})\
  pp.\ \bibinfo {pages} {447--462}\BibitemShut {NoStop}%
\bibitem [{\citenamefont {Kim}\ \emph {et~al.}(2023)\citenamefont {Kim},
  \citenamefont {Jeong}, \citenamefont {Lim},\ and\ \citenamefont
  {Kim}}]{KIM2023290}%
  \BibitemOpen
  \bibfield  {author} {\bibinfo {author} {\bibfnamefont {J.}~\bibnamefont
  {Kim}}, \bibinfo {author} {\bibfnamefont {H.~J.}\ \bibnamefont {Jeong}},
  \bibinfo {author} {\bibfnamefont {S.}~\bibnamefont {Lim}},\ and\ \bibinfo
  {author} {\bibfnamefont {J.}~\bibnamefont {Kim}},\ }\bibfield  {title}
  {\bibinfo {title} {Effective and efficient core computation in signed
  networks},\ }\href {https://doi.org/10.1016/j.ins.2023.03.097} {\bibfield
  {journal} {\bibinfo  {journal} {Information Sciences}\ }\textbf {\bibinfo
  {volume} {634}},\ \bibinfo {pages} {290} (\bibinfo {year}
  {2023})}\BibitemShut {NoStop}%
\bibitem [{\citenamefont {Fornito}\ \emph {et~al.}(2016)\citenamefont
  {Fornito}, \citenamefont {Zalesky},\ and\ \citenamefont
  {Bullmore}}]{fornito2016fundamentals}%
  \BibitemOpen
  \bibfield  {author} {\bibinfo {author} {\bibfnamefont {A.}~\bibnamefont
  {Fornito}}, \bibinfo {author} {\bibfnamefont {A.}~\bibnamefont {Zalesky}},\
  and\ \bibinfo {author} {\bibfnamefont {E.}~\bibnamefont {Bullmore}},\
  }\href@noop {} {\emph {\bibinfo {title} {Fundamentals of brain network
  analysis}}},\ \bibinfo {edition} {1st}\ ed.\ (\bibinfo  {publisher} {Academic
  press},\ \bibinfo {year} {2016})\BibitemShut {NoStop}%
\bibitem [{\citenamefont {Sporns}\ and\ \citenamefont
  {Betzel}(2016)}]{annurev-content-journals-annurev-psych}%
  \BibitemOpen
  \bibfield  {author} {\bibinfo {author} {\bibfnamefont {O.}~\bibnamefont
  {Sporns}}\ and\ \bibinfo {author} {\bibfnamefont {R.~F.}\ \bibnamefont
  {Betzel}},\ }\bibfield  {title} {\bibinfo {title} {Modular brain networks},\
  }\href {https://doi.org/10.1146/annurev-psych-122414-033634} {\bibfield
  {journal} {\bibinfo  {journal} {Annual Review of Psychology}\ }\textbf
  {\bibinfo {volume} {67}},\ \bibinfo {pages} {613} (\bibinfo {year}
  {2016})}\BibitemShut {NoStop}%
\bibitem [{\citenamefont {Harriger}\ \emph {et~al.}(2012)\citenamefont
  {Harriger}, \citenamefont {{van den Heuvel}},\ and\ \citenamefont
  {Sporns}}]{harrispo16}%
  \BibitemOpen
  \bibfield  {author} {\bibinfo {author} {\bibfnamefont {L.}~\bibnamefont
  {Harriger}}, \bibinfo {author} {\bibfnamefont {M.~P.}\ \bibnamefont {{van den
  Heuvel}}},\ and\ \bibinfo {author} {\bibfnamefont {O.}~\bibnamefont
  {Sporns}},\ }\bibfield  {title} {\bibinfo {title} {Rich club organization of
  macaque cerebral cortex and its role in network communication},\ }\href
  {https://doi.org/10.1371/journal.pone.0046497} {\bibfield  {journal}
  {\bibinfo  {journal} {PLoS ONE}\ }\textbf {\bibinfo {volume} {7}},\ \bibinfo
  {pages} {e46497} (\bibinfo {year} {2012})}\BibitemShut {NoStop}%
\bibitem [{\citenamefont {Kitazono}\ \emph {et~al.}(2020)\citenamefont
  {Kitazono}, \citenamefont {Kanai},\ and\ \citenamefont
  {Oizumi}}]{KITAZONO2020232}%
  \BibitemOpen
  \bibfield  {author} {\bibinfo {author} {\bibfnamefont {J.}~\bibnamefont
  {Kitazono}}, \bibinfo {author} {\bibfnamefont {R.}~\bibnamefont {Kanai}},\
  and\ \bibinfo {author} {\bibfnamefont {M.}~\bibnamefont {Oizumi}},\
  }\bibfield  {title} {\bibinfo {title} {Efficient search for informational
  cores in complex systems: Application to brain networks},\ }\href
  {https://doi.org/10.1016/j.neunet.2020.08.020} {\bibfield  {journal}
  {\bibinfo  {journal} {Neural Networks}\ }\textbf {\bibinfo {volume} {132}},\
  \bibinfo {pages} {232} (\bibinfo {year} {2020})}\BibitemShut {NoStop}%
\bibitem [{\citenamefont {Newman}\ and\ \citenamefont {Girvan}(2004)}]{NG04}%
  \BibitemOpen
  \bibfield  {author} {\bibinfo {author} {\bibfnamefont {M.~E.~J.}\
  \bibnamefont {Newman}}\ and\ \bibinfo {author} {\bibfnamefont
  {M.}~\bibnamefont {Girvan}},\ }\bibfield  {title} {\bibinfo {title} {Finding
  and evaluating community structure in networks},\ }\href
  {https://doi.org/10.1103/PhysRevE.69.026113} {\bibfield  {journal} {\bibinfo
  {journal} {Physical Review E}\ }\textbf {\bibinfo {volume} {69}},\ \bibinfo
  {pages} {026113} (\bibinfo {year} {2004})}\BibitemShut {NoStop}%
\bibitem [{\citenamefont {Leicht}\ and\ \citenamefont {Newman}(2008)}]{LN08}%
  \BibitemOpen
  \bibfield  {author} {\bibinfo {author} {\bibfnamefont {E.~A.}\ \bibnamefont
  {Leicht}}\ and\ \bibinfo {author} {\bibfnamefont {M.~E.~J.}\ \bibnamefont
  {Newman}},\ }\bibfield  {title} {\bibinfo {title} {Community structure in
  directed networks},\ }\href {https://doi.org/10.1103/PhysRevLett.100.118703}
  {\bibfield  {journal} {\bibinfo  {journal} {Physical Review Letters}\
  }\textbf {\bibinfo {volume} {100}},\ \bibinfo {pages} {118703} (\bibinfo
  {year} {2008})}\BibitemShut {NoStop}%
\bibitem [{\citenamefont {Newman}(2006)}]{N06}%
  \BibitemOpen
  \bibfield  {author} {\bibinfo {author} {\bibfnamefont {M.~E.~J.}\
  \bibnamefont {Newman}},\ }\bibfield  {title} {\bibinfo {title} {Modularity
  and community structure in networks},\ }\href
  {https://doi.org/10.1073/pnas.0601602103} {\bibfield  {journal} {\bibinfo
  {journal} {Proceedings of the National Academy of Sciences}\ }\textbf
  {\bibinfo {volume} {103}},\ \bibinfo {pages} {8577} (\bibinfo {year}
  {2006})},\ \Eprint
  {https://arxiv.org/abs/https://www.pnas.org/content/103/23/8577.full.pdf}
  {https://www.pnas.org/content/103/23/8577.full.pdf} \BibitemShut {NoStop}%
\bibitem [{\citenamefont {{DasGupta}}\ and\ \citenamefont
  {Desai}(2013)}]{DD13}%
  \BibitemOpen
  \bibfield  {author} {\bibinfo {author} {\bibfnamefont {B.}~\bibnamefont
  {{DasGupta}}}\ and\ \bibinfo {author} {\bibfnamefont {D.}~\bibnamefont
  {Desai}},\ }\bibfield  {title} {\bibinfo {title} {On the complexity of
  {N}ewman's community finding approach for biological and social networks},\
  }\href {https://doi.org/10.1016/j.jcss.2012.04.003} {\bibfield  {journal}
  {\bibinfo  {journal} {Journal of Computer and System Sciences}\ }\textbf
  {\bibinfo {volume} {79}},\ \bibinfo {pages} {50} (\bibinfo {year}
  {2013})}\BibitemShut {NoStop}%
\bibitem [{\citenamefont {{DasGupta}}(2014)}]{D14}%
  \BibitemOpen
  \bibfield  {author} {\bibinfo {author} {\bibfnamefont {B.}~\bibnamefont
  {{DasGupta}}},\ }\bibfield  {title} {\bibinfo {title} {Computational
  complexities of optimization problems related to model based clustering of
  networks},\ }in\ \href {https://doi.org/10.1007/978-1-4939-0808-0\_5} {\emph
  {\bibinfo {booktitle} {Optimization in Science and Engineering}}},\ \bibinfo
  {editor} {edited by\ \bibinfo {editor} {\bibfnamefont {T.}~\bibnamefont
  {Rassias}}, \bibinfo {editor} {\bibfnamefont {C.}~\bibnamefont {Floudas}},\
  and\ \bibinfo {editor} {\bibfnamefont {S.}~\bibnamefont {Butenko}}}\
  (\bibinfo  {publisher} {Springer},\ \bibinfo {address} {New York, NY},\
  \bibinfo {year} {2014})\ pp.\ \bibinfo {pages} {97--113}\BibitemShut
  {NoStop}%
\bibitem [{\citenamefont {Tudisco}\ and\ \citenamefont
  {Higham}(2023)}]{doi:10.1137/22M1480926}%
  \BibitemOpen
  \bibfield  {author} {\bibinfo {author} {\bibfnamefont {F.}~\bibnamefont
  {Tudisco}}\ and\ \bibinfo {author} {\bibfnamefont {D.~J.}\ \bibnamefont
  {Higham}},\ }\bibfield  {title} {\bibinfo {title} {Core-periphery detection
  in hypergraphs},\ }\href {https://doi.org/10.1137/22M1480926} {\bibfield
  {journal} {\bibinfo  {journal} {SIAM Journal on Mathematics of Data Science}\
  }\textbf {\bibinfo {volume} {5}},\ \bibinfo {pages} {1} (\bibinfo {year}
  {2023})},\ \Eprint {https://arxiv.org/abs/https://doi.org/10.1137/22M1480926}
  {https://doi.org/10.1137/22M1480926} \BibitemShut {NoStop}%
\bibitem [{\citenamefont {Chien}\ \emph {et~al.}(2018)\citenamefont {Chien},
  \citenamefont {Lin},\ and\ \citenamefont {Wang}}]{pmlr-v84-chien18a}%
  \BibitemOpen
  \bibfield  {author} {\bibinfo {author} {\bibfnamefont {I.}~\bibnamefont
  {Chien}}, \bibinfo {author} {\bibfnamefont {C.-Y.}\ \bibnamefont {Lin}},\
  and\ \bibinfo {author} {\bibfnamefont {I.-H.}\ \bibnamefont {Wang}},\
  }\bibfield  {title} {\bibinfo {title} {Community detection in hypergraphs:
  Optimal statistical limit and efficient algorithms},\ }in\ \href
  {https://proceedings.mlr.press/v84/chien18a.html} {\emph {\bibinfo
  {booktitle} {Proceedings of the Twenty-First International Conference on
  Artificial Intelligence and Statistics}}},\ \bibinfo {series} {Proceedings of
  Machine Learning Research}, Vol.~\bibinfo {volume} {84},\ \bibinfo {editor}
  {edited by\ \bibinfo {editor} {\bibfnamefont {A.}~\bibnamefont {Storkey}}\
  and\ \bibinfo {editor} {\bibfnamefont {F.}~\bibnamefont {Perez-Cruz}}}\
  (\bibinfo  {publisher} {PMLR},\ \bibinfo {year} {2018})\ pp.\ \bibinfo
  {pages} {871--879}\BibitemShut {NoStop}%
\bibitem [{\citenamefont {Ruggeri}\ \emph {et~al.}(2023)\citenamefont
  {Ruggeri}, \citenamefont {Contisciani}, \citenamefont {Battiston},\ and\
  \citenamefont {Bacco}}]{doi:10.1126-sciadv.adg9159}%
  \BibitemOpen
  \bibfield  {author} {\bibinfo {author} {\bibfnamefont {N.}~\bibnamefont
  {Ruggeri}}, \bibinfo {author} {\bibfnamefont {M.}~\bibnamefont
  {Contisciani}}, \bibinfo {author} {\bibfnamefont {F.}~\bibnamefont
  {Battiston}},\ and\ \bibinfo {author} {\bibfnamefont {C.~D.}\ \bibnamefont
  {Bacco}},\ }\bibfield  {title} {\bibinfo {title} {Community detection in
  large hypergraphs},\ }\bibfield  {journal} {\bibinfo  {journal} {Science
  Advances}\ }\textbf {\bibinfo {volume} {9}},\ \href
  {https://doi.org/10.1126/sciadv.adg9159} {10.1126/sciadv.adg9159} (\bibinfo
  {year} {2023}),\ \Eprint
  {https://arxiv.org/abs/https://www.science.org/doi/pdf/10.1126/sciadv.adg9159}
  {https://www.science.org/doi/pdf/10.1126/sciadv.adg9159} \BibitemShut
  {NoStop}%
\bibitem [{\citenamefont {Carletti}\ \emph {et~al.}(2021)\citenamefont
  {Carletti}, \citenamefont {Fanelli},\ and\ \citenamefont
  {Lambiotte}}]{Carletti_2021}%
  \BibitemOpen
  \bibfield  {author} {\bibinfo {author} {\bibfnamefont {T.}~\bibnamefont
  {Carletti}}, \bibinfo {author} {\bibfnamefont {D.}~\bibnamefont {Fanelli}},\
  and\ \bibinfo {author} {\bibfnamefont {R.}~\bibnamefont {Lambiotte}},\
  }\bibfield  {title} {\bibinfo {title} {Random walks and community detection
  in hypergraphs},\ }\href {https://doi.org/10.1088/2632-072X/abe27e}
  {\bibfield  {journal} {\bibinfo  {journal} {Journal of Physics: Complexity}\
  }\textbf {\bibinfo {volume} {2}},\ \bibinfo {pages} {015011} (\bibinfo {year}
  {2021})}\BibitemShut {NoStop}%
\bibitem [{\citenamefont {Eriksson}\ \emph {et~al.}(2021)\citenamefont
  {Eriksson}, \citenamefont {Edler}, \citenamefont {Rojas}, \citenamefont
  {de~Domenico},\ and\ \citenamefont {Rosvall}}]{Erikss21}%
  \BibitemOpen
  \bibfield  {author} {\bibinfo {author} {\bibfnamefont {A.}~\bibnamefont
  {Eriksson}}, \bibinfo {author} {\bibfnamefont {D.}~\bibnamefont {Edler}},
  \bibinfo {author} {\bibfnamefont {A.}~\bibnamefont {Rojas}}, \bibinfo
  {author} {\bibfnamefont {M.}~\bibnamefont {de~Domenico}},\ and\ \bibinfo
  {author} {\bibfnamefont {M.}~\bibnamefont {Rosvall}},\ }\bibfield  {title}
  {\bibinfo {title} {How choosing random-walk model and network representation
  matters for flow-based community detection in hypergraphs},\ }\href
  {https://doi.org/10.1038/s42005-021-00634-z} {\bibfield  {journal} {\bibinfo
  {journal} {Communications Physics}\ }\textbf {\bibinfo {volume} {4}},\
  \bibinfo {pages} {1093} (\bibinfo {year} {2021})}\BibitemShut {NoStop}%
\bibitem [{\citenamefont {Kritschgau}\ \emph {et~al.}(2024)\citenamefont
  {Kritschgau}, \citenamefont {Kaiser}, \citenamefont {Rodriguez},
  \citenamefont {Amburg}, \citenamefont {Bolkema}, \citenamefont {Grubb},
  \citenamefont {Lan}, \citenamefont {Maleki}, \citenamefont {Chodrow},\ and\
  \citenamefont {Kay}}]{Krits24}%
  \BibitemOpen
  \bibfield  {author} {\bibinfo {author} {\bibfnamefont {J.}~\bibnamefont
  {Kritschgau}}, \bibinfo {author} {\bibfnamefont {D.}~\bibnamefont {Kaiser}},
  \bibinfo {author} {\bibfnamefont {O.~A.}\ \bibnamefont {Rodriguez}}, \bibinfo
  {author} {\bibfnamefont {I.}~\bibnamefont {Amburg}}, \bibinfo {author}
  {\bibfnamefont {J.}~\bibnamefont {Bolkema}}, \bibinfo {author} {\bibfnamefont
  {T.}~\bibnamefont {Grubb}}, \bibinfo {author} {\bibfnamefont
  {F.}~\bibnamefont {Lan}}, \bibinfo {author} {\bibfnamefont {S.}~\bibnamefont
  {Maleki}}, \bibinfo {author} {\bibfnamefont {P.}~\bibnamefont {Chodrow}},\
  and\ \bibinfo {author} {\bibfnamefont {B.}~\bibnamefont {Kay}},\ }\bibfield
  {title} {\bibinfo {title} {Community detection in hypergraphs via mutual
  information maximization.},\ }\bibfield  {journal} {\bibinfo  {journal}
  {Scientific Reports}\ }\textbf {\bibinfo {volume} {14}},\ \href
  {https://doi.org/10.1038/s41598-024-55934-5} {10.1038/s41598-024-55934-5}
  (\bibinfo {year} {2024})\BibitemShut {NoStop}%
\bibitem [{\citenamefont {Zhen}\ and\ \citenamefont
  {Wang}(2023)}]{doi:10.1080-01621459-2021-2002157}%
  \BibitemOpen
  \bibfield  {author} {\bibinfo {author} {\bibfnamefont {Y.}~\bibnamefont
  {Zhen}}\ and\ \bibinfo {author} {\bibfnamefont {J.}~\bibnamefont {Wang}},\
  }\bibfield  {title} {\bibinfo {title} {Community detection in general
  hypergraph via graph embedding},\ }\href
  {https://doi.org/10.1080/01621459.2021.2002157} {\bibfield  {journal}
  {\bibinfo  {journal} {Journal of the American Statistical Association}\
  }\textbf {\bibinfo {volume} {118}},\ \bibinfo {pages} {1620} (\bibinfo {year}
  {2023})},\ \Eprint
  {https://arxiv.org/abs/https://doi.org/10.1080/01621459.2021.2002157}
  {https://doi.org/10.1080/01621459.2021.2002157} \BibitemShut {NoStop}%
\bibitem [{\citenamefont {Eriksson}\ \emph {et~al.}(2022)\citenamefont
  {Eriksson}, \citenamefont {Carletti}, \citenamefont {Lambiotte},
  \citenamefont {Rojas},\ and\ \citenamefont {Rosvall}}]{Eriksson2022}%
  \BibitemOpen
  \bibfield  {author} {\bibinfo {author} {\bibfnamefont {A.}~\bibnamefont
  {Eriksson}}, \bibinfo {author} {\bibfnamefont {T.}~\bibnamefont {Carletti}},
  \bibinfo {author} {\bibfnamefont {R.}~\bibnamefont {Lambiotte}}, \bibinfo
  {author} {\bibfnamefont {A.}~\bibnamefont {Rojas}},\ and\ \bibinfo {author}
  {\bibfnamefont {M.}~\bibnamefont {Rosvall}},\ }\bibinfo {title} {Flow-based
  community detection in hypergraphs},\ in\ \href
  {https://doi.org/10.1007/978-3-030-91374-8_4} {\emph {\bibinfo {booktitle}
  {Higher-Order Systems}}},\ \bibinfo {editor} {edited by\ \bibinfo {editor}
  {\bibfnamefont {F.}~\bibnamefont {Battiston}}\ and\ \bibinfo {editor}
  {\bibfnamefont {G.}~\bibnamefont {Petri}}}\ (\bibinfo  {publisher} {Springer
  International Publishing},\ \bibinfo {address} {Cham},\ \bibinfo {year}
  {2022})\ pp.\ \bibinfo {pages} {141--161}\BibitemShut {NoStop}%
\bibitem [{\citenamefont {Mancastroppa}\ \emph {et~al.}(2023)\citenamefont
  {Mancastroppa}, \citenamefont {Iacopini}, \citenamefont {Petri},\ and\
  \citenamefont {Barrat}}]{MIPB23}%
  \BibitemOpen
  \bibfield  {author} {\bibinfo {author} {\bibfnamefont {M.}~\bibnamefont
  {Mancastroppa}}, \bibinfo {author} {\bibfnamefont {I.}~\bibnamefont
  {Iacopini}}, \bibinfo {author} {\bibfnamefont {G.}~\bibnamefont {Petri}},\
  and\ \bibinfo {author} {\bibfnamefont {A.}~\bibnamefont {Barrat}},\
  }\bibfield  {title} {\bibinfo {title} {Hyper-cores promote localization and
  efficient seeding in higher-order processes},\ }\href
  {https://doi.org/10.1038/s41467-023-41887-2} {\bibfield  {journal} {\bibinfo
  {journal} {Nature communications}\ }\textbf {\bibinfo {volume} {14}},\
  \bibinfo {pages} {6223} (\bibinfo {year} {2023})}\BibitemShut {NoStop}%
\bibitem [{\citenamefont {Mancastroppa}\ \emph {et~al.}(2024)\citenamefont
  {Mancastroppa}, \citenamefont {Iacopini}, \citenamefont {Petri},\ and\
  \citenamefont {Barrat}}]{MIPB24}%
  \BibitemOpen
  \bibfield  {author} {\bibinfo {author} {\bibfnamefont {M.}~\bibnamefont
  {Mancastroppa}}, \bibinfo {author} {\bibfnamefont {I.}~\bibnamefont
  {Iacopini}}, \bibinfo {author} {\bibfnamefont {G.}~\bibnamefont {Petri}},\
  and\ \bibinfo {author} {\bibfnamefont {A.}~\bibnamefont {Barrat}},\
  }\bibfield  {title} {\bibinfo {title} {The structural evolution of temporal
  hypergraphs through the lens of hyper-cores},\ }\bibfield  {journal}
  {\bibinfo  {journal} {EPJ Data Science}\ }\textbf {\bibinfo {volume} {13}},\
  \href {https://doi.org/10.1140/epjds/s13688-024-00490-1}
  {10.1140/epjds/s13688-024-00490-1} (\bibinfo {year} {2024})\BibitemShut
  {NoStop}%
\bibitem [{\citenamefont {Xu}\ \emph {et~al.}(2023)\citenamefont {Xu},
  \citenamefont {Zhang},\ and\ \citenamefont {Liu}}]{YiqunXu2023}%
  \BibitemOpen
  \bibfield  {author} {\bibinfo {author} {\bibfnamefont {Y.}~\bibnamefont
  {Xu}}, \bibinfo {author} {\bibfnamefont {F.}~\bibnamefont {Zhang}},\ and\
  \bibinfo {author} {\bibfnamefont {B.}~\bibnamefont {Liu}},\ }\bibfield
  {title} {\bibinfo {title} {The decomposition and maintenance of hypercores on
  edge-weighted hypergraphs},\ }\href {https://doi.org/10.3934/mfc.2023050}
  {\bibfield  {journal} {\bibinfo  {journal} {Mathematical Foundations of
  Computing}\ } (\bibinfo {year} {2023})}\BibitemShut {NoStop}%
\bibitem [{\citenamefont {Pretolani}(2013)}]{PRETOLANI2013226}%
  \BibitemOpen
  \bibfield  {author} {\bibinfo {author} {\bibfnamefont {D.}~\bibnamefont
  {Pretolani}},\ }\bibfield  {title} {\bibinfo {title} {Finding hypernetworks
  in directed hypergraphs},\ }\href
  {https://doi.org/10.1016/j.ejor.2013.04.020} {\bibfield  {journal} {\bibinfo
  {journal} {European Journal of Operational Research}\ }\textbf {\bibinfo
  {volume} {230}},\ \bibinfo {pages} {226} (\bibinfo {year}
  {2013})}\BibitemShut {NoStop}%
\bibitem [{\citenamefont {Volpentesta}(2008)}]{VOLPENTESTA2008390}%
  \BibitemOpen
  \bibfield  {author} {\bibinfo {author} {\bibfnamefont {A.~P.}\ \bibnamefont
  {Volpentesta}},\ }\bibfield  {title} {\bibinfo {title} {Hypernetworks in a
  directed hypergraph},\ }\href {https://doi.org/10.1016/j.ejor.2007.04.023}
  {\bibfield  {journal} {\bibinfo  {journal} {European Journal of Operational
  Research}\ }\textbf {\bibinfo {volume} {188}},\ \bibinfo {pages} {390}
  (\bibinfo {year} {2008})}\BibitemShut {NoStop}%
\bibitem [{\citenamefont {{Albert}}\ \emph {et~al.}(2014)\citenamefont
  {{Albert}}, \citenamefont {{DasGupta}},\ and\ \citenamefont
  {Mobasheri}}]{ADM14}%
  \BibitemOpen
  \bibfield  {author} {\bibinfo {author} {\bibfnamefont {R.}~\bibnamefont
  {{Albert}}}, \bibinfo {author} {\bibfnamefont {B.}~\bibnamefont
  {{DasGupta}}},\ and\ \bibinfo {author} {\bibfnamefont {N.}~\bibnamefont
  {Mobasheri}},\ }\bibfield  {title} {\bibinfo {title} {Topological
  implications of negative curvature for biological and social networks},\
  }\href {https://doi.org/10.1103/PhysRevE.89.032811} {\bibfield  {journal}
  {\bibinfo  {journal} {Physical Review E}\ }\textbf {\bibinfo {volume} {89}},\
  \bibinfo {pages} {032811} (\bibinfo {year} {2014})}\BibitemShut {NoStop}%
\bibitem [{\citenamefont {Burt}(1992)}]{Burt92}%
  \BibitemOpen
  \bibfield  {author} {\bibinfo {author} {\bibfnamefont {R.~S.}\ \bibnamefont
  {Burt}},\ }\href {http://www.jstor.org/stable/j.ctv1kz4h78} {\emph {\bibinfo
  {title} {Structural Holes: The Social Structure of Competition}}}\ (\bibinfo
  {publisher} {Harvard University Press},\ \bibinfo {address} {Cambridge, MA,
  USA},\ \bibinfo {year} {1992})\BibitemShut {NoStop}%
\bibitem [{\citenamefont {Mallows}(1972)}]{Ma72}%
  \BibitemOpen
  \bibfield  {author} {\bibinfo {author} {\bibfnamefont {C.~L.}\ \bibnamefont
  {Mallows}},\ }\bibfield  {title} {\bibinfo {title} {{A Note on Asymptotic
  Joint Normality}},\ }\href {https://doi.org/10.1214/aoms/1177692631}
  {\bibfield  {journal} {\bibinfo  {journal} {The Annals of Mathematical
  Statistics}\ }\textbf {\bibinfo {volume} {43}},\ \bibinfo {pages} {508}
  (\bibinfo {year} {1972})}\BibitemShut {NoStop}%
\bibitem [{\citenamefont {Rubner}\ \emph {et~al.}(1998)\citenamefont {Rubner},
  \citenamefont {Tomasi},\ and\ \citenamefont {Guibas}}]{Gui1}%
  \BibitemOpen
  \bibfield  {author} {\bibinfo {author} {\bibfnamefont {Y.}~\bibnamefont
  {Rubner}}, \bibinfo {author} {\bibfnamefont {C.}~\bibnamefont {Tomasi}},\
  and\ \bibinfo {author} {\bibfnamefont {L.~J.}\ \bibnamefont {Guibas}},\
  }\bibfield  {title} {\bibinfo {title} {A metric for distributions with
  applications to image databases},\ }in\ \href
  {https://doi.org/10.1109/ICCV.1998.710701} {\emph {\bibinfo {booktitle}
  {Sixth International Conference on Computer Vision (IEEE Cat.
  No.98CH36271)}}}\ (\bibinfo {year} {1998})\ pp.\ \bibinfo {pages}
  {59--66}\BibitemShut {NoStop}%
\bibitem [{\citenamefont {Rubner}\ \emph {et~al.}(2000)\citenamefont {Rubner},
  \citenamefont {Tomasi},\ and\ \citenamefont {Guibas}}]{Gui2}%
  \BibitemOpen
  \bibfield  {author} {\bibinfo {author} {\bibfnamefont {Y.}~\bibnamefont
  {Rubner}}, \bibinfo {author} {\bibfnamefont {C.}~\bibnamefont {Tomasi}},\
  and\ \bibinfo {author} {\bibfnamefont {L.~J.}\ \bibnamefont {Guibas}},\
  }\bibfield  {title} {\bibinfo {title} {The earth mover's distance as a metric
  for image retrieval},\ }\href {https://doi.org/10.1023/A:1026543900054}
  {\bibfield  {journal} {\bibinfo  {journal} {International Journal of Computer
  Vision}\ }\textbf {\bibinfo {volume} {40}},\ \bibinfo {pages} {99} (\bibinfo
  {year} {2000})}\BibitemShut {NoStop}%
\bibitem [{\citenamefont {Villani}(2003)}]{Vil03}%
  \BibitemOpen
  \bibfield  {author} {\bibinfo {author} {\bibfnamefont {C.}~\bibnamefont
  {Villani}},\ }\bibfield  {title} {\bibinfo {title} {Topics in optimal
  transportation},\ }in\ \href {https://doi.org/10.1090/gsm/058} {\emph
  {\bibinfo {booktitle} {Graduate Studies in Mathematics}}},\ Vol.~\bibinfo
  {volume} {58}\ (\bibinfo  {publisher} {American Mathematical Society},\
  \bibinfo {address} {Providence, RI, USA},\ \bibinfo {year} {2003})\ pp.\
  \bibinfo {pages} {197--219}\BibitemShut {NoStop}%
\bibitem [{\citenamefont {Azarhooshang}\ \emph {et~al.}(2020)\citenamefont
  {Azarhooshang}, \citenamefont {Sengupta},\ and\ \citenamefont
  {{DasGupta}}}]{ASD20}%
  \BibitemOpen
  \bibfield  {author} {\bibinfo {author} {\bibfnamefont {N.}~\bibnamefont
  {Azarhooshang}}, \bibinfo {author} {\bibfnamefont {P.}~\bibnamefont
  {Sengupta}},\ and\ \bibinfo {author} {\bibfnamefont {B.}~\bibnamefont
  {{DasGupta}}},\ }\bibfield  {title} {\bibinfo {title} {A review of and some
  results for ollivier-ricci network curvature},\ }\bibfield  {journal}
  {\bibinfo  {journal} {Mathematics}\ }\textbf {\bibinfo {volume} {8}},\ \href
  {https://doi.org/10.3390/math8091416} {10.3390/math8091416} (\bibinfo {year}
  {2020})\BibitemShut {NoStop}%
\bibitem [{\citenamefont {Girvan}\ and\ \citenamefont {Newman}(2002)}]{comm5}%
  \BibitemOpen
  \bibfield  {author} {\bibinfo {author} {\bibfnamefont {M.}~\bibnamefont
  {Girvan}}\ and\ \bibinfo {author} {\bibfnamefont {M.~E.~J.}\ \bibnamefont
  {Newman}},\ }\bibfield  {title} {\bibinfo {title} {Community structure in
  social and biological networks},\ }\href
  {https://doi.org/10.1073/pnas.122653799} {\bibfield  {journal} {\bibinfo
  {journal} {Proceedings of the National Academy of Sciences}\ }\textbf
  {\bibinfo {volume} {99}},\ \bibinfo {pages} {7821} (\bibinfo {year}
  {2002})},\ \Eprint
  {https://arxiv.org/abs/https://www.pnas.org/content/99/12/7821.full.pdf}
  {https://www.pnas.org/content/99/12/7821.full.pdf} \BibitemShut {NoStop}%
\bibitem [{\citenamefont {Koujaku}\ \emph {et~al.}(2016)\citenamefont
  {Koujaku}, \citenamefont {Takigawa}, \citenamefont {Kudo},\ and\
  \citenamefont {Imai}}]{KOUJAKU2016143}%
  \BibitemOpen
  \bibfield  {author} {\bibinfo {author} {\bibfnamefont {S.}~\bibnamefont
  {Koujaku}}, \bibinfo {author} {\bibfnamefont {I.}~\bibnamefont {Takigawa}},
  \bibinfo {author} {\bibfnamefont {M.}~\bibnamefont {Kudo}},\ and\ \bibinfo
  {author} {\bibfnamefont {H.}~\bibnamefont {Imai}},\ }\bibfield  {title}
  {\bibinfo {title} {Dense core model for cohesive subgraph discovery},\ }\href
  {https://doi.org/10.1016/j.socnet.2015.06.003} {\bibfield  {journal}
  {\bibinfo  {journal} {Social Networks}\ }\textbf {\bibinfo {volume} {44}},\
  \bibinfo {pages} {143} (\bibinfo {year} {2016})}\BibitemShut {NoStop}%
\bibitem [{\citenamefont {Bonchi}\ \emph {et~al.}(2021)\citenamefont {Bonchi},
  \citenamefont {Garc{\'{i}}a-Soriano}, \citenamefont {Miyauchi},\ and\
  \citenamefont {Tsourakakis}}]{BONCHI202134}%
  \BibitemOpen
  \bibfield  {author} {\bibinfo {author} {\bibfnamefont {F.}~\bibnamefont
  {Bonchi}}, \bibinfo {author} {\bibfnamefont {D.}~\bibnamefont
  {Garc{\'{i}}a-Soriano}}, \bibinfo {author} {\bibfnamefont {A.}~\bibnamefont
  {Miyauchi}},\ and\ \bibinfo {author} {\bibfnamefont {C.~E.}\ \bibnamefont
  {Tsourakakis}},\ }\bibfield  {title} {\bibinfo {title} {Finding densest
  k-connected subgraphs},\ }\href {https://doi.org/10.1016/j.dam.2021.08.032}
  {\bibfield  {journal} {\bibinfo  {journal} {Discrete Applied Mathematics}\
  }\textbf {\bibinfo {volume} {305}},\ \bibinfo {pages} {34} (\bibinfo {year}
  {2021})}\BibitemShut {NoStop}%
\bibitem [{\citenamefont {Boob}\ \emph {et~al.}(2020)\citenamefont {Boob},
  \citenamefont {Gao}, \citenamefont {Peng}, \citenamefont {Sawlani},
  \citenamefont {Tsourakakis}, \citenamefont {Wang},\ and\ \citenamefont
  {Wang}}]{10.1145-3366423-3380140}%
  \BibitemOpen
  \bibfield  {author} {\bibinfo {author} {\bibfnamefont {D.}~\bibnamefont
  {Boob}}, \bibinfo {author} {\bibfnamefont {Y.}~\bibnamefont {Gao}}, \bibinfo
  {author} {\bibfnamefont {R.}~\bibnamefont {Peng}}, \bibinfo {author}
  {\bibfnamefont {S.}~\bibnamefont {Sawlani}}, \bibinfo {author} {\bibfnamefont
  {C.}~\bibnamefont {Tsourakakis}}, \bibinfo {author} {\bibfnamefont
  {D.}~\bibnamefont {Wang}},\ and\ \bibinfo {author} {\bibfnamefont
  {J.}~\bibnamefont {Wang}},\ }\bibfield  {title} {\bibinfo {title} {Flowless:
  Extracting densest subgraphs without flow computations},\ }in\ \href
  {https://doi.org/10.1145/3366423.3380140} {\emph {\bibinfo {booktitle}
  {Proceedings of The Web Conference 2020}}},\ \bibinfo {series and number}
  {WWW '20}\ (\bibinfo  {publisher} {Association for Computing Machinery},\
  \bibinfo {address} {New York, NY, USA},\ \bibinfo {year} {2020})\ pp.\
  \bibinfo {pages} {573--583}\BibitemShut {NoStop}%
\bibitem [{\citenamefont {Chekuri}\ \emph {et~al.}()\citenamefont {Chekuri},
  \citenamefont {Quanrud},\ and\ \citenamefont
  {Torres}}]{doi:10.1137/1.9781611977073.64}%
  \BibitemOpen
  \bibfield  {author} {\bibinfo {author} {\bibfnamefont {C.}~\bibnamefont
  {Chekuri}}, \bibinfo {author} {\bibfnamefont {K.}~\bibnamefont {Quanrud}},\
  and\ \bibinfo {author} {\bibfnamefont {M.~R.}\ \bibnamefont {Torres}},\
  }\bibinfo {title} {Densest subgraph: Supermodularity, iterative peeling, and
  flow},\ in\ \href {https://doi.org/10.1137/1.9781611977073.64} {\emph
  {\bibinfo {booktitle} {Proceedings of the 2022 Annual ACM-SIAM Symposium on
  Discrete Algorithms (SODA)}}},\ pp.\ \bibinfo {pages} {1531--1555},\ \Eprint
  {https://arxiv.org/abs/https://epubs.siam.org/doi/pdf/10.1137/1.9781611977073.64}
  {https://epubs.siam.org/doi/pdf/10.1137/1.9781611977073.64} \BibitemShut
  {NoStop}%
\bibitem [{\citenamefont {Fang}\ \emph {et~al.}(2022)\citenamefont {Fang},
  \citenamefont {Luo},\ and\ \citenamefont {Ma}}]{10.14778/3554821.3554895}%
  \BibitemOpen
  \bibfield  {author} {\bibinfo {author} {\bibfnamefont {Y.}~\bibnamefont
  {Fang}}, \bibinfo {author} {\bibfnamefont {W.}~\bibnamefont {Luo}},\ and\
  \bibinfo {author} {\bibfnamefont {C.}~\bibnamefont {Ma}},\ }\bibfield
  {title} {\bibinfo {title} {Densest subgraph discovery on large graphs:
  applications, challenges, and techniques},\ }\href
  {https://doi.org/10.14778/3554821.3554895} {\bibfield  {journal} {\bibinfo
  {journal} {Proc. VLDB Endow.}\ }\textbf {\bibinfo {volume} {15}},\ \bibinfo
  {pages} {3766} (\bibinfo {year} {2022})}\BibitemShut {NoStop}%
\bibitem [{\citenamefont {Liu}\ \emph {et~al.}(2022)\citenamefont {Liu},
  \citenamefont {Ge},\ and\ \citenamefont {Wu}}]{9201321}%
  \BibitemOpen
  \bibfield  {author} {\bibinfo {author} {\bibfnamefont {X.}~\bibnamefont
  {Liu}}, \bibinfo {author} {\bibfnamefont {T.}~\bibnamefont {Ge}},\ and\
  \bibinfo {author} {\bibfnamefont {Y.}~\bibnamefont {Wu}},\ }\bibfield
  {title} {\bibinfo {title} {A stochastic approach to finding densest temporal
  subgraphs in dynamic graphs},\ }\href
  {https://doi.org/10.1109/TKDE.2020.3025463} {\bibfield  {journal} {\bibinfo
  {journal} {IEEE Transactions on Knowledge and Data Engineering}\ }\textbf
  {\bibinfo {volume} {34}},\ \bibinfo {pages} {3082} (\bibinfo {year}
  {2022})}\BibitemShut {NoStop}%
\bibitem [{\citenamefont {Luo}\ \emph {et~al.}(2023)\citenamefont {Luo},
  \citenamefont {Tang}, \citenamefont {Fang}, \citenamefont {Ma},\ and\
  \citenamefont {Zhou}}]{10184843}%
  \BibitemOpen
  \bibfield  {author} {\bibinfo {author} {\bibfnamefont {W.}~\bibnamefont
  {Luo}}, \bibinfo {author} {\bibfnamefont {Z.}~\bibnamefont {Tang}}, \bibinfo
  {author} {\bibfnamefont {Y.}~\bibnamefont {Fang}}, \bibinfo {author}
  {\bibfnamefont {C.}~\bibnamefont {Ma}},\ and\ \bibinfo {author}
  {\bibfnamefont {X.}~\bibnamefont {Zhou}},\ }\bibfield  {title} {\bibinfo
  {title} {Scalable algorithms for densest subgraph discovery},\ }in\ \href
  {https://doi.org/10.1109/ICDE55515.2023.00029} {\emph {\bibinfo {booktitle}
  {2023 IEEE 39th International Conference on Data Engineering (ICDE)}}}\
  (\bibinfo {year} {2023})\ pp.\ \bibinfo {pages} {287--300}\BibitemShut
  {NoStop}%
\bibitem [{\citenamefont {Ma}\ \emph {et~al.}(2022)\citenamefont {Ma},
  \citenamefont {Cheng}, \citenamefont {Lakshmanan},\ and\ \citenamefont
  {Han}}]{10.14778/3551793.3551826}%
  \BibitemOpen
  \bibfield  {author} {\bibinfo {author} {\bibfnamefont {C.}~\bibnamefont
  {Ma}}, \bibinfo {author} {\bibfnamefont {R.}~\bibnamefont {Cheng}}, \bibinfo
  {author} {\bibfnamefont {L.~V.~S.}\ \bibnamefont {Lakshmanan}},\ and\
  \bibinfo {author} {\bibfnamefont {X.}~\bibnamefont {Han}},\ }\bibfield
  {title} {\bibinfo {title} {Finding locally densest subgraphs: a convex
  programming approach},\ }\href {https://doi.org/10.14778/3551793.3551826}
  {\bibfield  {journal} {\bibinfo  {journal} {Proc. VLDB Endow.}\ }\textbf
  {\bibinfo {volume} {15}},\ \bibinfo {pages} {2719} (\bibinfo {year}
  {2022})}\BibitemShut {NoStop}%
\bibitem [{\citenamefont {Ma}\ \emph {et~al.}(2021)\citenamefont {Ma},
  \citenamefont {Fang}, \citenamefont {Cheng}, \citenamefont {Lakshmanan},
  \citenamefont {Zhang},\ and\ \citenamefont {Lin}}]{10.1145/3483940}%
  \BibitemOpen
  \bibfield  {author} {\bibinfo {author} {\bibfnamefont {C.}~\bibnamefont
  {Ma}}, \bibinfo {author} {\bibfnamefont {Y.}~\bibnamefont {Fang}}, \bibinfo
  {author} {\bibfnamefont {R.}~\bibnamefont {Cheng}}, \bibinfo {author}
  {\bibfnamefont {L.~V.~S.}\ \bibnamefont {Lakshmanan}}, \bibinfo {author}
  {\bibfnamefont {W.}~\bibnamefont {Zhang}},\ and\ \bibinfo {author}
  {\bibfnamefont {X.}~\bibnamefont {Lin}},\ }\bibfield  {title} {\bibinfo
  {title} {On directed densest subgraph discovery},\ }\bibfield  {journal}
  {\bibinfo  {journal} {ACM Transaction Database Systems}\ }\textbf {\bibinfo
  {volume} {46}},\ \href {https://doi.org/10.1145/3483940} {10.1145/3483940}
  (\bibinfo {year} {2021})\BibitemShut {NoStop}%
\bibitem [{\citenamefont {Feige}\ \emph {et~al.}(2001)\citenamefont {Feige},
  \citenamefont {Peleg},\ and\ \citenamefont {Kortsarz}}]{FPK01}%
  \BibitemOpen
  \bibfield  {author} {\bibinfo {author} {\bibfnamefont {U.}~\bibnamefont
  {Feige}}, \bibinfo {author} {\bibfnamefont {D.}~\bibnamefont {Peleg}},\ and\
  \bibinfo {author} {\bibfnamefont {G.}~\bibnamefont {Kortsarz}},\ }\bibfield
  {title} {\bibinfo {title} {The dense k-subgraph problem},\ }\href
  {https://doi.org/10.1007/s004530010050} {\bibfield  {journal} {\bibinfo
  {journal} {Algorithmica}\ }\textbf {\bibinfo {volume} {29}},\ \bibinfo
  {pages} {410} (\bibinfo {year} {2001})}\BibitemShut {NoStop}%
\bibitem [{\citenamefont {Bhaskara}\ \emph {et~al.}(2010)\citenamefont
  {Bhaskara}, \citenamefont {Charikar}, \citenamefont {Chlamtac}, \citenamefont
  {Feige},\ and\ \citenamefont {Vijayaraghavan}}]{10.1145/1806689.1806719}%
  \BibitemOpen
  \bibfield  {author} {\bibinfo {author} {\bibfnamefont {A.}~\bibnamefont
  {Bhaskara}}, \bibinfo {author} {\bibfnamefont {M.}~\bibnamefont {Charikar}},
  \bibinfo {author} {\bibfnamefont {E.}~\bibnamefont {Chlamtac}}, \bibinfo
  {author} {\bibfnamefont {U.}~\bibnamefont {Feige}},\ and\ \bibinfo {author}
  {\bibfnamefont {A.}~\bibnamefont {Vijayaraghavan}},\ }\bibfield  {title}
  {\bibinfo {title} {Detecting high log-densities: an {$O(n^{1/4})$}
  approximation for densest k-subgraph},\ }in\ \href
  {https://doi.org/10.1145/1806689.1806719} {\emph {\bibinfo {booktitle}
  {Proceedings of the Forty-Second ACM Symposium on Theory of Computing}}},\
  \bibinfo {series and number} {STOC '10}\ (\bibinfo  {publisher} {Association
  for Computing Machinery},\ \bibinfo {address} {New York, NY, USA},\ \bibinfo
  {year} {2010})\ pp.\ \bibinfo {pages} {201--210}\BibitemShut {NoStop}%
\bibitem [{\citenamefont {Bera}\ \emph {et~al.}(2022)\citenamefont {Bera},
  \citenamefont {Bhattacharya}, \citenamefont {Choudhari},\ and\ \citenamefont
  {Ghosh}}]{10.1145/3485447.3512158}%
  \BibitemOpen
  \bibfield  {author} {\bibinfo {author} {\bibfnamefont {S.~K.}\ \bibnamefont
  {Bera}}, \bibinfo {author} {\bibfnamefont {S.}~\bibnamefont {Bhattacharya}},
  \bibinfo {author} {\bibfnamefont {J.}~\bibnamefont {Choudhari}},\ and\
  \bibinfo {author} {\bibfnamefont {P.}~\bibnamefont {Ghosh}},\ }\bibfield
  {title} {\bibinfo {title} {A new dynamic algorithm for densest
  subhypergraphs},\ }in\ \href {https://doi.org/10.1145/3485447.3512158} {\emph
  {\bibinfo {booktitle} {Proceedings of the ACM Web Conference 2022}}},\
  \bibinfo {series and number} {WWW '22}\ (\bibinfo  {publisher} {Association
  for Computing Machinery},\ \bibinfo {address} {New York, NY, USA},\ \bibinfo
  {year} {2022})\ pp.\ \bibinfo {pages} {1093--1103}\BibitemShut {NoStop}%
\bibitem [{\citenamefont {Newman}(2003)}]{comm1}%
  \BibitemOpen
  \bibfield  {author} {\bibinfo {author} {\bibfnamefont {M.~E.~J.}\
  \bibnamefont {Newman}},\ }\bibfield  {title} {\bibinfo {title} {The structure
  and function of complex networks},\ }\href
  {https://doi.org/10.1137/S003614450342480} {\bibfield  {journal} {\bibinfo
  {journal} {SIAM Review}\ }\textbf {\bibinfo {volume} {45}},\ \bibinfo {pages}
  {167} (\bibinfo {year} {2003})}\BibitemShut {NoStop}%
\bibitem [{\citenamefont {Newman}(2004)}]{comm2}%
  \BibitemOpen
  \bibfield  {author} {\bibinfo {author} {\bibfnamefont {M.~E.~J.}\
  \bibnamefont {Newman}},\ }\bibfield  {title} {\bibinfo {title} {Detecting
  community structure in networks},\ }\href
  {https://doi.org/10.1140/epjb/e2004-00124-y} {\bibfield  {journal} {\bibinfo
  {journal} {European Physics Journal B}\ }\textbf {\bibinfo {volume} {38}},\
  \bibinfo {pages} {321} (\bibinfo {year} {2004})}\BibitemShut {NoStop}%
\bibitem [{\citenamefont {Kannan}\ \emph {et~al.}(1999)\citenamefont {Kannan},
  \citenamefont {Tetali},\ and\ \citenamefont {Vempala}}]{KTV99}%
  \BibitemOpen
  \bibfield  {author} {\bibinfo {author} {\bibfnamefont {R.}~\bibnamefont
  {Kannan}}, \bibinfo {author} {\bibfnamefont {P.}~\bibnamefont {Tetali}},\
  and\ \bibinfo {author} {\bibfnamefont {S.}~\bibnamefont {Vempala}},\
  }\bibfield  {title} {\bibinfo {title} {Simple markov-chain algorithms for
  generating bipartite graphs and tournaments},\ }\href
  {https://doi.org/10.1002/(SICI)1098-2418(199907)14:4<293::AID-RSA1>3.0.CO;2-G}
  {\bibfield  {journal} {\bibinfo  {journal} {Random Structures \& Algorithms}\
  }\textbf {\bibinfo {volume} {14}},\ \bibinfo {pages} {293} (\bibinfo {year}
  {1999})}\BibitemShut {NoStop}%
\bibitem [{\citenamefont {King}\ \emph {et~al.}(2016)\citenamefont {King},
  \citenamefont {Lu}, \citenamefont {Dr{\"a}ger}, \citenamefont {Miller},
  \citenamefont {Federowicz}, \citenamefont {Lerman}, \citenamefont {Ebrahim},
  \citenamefont {Palsson},\ and\ \citenamefont {Lewis}}]{king2016bigg}%
  \BibitemOpen
  \bibfield  {author} {\bibinfo {author} {\bibfnamefont {Z.~A.}\ \bibnamefont
  {King}}, \bibinfo {author} {\bibfnamefont {J.}~\bibnamefont {Lu}}, \bibinfo
  {author} {\bibfnamefont {A.}~\bibnamefont {Dr{\"a}ger}}, \bibinfo {author}
  {\bibfnamefont {P.}~\bibnamefont {Miller}}, \bibinfo {author} {\bibfnamefont
  {S.}~\bibnamefont {Federowicz}}, \bibinfo {author} {\bibfnamefont {J.~A.}\
  \bibnamefont {Lerman}}, \bibinfo {author} {\bibfnamefont {A.}~\bibnamefont
  {Ebrahim}}, \bibinfo {author} {\bibfnamefont {B.~O.}\ \bibnamefont
  {Palsson}},\ and\ \bibinfo {author} {\bibfnamefont {N.~E.}\ \bibnamefont
  {Lewis}},\ }\bibfield  {title} {\bibinfo {title} {Bigg models: A platform for
  integrating, standardizing and sharing genome-scale models},\ }\href
  {https://doi.org/10.1093/nar/gkv1049} {\bibfield  {journal} {\bibinfo
  {journal} {Nucleic acids research}\ }\textbf {\bibinfo {volume} {44}},\
  \bibinfo {pages} {D515} (\bibinfo {year} {2016})}\BibitemShut {NoStop}%
\bibitem [{\citenamefont {Molontay}\ and\ \citenamefont
  {Nagy}(2021)}]{molontay2021twenty}%
  \BibitemOpen
  \bibfield  {author} {\bibinfo {author} {\bibfnamefont {R.}~\bibnamefont
  {Molontay}}\ and\ \bibinfo {author} {\bibfnamefont {M.}~\bibnamefont
  {Nagy}},\ }\bibinfo {title} {Twenty years of network science: A bibliographic
  and co-authorship network analysis},\ in\ \href
  {https://doi.org/10.1007/978-3-030-67044-3_1} {\emph {\bibinfo {booktitle}
  {Big Data and Social Media Analytics: Trending Applications}}},\ \bibinfo
  {editor} {edited by\ \bibinfo {editor} {\bibfnamefont {M.}~\bibnamefont
  {{\c{C}}ak{\i}rta{\c{s}}}}\ and\ \bibinfo {editor} {\bibfnamefont {M.~K.}\
  \bibnamefont {Ozdemir}}}\ (\bibinfo  {publisher} {Springer International
  Publishing},\ \bibinfo {address} {Cham},\ \bibinfo {year} {2021})\ pp.\
  \bibinfo {pages} {1--24}\BibitemShut {NoStop}%
\bibitem [{\citenamefont {Goldwasser}\ and\ \citenamefont
  {Micali}(1984)}]{Goldwasser1984ProbabilisticE}%
  \BibitemOpen
  \bibfield  {author} {\bibinfo {author} {\bibfnamefont {S.}~\bibnamefont
  {Goldwasser}}\ and\ \bibinfo {author} {\bibfnamefont {S.}~\bibnamefont
  {Micali}},\ }\bibfield  {title} {\bibinfo {title} {Probabilistic
  encryption},\ }\href {https://doi.org/10.1016/0022-0000(84)90070-9}
  {\bibfield  {journal} {\bibinfo  {journal} {Journal of Computer and System
  Sciences}\ }\textbf {\bibinfo {volume} {28}},\ \bibinfo {pages} {270}
  (\bibinfo {year} {1984})}\BibitemShut {NoStop}%
\bibitem [{\citenamefont {Karp}(1972)}]{karp2010reducibility}%
  \BibitemOpen
  \bibfield  {author} {\bibinfo {author} {\bibfnamefont {R.~M.}\ \bibnamefont
  {Karp}},\ }\bibinfo {title} {Reducibility among combinatorial problems},\ in\
  \href {https://doi.org/10.1007/978-1-4684-2001-2_9} {\emph {\bibinfo
  {booktitle} {Complexity of Computer Computations: Proceedings of a symposium
  on the Complexity of Computer Computations}}},\ \bibinfo {editor} {edited by\
  \bibinfo {editor} {\bibfnamefont {R.~E.}\ \bibnamefont {Miller}}, \bibinfo
  {editor} {\bibfnamefont {J.~W.}\ \bibnamefont {Thatcher}},\ and\ \bibinfo
  {editor} {\bibfnamefont {J.~D.}\ \bibnamefont {Bohlinger}}}\ (\bibinfo
  {publisher} {Springer US},\ \bibinfo {address} {Boston, MA},\ \bibinfo {year}
  {1972})\ pp.\ \bibinfo {pages} {85--103}\BibitemShut {NoStop}%
\bibitem [{\citenamefont {Nisan}\ and\ \citenamefont
  {Wigderson}(1994)}]{nisan1994hardness}%
  \BibitemOpen
  \bibfield  {author} {\bibinfo {author} {\bibfnamefont {N.}~\bibnamefont
  {Nisan}}\ and\ \bibinfo {author} {\bibfnamefont {A.}~\bibnamefont
  {Wigderson}},\ }\bibfield  {title} {\bibinfo {title} {Hardness vs
  randomness},\ }\href {https://doi.org/10.1016/S0022-0000(05)80043-1}
  {\bibfield  {journal} {\bibinfo  {journal} {Journal of Computer and System
  Sciences}\ }\textbf {\bibinfo {volume} {49}},\ \bibinfo {pages} {149}
  (\bibinfo {year} {1994})}\BibitemShut {NoStop}%
\bibitem [{\citenamefont {{Barab\'{a}si}}\ and\ \citenamefont
  {Albert}(1999)}]{BA99}%
  \BibitemOpen
  \bibfield  {author} {\bibinfo {author} {\bibfnamefont {A.-L.}\ \bibnamefont
  {{Barab\'{a}si}}}\ and\ \bibinfo {author} {\bibfnamefont {R.}~\bibnamefont
  {Albert}},\ }\bibfield  {title} {\bibinfo {title} {Emergence of scaling in
  random networks},\ }\href {https://doi.org/10.1126/science.286.5439.509}
  {\bibfield  {journal} {\bibinfo  {journal} {Science}\ }\textbf {\bibinfo
  {volume} {286}},\ \bibinfo {pages} {509} (\bibinfo {year}
  {1999})}\BibitemShut {NoStop}%
\bibitem [{\citenamefont {Watts}\ and\ \citenamefont
  {Strogatz}(1998)}]{WS:1998}%
  \BibitemOpen
  \bibfield  {author} {\bibinfo {author} {\bibfnamefont {D.}~\bibnamefont
  {Watts}}\ and\ \bibinfo {author} {\bibfnamefont {S.}~\bibnamefont
  {Strogatz}},\ }\bibfield  {title} {\bibinfo {title} {Collective dynamics of
  `small-world' networks},\ }\href {https://doi.org/10.1038/30918} {\bibfield
  {journal} {\bibinfo  {journal} {Nature}\ }\textbf {\bibinfo {volume} {393}},\
  \bibinfo {pages} {440} (\bibinfo {year} {1998})}\BibitemShut {NoStop}%
\bibitem [{Note1()}]{Note1}%
  \BibitemOpen
  \bibinfo {note} {{\protect \emph {If the reaction times are known accurately
  then they can be used as the weights of the corresponding
  hyperedges}}.}\BibitemShut {Stop}%
\bibitem [{\citenamefont {DasGupta}\ \emph {et~al.}(2023)\citenamefont
  {DasGupta}, \citenamefont {Grigorescu},\ and\ \citenamefont
  {Mukherjee}}]{DGM23}%
  \BibitemOpen
  \bibfield  {author} {\bibinfo {author} {\bibfnamefont {B.}~\bibnamefont
  {DasGupta}}, \bibinfo {author} {\bibfnamefont {E.}~\bibnamefont
  {Grigorescu}},\ and\ \bibinfo {author} {\bibfnamefont {T.}~\bibnamefont
  {Mukherjee}},\ }\bibfield  {title} {\bibinfo {title} {On computing
  discretized ricci curvatures of graphs: Local algorithms and (localized)
  fine-grained reductions},\ }\href {https://doi.org/10.1016/j.tcs.2023.114127}
  {\bibfield  {journal} {\bibinfo  {journal} {Theoretical Computer Science}\
  }\textbf {\bibinfo {volume} {975}},\ \bibinfo {pages} {114127} (\bibinfo
  {year} {2023})}\BibitemShut {NoStop}%
\bibitem [{\citenamefont {Vazirani}(2010)}]{V01}%
  \BibitemOpen
  \bibfield  {author} {\bibinfo {author} {\bibfnamefont {V.~V.}\ \bibnamefont
  {Vazirani}},\ }\href {https://doi.org/0.1007/978-3-662-04565-7} {\emph
  {\bibinfo {title} {Approximation Algorithms}}},\ \bibinfo {edition} {1st}\
  ed.\ (\bibinfo  {publisher} {Springer, Berlin, Heidelberg},\ \bibinfo {year}
  {2010})\BibitemShut {NoStop}%
\bibitem [{\citenamefont {Cormen}\ \emph {et~al.}(2009)\citenamefont {Cormen},
  \citenamefont {Leiserson}, \citenamefont {Rivest},\ and\ \citenamefont
  {Stein}}]{CLRS09}%
  \BibitemOpen
  \bibfield  {author} {\bibinfo {author} {\bibfnamefont {T.~H.}\ \bibnamefont
  {Cormen}}, \bibinfo {author} {\bibfnamefont {C.~E.}\ \bibnamefont
  {Leiserson}}, \bibinfo {author} {\bibfnamefont {R.~L.}\ \bibnamefont
  {Rivest}},\ and\ \bibinfo {author} {\bibfnamefont {C.}~\bibnamefont
  {Stein}},\ }\href@noop {} {\emph {\bibinfo {title} {Introduction to
  Algorithms}}},\ \bibinfo {edition} {3rd}\ ed.\ (\bibinfo  {publisher} {MIT
  Press and McGraw-Hill},\ \bibinfo {year} {2009})\BibitemShut {NoStop}%
\bibitem [{\citenamefont {Seidel}(1995)}]{S95}%
  \BibitemOpen
  \bibfield  {author} {\bibinfo {author} {\bibfnamefont {R.}~\bibnamefont
  {Seidel}},\ }\bibfield  {title} {\bibinfo {title} {On the
  all-pairs-shortest-path problem in unweighted undirected graphs},\ }\href
  {https://doi.org/10.1006/jcss.1995.1078} {\bibfield  {journal} {\bibinfo
  {journal} {Journal of Computer and System Sciences}\ }\textbf {\bibinfo
  {volume} {51}},\ \bibinfo {pages} {400} (\bibinfo {year} {1995})}\BibitemShut
  {NoStop}%
\bibitem [{\citenamefont {Zwick}(2002)}]{Zw02}%
  \BibitemOpen
  \bibfield  {author} {\bibinfo {author} {\bibfnamefont {U.}~\bibnamefont
  {Zwick}},\ }\bibfield  {title} {\bibinfo {title} {All pairs shortest paths
  using bridging sets and rectangular matrix multiplication},\ }\href
  {https://doi.org/10.1145/567112.567114} {\bibfield  {journal} {\bibinfo
  {journal} {Journal of the ACM}\ }\textbf {\bibinfo {volume} {49}},\ \bibinfo
  {pages} {289} (\bibinfo {year} {2002})}\BibitemShut {NoStop}%
\bibitem [{\citenamefont {Williams}\ and\ \citenamefont
  {Williams}(2018)}]{WW18}%
  \BibitemOpen
  \bibfield  {author} {\bibinfo {author} {\bibfnamefont {V.~V.}\ \bibnamefont
  {Williams}}\ and\ \bibinfo {author} {\bibfnamefont {R.~R.}\ \bibnamefont
  {Williams}},\ }\bibfield  {title} {\bibinfo {title} {Subcubic equivalences
  between path, matrix, and triangle problems},\ }\href
  {https://doi.org/10.1145/3186893} {\bibfield  {journal} {\bibinfo  {journal}
  {Journal of the ACM}\ }\textbf {\bibinfo {volume} {65}},\ \bibinfo {pages}
  {1} (\bibinfo {year} {2018})}\BibitemShut {NoStop}%
\bibitem [{\citenamefont {Dor}\ \emph {et~al.}(2000)\citenamefont {Dor},
  \citenamefont {Halperin},\ and\ \citenamefont {Zwick}}]{Dor00}%
  \BibitemOpen
  \bibfield  {author} {\bibinfo {author} {\bibfnamefont {D.}~\bibnamefont
  {Dor}}, \bibinfo {author} {\bibfnamefont {S.}~\bibnamefont {Halperin}},\ and\
  \bibinfo {author} {\bibfnamefont {U.}~\bibnamefont {Zwick}},\ }\bibfield
  {title} {\bibinfo {title} {All-pairs almost shortest paths},\ }\href
  {https://doi.org/10.1137/S0097539797327908} {\bibfield  {journal} {\bibinfo
  {journal} {SIAM Journal on Computing}\ }\textbf {\bibinfo {volume} {29}},\
  \bibinfo {pages} {1740} (\bibinfo {year} {2000})}\BibitemShut {NoStop}%
\bibitem [{\citenamefont {Quanrud}(2018)}]{Q18}%
  \BibitemOpen
  \bibfield  {author} {\bibinfo {author} {\bibfnamefont {K.}~\bibnamefont
  {Quanrud}},\ }\bibfield  {title} {\bibinfo {title} {{Approximating Optimal
  Transport With Linear Programs}},\ }in\ \href
  {https://doi.org/10.4230/OASIcs.SOSA.2019.6} {\emph {\bibinfo {booktitle}
  {2nd Symposium on Simplicity in Algorithms (SOSA 2019)}}},\ \bibinfo {series}
  {OpenAccess Series in Informatics (OASIcs)}, Vol.~\bibinfo {volume} {69},\
  \bibinfo {editor} {edited by\ \bibinfo {editor} {\bibfnamefont {J.~T.}\
  \bibnamefont {Fineman}}\ and\ \bibinfo {editor} {\bibfnamefont
  {M.}~\bibnamefont {Mitzenmacher}}}\ (\bibinfo  {publisher} {Schloss
  Dagstuhl--Leibniz-Zentrum fuer Informatik},\ \bibinfo {address} {Dagstuhl,
  Germany},\ \bibinfo {year} {2018})\ pp.\ \bibinfo {pages}
  {6:1---6:9}\BibitemShut {NoStop}%
\bibitem [{\citenamefont {Dvurechensky}\ \emph {et~al.}(2018)\citenamefont
  {Dvurechensky}, \citenamefont {Gasnikov},\ and\ \citenamefont
  {Kroshnin}}]{DGK18}%
  \BibitemOpen
  \bibfield  {author} {\bibinfo {author} {\bibfnamefont {P.}~\bibnamefont
  {Dvurechensky}}, \bibinfo {author} {\bibfnamefont {A.}~\bibnamefont
  {Gasnikov}},\ and\ \bibinfo {author} {\bibfnamefont {A.}~\bibnamefont
  {Kroshnin}},\ }\bibfield  {title} {\bibinfo {title} {Computational optimal
  transport: Complexity by accelerated gradient descent is better than by
  sinkhorn's algorithm},\ }in\ \href
  {https://proceedings.mlr.press/v80/dvurechensky18a.html} {\emph {\bibinfo
  {booktitle} {Proceedings of the 35th International Conference on Machine
  Learning}}},\ \bibinfo {series} {Proceedings of Machine Learning Research},
  Vol.~\bibinfo {volume} {80},\ \bibinfo {editor} {edited by\ \bibinfo {editor}
  {\bibfnamefont {J.}~\bibnamefont {Dy}}\ and\ \bibinfo {editor} {\bibfnamefont
  {A.}~\bibnamefont {Kraus}}}\ (\bibinfo  {publisher} {PMLR},\ \bibinfo {year}
  {2018})\ pp.\ \bibinfo {pages} {1367---1376}\BibitemShut {NoStop}%
\end{thebibliography}%

\end{document}